\documentclass[11pt]{article}
\usepackage{amsmath,amsthm,amssymb,amsfonts}
\usepackage[english]{babel}
\usepackage[utf8]{inputenc}
\usepackage[T1]{fontenc}
\usepackage{color}
\usepackage{hyperref}
\usepackage{graphicx}
\usepackage{comment}
\usepackage{wrapfig}
\usepackage{subcaption}
\usepackage{enumerate}
\usepackage{authblk}
\usepackage{geometry}
\geometry{hmargin=2.5cm,vmargin=2.5cm}
\usepackage[linesnumbered,ruled]{algorithm2e}
\usepackage{pdfpages}
\usepackage{tabularx}
\usepackage{epsfig}

\usepackage{authblk}
\usepackage{thm-restate}

%\title{Greedy}

\usepackage[font=small]{caption} % can be commented if necessary % try normalsize

\renewcommand{\ge}{\geqslant}
\renewcommand{\le}{\leqslant}
\newcommand{\vep}{\varepsilon}

\newcommand{\ol}{\overline}
\newcommand{\tf}{\textsf}

\def\ie{\emph{i.e. }}
\def\resp{\emph{resp. }}

\def\calB{\mathcal{B}}

\def\calE{\mathcal{E}}

\def\calO{\mathcal{O}}
\def\calG{\mathcal{G}}

\newcommand {\ignore} [1] {}

\newcommand\pr[1]{\textsc{#1}}
\newcommand\algo[1]{\textbf{#1}}

\def\Greedystar{\algo{Greedy$^{\bigstar}$} }
\def\Greedy{\algo{Greedy} }

\newtheoremstyle{mytheoremstyle} % name
    {\topsep}                    % Space above
   {\topsep}                    % Space below
   {\itshape}                   % Body font
    {}                           % Indent amount
   {\scshape}                   % Theorem head font
    {.}                          % Punctuation after theorem head
    {.5em}                       % Space after theorem head
    {}  % Theorem head spec (can be left empty, meaning ‘normal’)

\theoremstyle{mytheoremstyle}

\newtheorem{claim}{Claim}
\newtheorem{theo}[claim]{Theorem}
\newtheorem{lem}[claim]{Lemma}

\newtheorem{cor}[claim]{Corollary}

\newtheorem{df}[claim]{Definition}
\newtheorem{obs}[claim]{Observation}

\newcommand\piotr[1]{\textcolor{green}{\textrm{[PIOTR : #1]}}}
\newcommand\mathieu[1]{\textcolor{red}{\textrm{[MATHIEU : #1]}}}
\newcommand\nan[1]{\textcolor{blue}{\textrm{[NAN : #1]}}}
\newcommand\nann[1]{\textcolor{blue}{\textrm{[NAN-linear : #1]}}}

\title{Ultimate greedy approximation of independent sets in subcubic graphs}

%Approximation techniques for greedy independent sets in bounded-degree graphs}

\author[1]{Piotr Krysta}
\author[2]{Mathieu Mari}
\author[1]{Nan Zhi}

\affil[1]{University of Liverpool, Liverpool, UK. {\tt pkrysta@liverpool.ac.uk, n.zhi@liverpool.ac.uk}.}
\affil[2]{École Normale Supérieure, Université PSL, Paris, France. {\tt mathieu.mari@ens.fr}.}
%\affil[3]{University of Liverpool, Liverpool, UK. Email: {\tt n.zhi@liverpool.ac.uk}.}

%\author{
%Piotr Krysta\thanks{University of Liverpool, Liverpool, UK. Email: {\tt pkrysta@liverpool.ac.uk}.}\and Mathieu Mari\thanks{École Normale Supérieure, Paris, France. Email: %{\tt mathieu.mari@ens.fr}.}\and Nan Zhi\thanks{University of Liverpool, Liverpool, UK. Email: {\tt n.zhi@liverpool.ac.uk}.}
%}

\date{9 July 2019}

\begin{document}

\maketitle
\setcounter{page}{0}
\thispagestyle{empty}

\begin{abstract}
We study the approximability of the maximum size independent set (MIS) problem in bounded degree graphs.  This is one of the most classic and widely studied NP-hard optimization problems. It is known for its inherent hardness of approximation.

We focus on the well known minimum degree greedy algorithm for this problem. This algorithm iteratively chooses a minimum degree vertex in the graph, adds it to the solution and removes its neighbors, until the remaining graph is empty. The approximation ratios of this algorithm have been very widely studied, where it is augmented with an advice that tells the greedy which minimum degree vertex to choose if it is not unique.

Our main contribution is a new mathematical theory for the design of such greedy algorithms with efficiently computable advice and for the analysis of their approximation ratios. With this new theory we obtain the ultimate approximation ratio of 5/4 for greedy on graphs with maximum degree 3, which completely solves the open problem from the paper by Halld{\'o}rsson and Yoshihara (1995). Our algorithm is the fastest currently known algorithm with this approximation ratio on such graphs. We also obtain a simple and short proof of the (D+2)/3-approximation ratio of any greedy on graphs with maximum degree D, the result proved previously by Halld{\'o}rsson and Radhakrishnan (1994). We almost match this ratio by showing a lower bound of (D+1)/3 on the ratio of any greedy algorithm that can use any advice. We apply our new algorithm to the minimum vertex cover problem on graphs with maximum degree 3 to obtain a substantially faster 6/5-approximation algorithm than the one currently known.

We complement our positive, upper bound results with negative, lower bound results which prove that the problem of designing good advice for greedy is computationally hard 
and even hard to approximate on various classes of graphs. These results significantly improve on such previously known hardness results. Moreover, these results suggest that obtaining the upper bound results on the design and analysis of greedy advice is non-trivial.
\end{abstract}

\newpage

%%%%%%%%%%%%%%%%%%%%%%%
%%%%%%%%%%%%%%%%%%%%%%
%%%%%%%%%%%%%%%%%%%%%%
%%%%%%%%%%%%%%%%%%%%%%%
%%%%%%%%%%%%%%%%%%%%%%
%%%%%%%%%%%%%%%%%%%%%%%%%%%%%%%%%%%%%%%%%%%%%
%%%%%%%%%%%%%%%%%%%%%%
%%%%%%%%%%%%%%%%%%%%%%%%%%%%%%%%%%%%%%%%%%%%%
%%%%%%%%%%%%%%%%%%%%%%
%%%%%%%%%%%%%%%%%%%%%%%%%%%%%%%%%%%%%%%%%%%%%
%%%%%%%%%%%%%%%%%%%%%%
%%%%%%%%%%%%%%%%%%%%%%%%%%%%%%%%%%%%%%%%%%%%%
%%%%%%%%%%%%%%%%%%%%%%
%%%%%%%%%%%%%%%%%%%%%%%%%%%%%%%%%%%%%%%%%%%%%
%%%%%%%%%%%%%%%%%%%%%%
%%%%%%%%%%%%%%%%%%%%%%

%%%%%%%%%%%%%%%%%%%%%%%%%%%%%%%%%%%%%%%%%%%%%%%%%%%     FULL PAPER %%%%%%%%%%%%%%%%%%%%%%%%%%%%%%%%%%%%%%%%%%%%%%%%%%%  

%%%%%%%%%%%%%%%%%%%%%%%
%%%%%%%%%%%%%%%%%%%%%%
%%%%%%%%%%%%%%%%%%%%%%%%%%%%%%%%%%%%%%%%%%%%%
%%%%%%%%%%%%%%%%%%%%%%
%%%%%%%%%%%%%%%%%%%%%%%%%%%%%%%%%%%%%%%%%%%%%
%%%%%%%%%%%%%%%%%%%%%%
%%%%%%%%%%%%%%%%%%%%%%%%%%%%%%%%%%%%%%%%%%%%%
%%%%%%%%%%%%%%%%%%%%%%
%%%%%%%%%%%%%%%%%%%%%%%%%%%%%%%%%%%%%%%%%%%%%
%%%%%%%%%%%%%%%%%%%%%%
%%%%%%%%%%%%%%%%%%%%%%%%%%%%%%%%%%%%%%%%%%%%%
%%%%%%%%%%%%%%%%%%%%%%
%%%%%%%%%%%%%%%%%%%%%%%%%%%%%%%%%%%%%%%%%%%%%
%%%%%%%%%%%%%%%%%%%%%%
%%%%%%%%%%%%%%%%%%%%%%%%%%%%%%%%%%%%%%%%%%%%%
%%%%%%%%%%%%%%%%%%%%%%
%%%%%%%%%%%%%%%%%%%%%%

% \begin{center}
%  { \bf \LARGE \centering Full version: }   \\
%  {\large \sc Ultimate greedy approximation of independent sets in subcubic graphs}
 
% \end{center}

\section{Introduction}

Given an undirected graph $G$, an {\em independent set} in $G$ is a subset of the set of its vertices such that no two of these vertices are connected by an edge in $G$.
The problem of finding an independent set of maximum cardinality in a graph, the Maximum Independent Set problem (MIS), is one of the most fundamental NP-hard
combinatorial optimization problems. Already Karp proved in his famous paper \cite{Karp72} that the decision version of the Maximum Clique problem, which is equivalent to MIS on the complement graph, is NP-complete. Because of its hardness of computation, we are interested in polynomial time approximation algorithms for the MIS problem. We say that a polynomial time algorithm for MIS is an $r$-approximation algorithm if it finds an independent set in the input graph of size at least $opt/r$, where $opt$ is the size of the maximum size independent set in the graph. The number $r \geq 1$, which may be constant or may depend on the input graph's parameters, is called an {\em approximation ratio}, {\em guarantee} or {\em factor}.

We are interested in MIS on graphs with maximum degree bounded by $\Delta$. This problem is known for its inherent hardness of approximation guarantee. Even if $\Delta =3$, MIS is known to be APX-hard, see \cite{AlimontiK00}. There are also explicit constant hardness ratios known for small constant values of $\Delta$ \cite{Chlebiks2003}. As $\Delta$ grows, there are stronger, asymptotic, hardness of approximation results known: $\Omega(\Delta/ \log^2 \Delta)$, under the Unique Games Conjecture \cite{AKS2011}, and $\Omega(\Delta/ \log^4 \Delta)$,  assuming that P $\not =$ NP \cite{SOChan16}. The best known polynomial time approximation ratio for this problem for small values of $\Delta \geq 3$ is arbitrarily close to $\frac{\Delta +3}{5}$, see \cite{Berman1999,Berman:1994:AMI:314464.314570,Chlebik2004}. This is achieved by a local search approach at the expense of huge running time, e.g., $n^{50}$ \cite{HalldorssonRadha1994}, where $n$ is the number of vertices in the graph. The best known asymptotic polynomial time approximation ratio for MIS is $O(\Delta \log \log (\Delta)/\log(\Delta))$ based on semidefinite programming relaxation \cite{Halperin02}. However, the best known asymptotic  approximation ratio for MIS is $O(\Delta / \log^2(\Delta))$ with $O(n^{O(1)} \cdot 2^{O(\Delta)})$ running time \cite{Bansal:2015:LTF:2746539.2746607}. In this paper we are primarily interested in MIS on graphs with small to moderate values of $\Delta$.

Probably the best known algorithmic paradigm to find large independent sets is the {\em minimum degree greedy} method, which repeatedly chooses a minimum degree vertex in the current graph as part of the solution and deletes it and its neighbors until the remaining graph is empty. This basic algorithm is profoundly simple and time-efficient and can be implemented to run in linear time. The first published approximation guarantee $\Delta+1$ of this greedy algorithm for MIS we are aware of can be inferred from the proof of the following conjecture of Erd\H{o}s, due to Hajnal and Szemer{\'e}di \cite{HajnalS1970,Berge1973}: every graph with $n$ vertices and maximum degree $\Delta$ can be partitioned into $\Delta + 1$ disjoint independent sets of almost equal sizes. The approximation ratio of greedy has been improved to $\Delta - 1$ by Simon \cite{Simon90}. The best known analysis of greedy by Halld{\'o}rsson and Radhakrishnan \cite{Halldorsson1997,HalldorssonR1994} for MIS implies the approximation ratio of $(\Delta+2)/3$, and better ratios are known for small values of $\Delta$.

Halld{\'o}rsson and Yoshihara \cite{10.1007/BFb0015418} asked in their paper the following fundamental question: what is the power of the greedy algorithm when we augment it with an {\em advice}, that is, a fast method that tells the greedy which minimum degree vertex to choose if there are many? They, for instance, proved that no advice can imply a better than $5/4$-approximation of greedy for MIS with $\Delta=3$. On the other hand, they provide an advice for greedy that implies a $3/2$-approximation, see algorithm MoreEdges in \cite{10.1007/BFb0015418}, and an improved $9/7$-approximation, see algorithm Simplicial in \cite{10.1007/BFb0015418}.\footnote{We have found a counter-example to their claimed ratio of $9/7$ by Simplicial, see example in Figure \ref{fig:lowerbound}. Simplicial may choose recursively the top vertex in those instances, which leads to a solution where the approximation ratio tends to $17/13 > 9/7$, when $i$ tends to infinity. This counter-example has also been verified and confirmed by Halld{\'o}rsson \cite{Halldorsson2019}. Using our new techniques, we can prove that Simplicial achieves an approximation ratio of $\frac{13}{9} \approx 1.444$ for MIS on subcubic graphs. Technically, Simplicial is not a greedy algorithm however, because due to its branchy reduction it might have iterations where it does not choose the current minimum degree vertex \cite{Halldorsson2019}.} In fact this results has been retracted by Halld{\'o}rsson, see \cite{Halldorsson2019}. Thus, $3/2$ is the best known to date bound on the approximation ratio of greedy in graphs with maximum degree at most $3$, which are also called {\em subcubic} graphs.

\smallskip

\noindent
{\bf Motivation.} In addition to its simplicity and time efficiency, the greedy algorithm for MIS is also important in its own right. Following Halld{\'o}rsson and Radhakrishnan \cite{Halldorsson1997}, greedy algorithm is known to have several important properties: it achieves the celebrated Tur{\'a}n bound \cite{Turan1941,Erdos1970}, and its generalization in terms of degree sequences \cite{Wei1981}, it achieves a good graph coloring approximation when applied iteratively as a coloring method \cite{DSJohnson1974}. Finally, the greedy algorithm finds optimal independent sets in trees, complete graphs, series-parallel graphs, co-graphs, split graphs, $k$-regular bipartite graphs, and graphs with maximum degree at most $2$ \cite{Halldorsson1997,BODLAENDER1997101}. Another important but non-explicit class of graphs for which greedy is optimal is the class of
{\em well-covered} graphs, introduced by Plummer \cite{Plummer70}, and widely studied, see \cite{Plummer93} for a survey. A graph
is well-covered if all its maximal independent sets have the same size. In particular, because any greedy
set is maximal, the greedy algorithm is optimal on such graphs.
 Furthermore, the greedy algorithm finds frequent applications in graph theory, helping to prove that certain classes of graphs have large independent sets, e.g., it almost always finds a $2$-approximation to MIS in a random graph \cite{McDiarmid84}, or it provides an independent set of size at least $0.432 n$ in random cubic graphs with probability tending to $1$ as $n$, the number of vertices, tends to $+\infty$ \cite{FriezeS94}.
%\piotr{Example where greedy is used as part of the proofs of various graph theoretic facts, e.g., ...??}

\subsection{Our new results}

\noindent{\bf Positive results: upper bounds.} We study the design and analysis of greedy approximation algorithms with advice for MIS on bounded degree graphs. Our main technical contribution is a new class of payment schemes for proving improved and tight approximation ratios of greedy with advice. With our new payment schemes we obtain the best known analyses of the greedy algorithm on bounded degree graphs, which significantly improve on the previously known analyses. As a warm-up, we first apply these new techniques to MIS on graphs with maximum degree bounded by any $\Delta$ to obtain the following results:
\begin{itemize}
\item A simple and short proof of the $(\Delta+2)/3$-approximation ratio of any greedy algorithm (i.e., without any advice), the result proved previously by Halld{\'o}rsson and Radhakrishnan \cite{Halldorsson1997,HalldorssonR1994}. We extend a lower bound construction of Halld{\'o}rsson and Radhakrishnan \cite{Halldorsson1997} to prove that any greedy algorithm (with any, even exponential time, advice) has an approximation ratio at least $(\Delta+1)/3 - O(1/\Delta)$.
\item A simple proof of the $(\Delta+6)/4$-approximation ratio of any greedy algorithm on triangle-free graphs with maximum degree $\Delta$, which improves the previous best known greedy ratio of $\Delta/3.5 + O(1)$ \cite{HalldorssonR1994} for MIS on triangle-free graphs. Compared to the proof in \cite{HalldorssonR1994} which uses a technique of Shearer \cite{Shearer83}, our proof is extremely simple and short.
\end{itemize}

We see that as $\Delta$ increases, there is no hope in obtaining significantly better approximation than $(\Delta+2)/3$ by using any, even exponential time, advice for greedy. This motivates us to focus on the small values of $\Delta$. Indeed, we have to develop our payment scheme techniques significantly more compared to the above applications to MIS on graphs with maximum degree $\Delta$ for any value of $\Delta$.
In particular, we obtain the following results for MIS and Minimum Vertex Cover (MVC) problems:
\begin{itemize}
\item We completely resolve the open problem from the paper of Halld{\'o}rsson and Yoshihara \cite{10.1007/BFb0015418} and design a fast, ultimate advice for greedy obtaining a $5/4$-approximation, that is, the best possible greedy ratio for MIS on subcubic graphs. A lower bound of $5/4$ on the ratio of greedy with any, even exponential time, advice on such graphs was proved in \cite{10.1007/BFb0015418}, and the best previously known ratio of greedy was $3/2$ \cite{10.1007/BFb0015418}. Halld{\'o}rsson and Radhakrishnan \cite{Halldorsson1997} also prove a lower bound of $5/3$ for any greedy algorithm that does not use any advice for MIS on subcubic graphs. 
Our new greedy
$5/4$-approximation algorithm has running time $O(n^2)$, where $n$ is the number of vertices in the graph. For comparison, the best known algorithm for this problem is a local search $6/5$-approximation algorithm of Berman and Fujito \cite{Berman1999}, and with an analysis from \cite{HalldorssonRadha1994} has a running time no less than $n^{50}$. Specifically, if the approximation ratio of this local search algorithm is fixed to $5/4$, then the running time is $n^{18.27}$, see \cite{Chlebik2004}.
\item We obtain a greedy $4/3$-approximation algorithm for MIS on subcubic graphs with linear running time, $O(n)$. By using our payment scheme, we can also provide a simple proof of a $3/2$-approximation ratio of the greedy algorithm called MoreEdges in \cite{HalldorssonRadha1994}, which was the best previously known approximation ratio of greedy for MIS on subcubic graphs.
\item Then, we also
obtain a fast  $O(n^2)$-time $6/5$-approximation for the MVC problem on subcubic graphs. The previous best algorithm for this problem was a $7/6$-approximation with a running time of at least $n^{50}$ \cite{HalldorssonRadha1994}. Even obtaining the $6/5$-approximation for MVC on subcubic graphs required a running time of $n^{18.27}$ \cite{Chlebik2004}.
\end{itemize}

To prove these results we develop a payment technique to pay for the greedy solution via a specially defined class of potential functions. For this new class of potentials on subcubic graphs, we develop a very specific inductive process, which takes into account ``parities'' and priorities of the reductions performed by greedy, to prove that the value of the potential is kept locally to be at least $-1$. An additional, global argument is required to show that the global potential is at least $0$. For more details about our new techniques, see Section \ref{s:tech-contributions}. \\

\noindent{\bf Negative results: lower bounds.} We complement our positive upper bound results with impossibility, lower bounds, results which suggest that our upper bounds on the design of good advice for greedy are essentially (close to) best possible, or non-trivial computational problems. We believe that this also suggests that the design of good advice for greedy is a non-trivial task on its own.

Let us first observe that a solution output by greedy is a  maximal independent set. A graph is called {\em well-covered} if all of its maximal independent sets are of the same size, see \cite{Plummer70,Plummer93}. Caro et al.~\cite{CARO1996137} study the computational  complexity of the problem of deciding if a given graph is well-covered. They prove that this problem is co-NP-complete even on $K_{1,4}$-free graphs.

To prove our lower bounds we resort to a notion which captures the essence of greedy (the well-covered property reveals only a very restricted feature of greedy). Namely, we study the computational complexity of computing a good advice for the greedy algorithm for MIS. Towards this goal, Bodlaender et al.~\cite{BODLAENDER1997101} defined a problem called MaxGreedy, which given an input graph asks for finding the largest possible independent set obtained by any greedy algorithm. Thus, MaxGreedy asks for computing the best advice for greedy, i.e., one that leads to the largest possible greedy independent set.
They proved that the problem of computing an advice which finds an $r$-approximate solution to the MaxGreedy problem is co-NP-hard for any fixed rational number $r \geq 1$ and that this problem with $r=1$ remains NP-complete \cite{BODLAENDER1997101}. 

We significantly improve the previously known results on the hardness of computing good advice for greedy, by obtaining the following new results:
\begin{itemize}
    \item We prove that the MaxGreedy problem is NP-complete even on cubic planar graphs. This significantly strengthens the NP-completeness result by Bodlaender et al.~\cite{BODLAENDER1997101} who prove it on arbitrary, not even bounded-degree, graphs. This result suggests that the problem of designing and analysing good advice for greedy even on cubic planar graphs is difficult.
    \item We further prove that MaxGreedy is even NP-hard to approximate to within a ratio of $n^{1-\varepsilon}$ for any $\varepsilon > 0$ by a reduction from 3-SAT, and hard to approximate to within $n/\log n$ under the exponential time hypothesis. We extend this construction to the class of graphs with bounded degree $\Delta$. We prove that MaxGreedy remains hard to approximate to within a factor $(\Delta+1)/3-O(1/\Delta)-O(1/n)$ on this class, nearly matching the approximation ratio $(\Delta+2)/3$ of the greedy algorithm in this class.
    \item We prove that the MaxGreedy problem remains hard to approximate on bipartite graphs. This is quite interesting because it is well known that the MIS problem is polynomially solvable on bipartite graphs.
\end{itemize}

Finally, we extend a lower bound construction of Halld{\'o}rsson and Radhakrishnan \cite{Halldorsson1997} to prove that any greedy algorithm (with any, even exponential time, advice) has an approximation ratio at least $(\Delta+1)/3 - O(1/\Delta)$ on graphs with maximum degree $\Delta$.

\subsection{Our technical contributions}\label{s:tech-contributions}

Our main technical contributions are a class of potential functions and payment schemes, together with an inductive proof technique that are used to pay for solutions of greedy algorithms for MIS. These new techniques lead to very precise and tight analyses of the approximation ratios of greedy algorithms.  

Here we will explain intuitions about our proof of the $5/4$-approximation ratio of the greedy algorithms on subcubic graphs, which uses the full technical machinery of our approach.
%
\begin{comment}
Our payment scheme is non-local, even though the greedy is local in its nature \piotr{Is it really? For instance our rule for $(2,6)$ is not local!} \nan{In this sense, our greedy is not local, it has to detect the graph structure which is not in constant distance. To say it is local unless we can prove a linear time greedy algorithm.}
\end{comment}
%
Let $G$ be a given input graph with an optimal independent set $OPT$. Greedy algorithm executes {\em reductions} on $G$, i.e., a reduction is to pick a minimum degree vertex $v$ in the current graph ({\em root} of the reduction) into the solution and remove its neighbors, see Figure \ref{fig:basicreductions} for examples of reductions. Suppose the first reduction executed by greedy is $R$ and it is {\em bad}: its root $v$ has degree $2$, $v \not \in OPT$ and both neighbors of $v$ are in $OPT$. Then, locally, the approximation ratio is $2$. To bring the approximation ratio down to $5/4$, we must prove that, in the future, there will exist equivalent of at least three reductions, called {\em good}, each of which adds one vertex to the solution and removes only one vertex from $OPT$. Moreover, for each executed bad reduction, there must exists an unique (equivalent of) three good reductions.

Consider, for example, the family of instances of MIS in Figure \ref{fig:lowerbound} where the base graph $H_0$ is $H_0'$. There, black vertices belong to $OPT$, while white do not. This class of instances is due to Halld{\'o}rsson and Yoshihara \cite{10.1007/BFb0015418}. Any greedy algorithm executes on this instance many bad reductions, but only at the very end, good reductions, triangles, are executed. It can easily be checked that there is just enough good reductions to uniquely map three of those to any executed bad reduction (in fact in the whole process there is exactly one good reduction that is unused). This essentially shows a lower bound of $5/4$ on the ratio of any greedy when $i$ tends to infinity. We see that the ``payment'' for bad reductions arrives, but very late! Such a ``payment'' may not only be late, but we also do not know when ``good'' reductions providing such payment are executed. Thus good reductions might be very irregularly distributed. For instance, suppose that the first reduction in $H_0'$ on Figure \ref{fig:lowerbound}, let us call it $R$, has two of its contact edges (these are the four edges going down from $R$'s two black vertices) going to an identical white vertex, creating a follow up reduction of degree one. Then that degree one reduction is good and when executed, it can immediately (partially) pay for the bad reduction $R$.     
%\piotr{We can say here that (a version of) our algorithm also delivers this single``unused''good reduction.}

\smallskip

\noindent
{\bf Question:} How do we prove an existence of such a highly non-local and irregular payment scheme?
 We will
define a special potential of a reduction, see Section \ref{section:potential}, which will imply the existence of 
two sources of ``payments'' -- in the past, from the very first executed reduction, and -- in the future, from the executed good reductions.
  Moreover, our 
potential will be defined in such a way that each executed reduction can in some sense be ``almost paid for'' locally, so that at every point in time we will keep the value of the potential of each connected component of at least $-1$. For example in the instance from Figure \ref{fig:lowerbound} the execution of the first bad reduction in graph $H_{i+1}$ creates $4$ connected components each isomorphic to $H_i$ and then greedy executes reductions in each of them independently.

We will have an intricate inductive argument, see the Inductive Low-debt Lemma \ref{lemma:induction}, showing that an execution of a sequence of reductions in a connected input graph will have the total potential at least $-1$. In the induction step, some reductions $R$ may create multiple components each with potential $-1$. In such cases when we cannot locally obtain potential at least $-1$, we will make sure that even before the execution of $R$, such components contain reductions with strictly higher greedy priority than that of $R$, thus leading to a contradiction. Or, reduction $R$ can pay for such components.
  Having proved 
that the total potential of an execution in the (connected) input graph is at least $-1$, we will finally pay for this $-1$ by the very first executed reduction for which we will show that it will always possess an extra saving of $1$ (this is a payment from the past).

Intuitively, our potential will imply that we can ship the payments from good reductions executed {\em anywhere} in the graph by the greedy into the places where bad reductions need those payments. Such a shipment is unique, in the above sense that there exists three (equivalent) good reductions per single bad reduction.
 We will realize
this shipment by deferring the need of payment into the future along edges, called {\em contact edges}, which are incident to the neighbors of the reduction's root vertex. These contact edges created by a bad reduction $R$ will be called {\em loan} edges. Each loan edge $e$ created by $R$ will have a ``dual'' edge (physically identical to $e$), called a {\em debt} edge, which will be inherited by the future reductions directly created by $R$ via its contact edges.

Our potential of a reduction $R$ will account for the number of vertices chosen to the solution, removed from $OPT$, plus the number of loan edges, minus the number of debt edges. Thus, we will very precisely account for such edges.
  This process 
is complicated by the fact that vertices can be {\em black} (in $OPT$, a maximum independent set) and {\em white} (outside of $OPT$) and whether a reduction is ``bad'' depends on the distribution of black/white vertices in the reduction. For instance, a reduction like the first one in graph $H_0'$ on Figure \ref{fig:lowerbound} might have a black root and thus the two root's neighbors will be white. Such a reduction will in fact be ``good'' when executed.

Some reductions are ``bad'' and they create ``many'' contact edges, like the first reduction in graph $H_0'$ on Figure \ref{fig:lowerbound} (with white root). Those ``many'' contact edges, called loan edges, will in the future create some good reductions that will pay for that bad one. Observe that such a reduction has more loan edges than debt edges, so it creates a surplus of credit. On the other hand when the greedy process ends, it can end only with terminal reductions. Those terminal reductions do not have any contact edges, but they have the property that for any white vertex added to the solution, they remove only one black vertex. That is why they can ``pay'' for the previous bad reductions.

This explains only some intuitions of why such a highly non-local and irregular payment scheme can have a chance to succeed.
%
\begin{comment}
Another point: given a bad even-backbone reduction, if it creates not too many white contact vertices, then they will be followed by good degree-1 reductions which can immediately pay. If, on the other hand it creates many white contact vertices then in the following graph structure there cannot be too many black vertices, which again will finally translate to enough good reductions that can pay.
\end{comment}
%
By this intuition our guess was (indeed, confirmed true by our final proof) that a single such contact edge, loan edge, will translate in a one-to-one way to a single (equivalent of a) good reduction.
Furthermore, and most importantly, this approach will enable us to ``predict'' the precise future graph structure by using the contact edges. It also enables us to keep track of the past reductions -- by keeping track of the debt edges and the current state of the savings. And indeed, we have succeeded in building a theory that delivers a complete and precise such payment scheme.

This approach allows us to achieve a very interesting kind of result here -- namely, to (essentially) characterize all graphs that can have negative potential! See the Definition \ref{df:problematic} and Lemma \ref{lemma:induction}.

These ideas lead to our analysis which is extremely tight, essentially up to an additive unit in the following sense. We prove that (a version of) our $5/4$-approximate greedy algorithm finds a solution of size at least $\frac{4}{5} |OPT| + \frac{1}{5}$ on any subcubic graph, whereas when run on the lower bound instances of Halld{\'o}rsson and Yoshihara \cite{10.1007/BFb0015418}, our algorithm finds a solution of size precisely $\frac{4}{5} |OPT| + \frac{1}{5}$.
  A somehow unusual 
aspect of our result is that we can prove that {\em any} lower bound example that shows exact tightness of our guarantee of  $\frac{4}{5} |OPT| + \frac{1}{5}$ must be an infinite family of graphs, see the remark after the proof of Theorem \ref{theo:ratio}.

\medskip

\noindent
{\bf Motivation.} To motivate our 
upper bound results further, we study the computational complexity of designing good advices for greedy. Given a class of graphs $\mathcal{G}$ there always exists a family of graphs $\mathcal{G}' \subset \mathcal{G}$ that implies the worst possible approximation performance of greedy with any, even exponential time advice.

\noindent 
{\bf Question:} To which extent we can reach this worst case performance by designing a polynomial time advice for greedy? As an example, we have answered this question completely and very precisely in the case where 
$\mathcal{G}$ is the class of all subcubic graphs and $\mathcal{G}'$ is the family of instances with base graph $H_0'$ and recursively constructed graphs $H_{i+1}$ in Figure \ref{fig:lowerbound}, due to Halld{\'o}rsson and Yoshihara \cite{10.1007/BFb0015418}. Interestingly, our algorithm, \Greedystar, outputs the best greedy independent set on those instances and we prove that this independent set is precisely a factor of $4/5$ times the size of the maximum independent set in the instance.
  However, we 
prove that in many cases this problem is highly computationally intractable, NP-hard, NP-complete, APX-hard and  
even hard to approximate with respect to possible approximation ratios achieved by greedy with advice, depending on the class of graphs $\mathcal{G}$, see Section \ref{s:lower_bounds} for details. For instance, the problem of designing the best advice for greedy is NP-complete even when $\mathcal{G}$ are all planar cubic graphs, see Theorem \ref{hardcubic}.

These computational complexity results suggest that the task of the design and analysis of efficient advice for the greedy algorithm is a non-trivial task. On the other hand, indeed, our ultimate analysis of the greedy on subcubic graphs, in Section \ref{Section:greedy-subcubic}, is quite complex.\\

\noindent
{\bf Differences between our proof and Halld{\'o}rsson and Yoshihara \cite{10.1007/BFb0015418}.} The success of our approach depends on three parts: the definition of the potential, the design of greedy rules and the analysis for excluding certain reductions and paying for some of them. These three parts interact in a very intricate way. Our definition of potential not only captures itself the graph structure of problematic reductions. Given a current reduction, our potential also captures, by the interaction with other two parts, what is the relation to reductions which were executed previously and will be executed in the future. This is captured by considering the different types of edges, called contact edges, which connect those reductions or sets of reductions.

According to our potential, there are two kinds of reductions that are particularly problematic to deal with. These are odd isolated cycles with maximum independent sets in them, and reductions like reduction (d) in Figure \ref{fig:negativepot}, which we call an odd-backbone reduction. Their potential is $-1$ (for the cycle it can also be $-2$, but we can prevent that case). This means that each such reduction when executed would need a unit of payment originated from some good reduction. Consider an instance $H_{i}$ in Figure \ref{fig:lowerbound} when $i$ tends to $\infty$ (with the base graph $H_0$). Suppose that greedy executes the top bad reduction and then recursively executes the following four created bad reductions. Then at the very end it will reach a collection of $7$-cycles and each such cycle will need a payment of $1$. As we see this can only lead to a ratio $17/13 > 5/4$. Already on any odd cycle, the potential of 
\cite{10.1007/BFb0015418} tells us that it actually needs a payment of $2$ (which is one unit more than our potential); we mention here however that it is not possible to pay $2$ units to such odd cycles. Our approach is to either prevent greedy from ever ending up with such isolated problematic odd cycles or to show that we can actually pay for such cycles in some cases. The key to a solution is to carefully prioritize certain reductions that would ``break'' the cycle before it becomes isolated, or to pay for it when there is a spare reduction that can do so. For the bad odd-backbone reductions in Figure \ref{fig:negativepot}(d), observe that we could wisely execute them on a black degree-$2$ vertex which would make then good. But then how do we know which of these two adjacent degree-$2$ vertices is black/white? In fact a ``branchy'' reduction of the Simplicial algorithm in \cite{10.1007/BFb0015418} deals with such reductions, but this makes the algorithm non-greedy.
We do not know how to use branchy to obtain a $5/4$-approximation, because it will introduce new degree-$3$ reductions which then would need to be analyzed.

One way, pursued in \cite{10.1007/BFb0015418}, is to try to pay for such odd cycles or odd-backbone reductions by some kind of local analysis which tries to collect locally good reductions that can pay. We can show that such local analysis/payment is not possible and a global payment or explicit exclusion of such reductions are necessary.

Instead, what we do is to impose a special greedy order on such odd-backbone reductions, and with this order we prove that we can pay for them whenever they are executed as bad reductions. The source of these payments, however, is non-local and our scheme proves their unique existence.

Now, to achieve the above payments or avoid bad reductions, we introduce a powerful analysis tool which is a special kind of reductions. They are called {\em black} and {\em white} reductions, see Definition \ref{df:type}. We also introduce an inductive process to argue about existence of such reductions in Lemma \ref{lemma:induction}. These techniques will let us prove that when a reduction, say $R$, that cannot pay is executed, there will exist a strictly higher priority reduction (black or white) in the graph even before the execution of $R$. This leads to a contradiction with the greedy order. This argument is quite delicate because their existence depends crucially on what kind of contact edges $R$ has. But it also depends on the previously executed reductions.

Most importantly, our definition of the potential is in perfect harmony with our inductive proof, that the potential can be kept locally at value at least $-1$. This lets us link the potential directly to the graph structure of the reductions, see Definition \ref{df:problematic}. And this then lets us characterize the potentially problematic graphs, that is, those with negative potential, which is the core of our proof. The main tool that helps us in this task is our Inductive Low-debt Lemma, see Lemma \ref{lemma:induction}, which enables us to design the greedy order and characterizes the problematic graphs.

Note that we have managed to prove the existence of appropriate payments coming from good reductions by using only ``local'' inductive arguments, but our payment scheme is inherently non-local. This means that the payment, i.e., good reductions, can reside very far from the bad reductions for which they pay.

\subsection{Further related work}

%\cite{Lau2001} for ``the greedier the better"

In this section we survey some further related work on MIS. It will be selective, because of vast existing literature on approximating MIS.
  The MIS problem 
is known for its notorious approximation hardness. H\r{a}stad \cite{hastad1999} provided a strong lower bound of $n^{1-\epsilon}$ on the approximation ratio of the general MIS problem for any $\epsilon>0$, under the assumption that NP $\not\subseteq$ ZPP, where $n$ is the number of vertices of the input graph. This hardness result has been derandomized by Zuckerman \cite{Zuckerman:2006:LDE:1132516.1132612} who showed that the general MIS is not approximable to within a factor of $n^{1-\epsilon}$ for any $\epsilon>0$, assuming that P $\not=$ NP.
The best known approximation algorithms for general MIS problem achieve the following approximation ratios: $O(n/\log^2n)$ by Boppana and Halld{\'o}rsson \cite{Boppana1992}, that was improved to $\widetilde{O}(n/\log^3n)$ by Feige \cite{Feige04} (this ratio has some additional $\log \log n$ factors).
 
The first known nontrivial approximation ratio for MIS on graphs with maximum degree $\Delta$ is $\Delta$ acquired by Lovasz's algorithmic proof \cite{LOVASZ1975269} of Brooks's coloring theorem.

The best known asymptotic polynomial time approximation ratio for MIS, i.e., when $\Delta$ is large, is $O(\Delta \log \log (\Delta)/\log(\Delta))$ based on semidefinite programming relaxation \cite{Halperin02}. However, if we allow for some extra time, there exists an asymptotic $O(\Delta/\log(\Delta))$-approximation
algorithm with $O(n^{O(1)} \cdot \exp(\log^{O(1)}n))$ running time \cite{Bansal15}. And in particular, the best known asymptotic approximation ratio for MIS is $O(\Delta / \log^2(\Delta))$ with $O(n^{O(1)} \cdot 2^{O(\Delta)})$ running time \cite{Bansal:2015:LTF:2746539.2746607}. For small to moderate values of $\Delta$, Halld{\'o}rsson and Radhakrishnan \cite{Halldorsson1994Removal,HalldorssonRadha1994}, via subgraph removal techniques, obtain an asymptotic ratio $\Delta/6(1 + o(1))$ with $O(\Delta^{O(1)} \cdot n)$ running time for relatively small $\Delta \geq 5$, and $O(\frac{\Delta}{\log\log \Delta})$ for larger $\Delta$ with linear running time.

Demange and Paschos \cite{DEMANGE1997105}  prove that a $\Delta/6$-approximation ratio can be obtained in time $O(n|E|)$, but this ratio is asymptotic as $\Delta \rightarrow \infty$. They also prove that an $\Delta/k$-approximation ratio can be achieved in time $O(n^{\lceil k/2\rceil})$, for any fixed integer $k$. This second ratio is also asymptotic. Those algorithms do not apply to $\Delta=3$, but to quite large values of $\Delta$. Khanna \emph{et al.} \cite{Khanna1998} obtained a $(\sqrt{8\Delta^2 + 4\Delta + 1} - 2\Delta + 1)/2$-approximation for $\Delta \geq 10$, with $O(\Delta^{O(1)} \cdot n)$ running time by local search.

For any values of  $\Delta$, Hochbaum \cite{HOCHBAUM1983243}, using a coloring technique accompanied with a method of Nemhauser and Trotter \cite{Nemhauser1975} obtained an algorithm with a ratio of $\Delta/2$. Berman and F\"{u}rer \cite{Berman:1994:AMI:314464.314570} designed a new algorithm whose performance ratios are arbitrarily close to $(\Delta + 3)/5$ for even $\Delta$ and $(\Delta + 3.25)/5$ for odd $\Delta$. Berman and Fujito \cite{Berman1999} obtained a better ratio which is arbitrarily close to $\frac{\Delta+3}{5}$. Finally, in the latest results from Chleb\'{i}k and Chleb\'{i}kov\'{a} \cite{Chlebik2004}, their approximation ratio is arbitrarily close to $\frac{\Delta+3}{5}- \frac{4(5\sqrt{13} - 18)}{5}\frac{(\Delta-2)!!}{(\Delta+1)!!}$, which is slightly better than the previous results. These algorithms are based on local search and they have huge running times.

For the case of subcubic graphs with $\Delta = 3$, MIS is known to be NP-hard to approximate  to within $\frac{95}{94}$. \cite{Chlebiks2003}. Hochbaum \cite{HOCHBAUM1983243} presented an algorithm with $3/2$ ratio. Berman and Fujito \cite{Berman1999} obtain a $\frac{6}{5}$ ratio with a huge running time. Even a tighter analysis from \cite{HalldorssonRadha1994} the time complexity appears to be no less than $n^{50}$. Chleb\'{i}k and Chleb\'{i}kova \cite{Chlebik2004} show that their approximation ratio is arbitrarily close to $3 - \frac{\sqrt{13}}{2}$, which is slightly better than $\frac{6}{5}$. Moreover, the time complexity of their algorithm is also better. Specially, if the ratio is fixed to $\frac{5}{4}$, then the running time is $n^{18.27}$. Halldorsson and Radhakrishnan \cite{Halldorsson1994Removal} provide another local search approach based on \cite{Berman:1994:AMI:314464.314570} and obtain a ratio of $\frac{7}{5}$ in linear time, and a $(\frac{4}{3} + \epsilon)$-ratio in time $O(n e^{1/\epsilon})$. Halld{\'o}rsson and Yoshihara \cite{10.1007/BFb0015418} present a linear time greedy algorithm with an  approximation ratio of $\frac{3}{2}$.\footnote{They also claim a better ratio of $9/7$ in linear time, however, they have retracted this result \cite{Halldorsson2019}.}

For the minimum vertex cover (MVC) problem in general, Garey and Johnson \cite{GAREY1976237} presented a $2$-approximation algorithm on general graphs. For MVC on subcubic graphs, Hochbaum \cite{HOCHBAUM1983243} provided a $\frac{4}{3}$-approximation ratio, by using the method of Nemhauser and Trotter \cite{Nemhauser1975}. Berman \cite{Berman:1994:AMI:314464.314570} gives a $\frac{7}{6}$ ratio and by the same approach. And \cite{Chlebik2004} shows that a ratio which is slightly better than $\frac{7}{6}$ can be obtained. These algorithms are based on local search and have huge running time.

\begin{comment}
Related problems: Set Packing (paper by Cygan, ...) and MIS on claw-free graphs ...,
\end{comment}

\begin{comment}

\section{questions future research}
\begin{itemize}
\item look at the used reduction from Chlebik to show better apx for D-degree graphs by using degree 3
\item NP-mighty of simplex in context of Harness of max-greedy -> Ask raul
\item MAX-greedy results, apx-hard, ---> motivate the fact that finding the best advise rules is difficult 
\item Min independent dominating set
\item Perfect graphs ?
\item apply techniques for Vertex cover ? Set cover ? use the fact that $V-I$ is a VC
\item Why greedy ?
\item the most complicated analysis of greedy for some problem ? Our analysis is non-local

\nan{make Greedy to be linear time}

\end{itemize}

\section{to do list}
\begin{itemize}
    \item add results: Hardness cubic planar. Apx-hard for bounded degree graph.  Bipartite maybe ?
    \item add results: \piotr{Nan, add our duality-based proof of the application of the MIS  5/4-approximation to the minimum vrtex cover problem in subcubic graphs (via the Nemhauser-Trotter reduction).}
\end{itemize}

\end{comment}

\section{Definitions and preliminaries}

Given a graph $G=(V,E)$, we also denote $V(G) = V$ and $E(G)= E$. For a vertex $v\in V$, let $N_G(v):= \{u\in V \mid uv \in E\}$ and $N_G[v]:= N_G(v)\cup\{v\}$ denote respectively the open and closed neighborhood of $v$ in $G$. The degree of $v$ in $G$ denoted $d_G(v)$ is the size of its open neighborhood. %  $d_G(v):= \mid N_G(v)\mid $. 
More generally, we define the closed (\resp open) neighborhood of a subset $S\subseteq V$ as the union of all closed (\resp open) neighborhoods of each vertex in $S$. 

A graph is called {\em subcubic} or {\em sub-cubic} is its maximum degree is at most $3$. If the degree of each vertex in a graph is exactly $3$ then it is called {\em cubic}.

Given an independent set $I$ in $G$, we call \emph{black} vertex a vertex $v$ in $I$ and a \emph{white} vertex otherwise.  We denote by $\alpha(G)$ the independence number of $G$, that is the number of black vertices when $I$ is of maximum size in $G$.

%\textsc{odd-backbone}

\section{Greedy}

The greedy algorithm, called a {\em basic greedy}, or just \Greedy\hspace*{-1.7mm}, on a graph $G=(V,E)$ proceeds as follows. It starts with an empty set $S$. While the graph $G$ is non empty, it finds a vertex $v$ with minimum degree in the remaining graph, adds this vertex to $S$ and removes $v$ and its neighbors from $G$. It is clear that at the end, $S$ is an independent set. Let $S=\{v_1,\dots,v_k\}$ be the ordered output. Let $G_i$ denote the graph after removing vertex $v_i$ and its neighboring vertices. More precisely, $G_0=G$ and $G_i=G[V\setminus N_G\left[\{v_1,\dots,v_i\}]\right]$, where $v_i$ is a vertex in $G_{i-1}$ that satisfies $d_{G_{i-1}}(v_i)=\min\left\{d_{G_{i-1}}(v) : v \in V(G_{i-1})\right\}$. 

%$$
%G=G_0\supset G_1\supset \dots \supset G_k=\emptyset
%$$
Each iteration of the algorithm is called a \emph{basic reduction}, denoted by $R_i$, which can be described by a pair $(v_i,G_{i-1})$. 
An \emph{execution} $\calE:=(R_1,\dots, R_k)$ of our greedy algorithm is the ordered sequence of basic reductions performed by the algorithm. 

To analyse an execution, we will only require local information for each basic reduction. Given a basic reduction $R_i=(v_i,G_{i-1})$, we call $v_i$ its 
%the $i$-th \emph{basic reduction}, written $R_i$, is the graph induced by the vertex $v_i$ and all the vertices at distance at most two from $v_i $ in $G_{i-1}$. $v_i$ is called the 
\emph{root vertex}, its neighbors the \emph{middle vertices}, and together they form the \emph{ground} of the reduction, namely the set of vertices which are removed when the reduction is executed, written $ground(R_i)$. Vertices at distance two from the root are the \emph{contact vertices}. The set of contact vertices is denoted by $contact(R_i)$. Then, the edges between middle and contact vertices are called \emph{contact edges}.%, while the set formed by the root and the middle vertices is the \emph{ground} of the basic reduction, written $ground(R_i)$.

From now on, we will consider that two basic reductions $R=(v,G)$ and $R'=(v',G')$ are isomorphic if there exists a one-to-one function $\phi:N_G[v]\longrightarrow N_{G'}[v']$ such that $\phi(v)=v'$, $u$ and $w$ are adjacent in $G$ if and only if $\phi(u)$ and $\phi(w)$ are adjacent in $G'$, and if each middle vertex $u$ is incident in $G$ to the same number of contact edges than $\phi(u)$ in $G'$.
Finally, the \emph{degree} of a basic reduction is defined as the degree of its root vertex. 

Figure \ref{fig:basicreductions} presents a table of all possible basic reductions of degree at most two in sub-cubic graphs. 
%All different types of simple reductions of greedy with degree at most two are shown in Figure \ref{fig:basicreductions}. 
Notice that the middle vertices must have degrees equal to or greater than the degree of the reduction. %Notice also that for sub-cubic graphs a reduction of degree three appears only once during the execution in a connected graph, so that it will be treated separately.

%and it is very likely that the choice of a minimum degree vertex is not unique. In this case, in order to ensure a good approximation, we need to chose this vertex in a very specific way. In particular, the choice depends on the shape of the graph induced by the vertex chosen and its neighbors at distance at most two. 

\begin{figure}
    \centering
    \includegraphics[width=16cm]{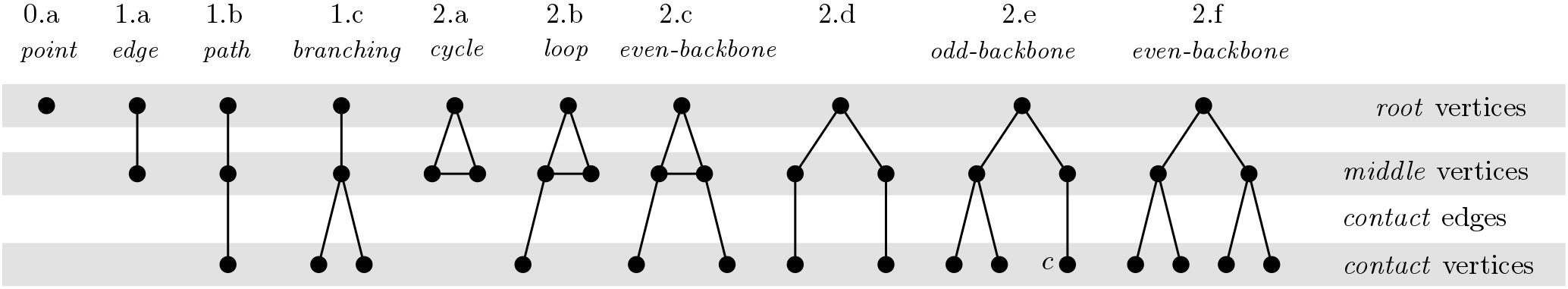}
    \caption{Basic reductions of degree at most two in sub-cubic graphs. We will refer later to these basic reductions by their names, for instance 2.b is a \emph{basic loop reduction}. In this picture, we have drawn contact vertices as distinct vertices, but in a reduction, several contact edges may be incident to the same contact vertex.
    When the right-most contact vertex $c$ of 2.e has degree three, this reduction is an odd-backbone reduction. Notice that in this case, the middle vertex of degree two is also the root of a basic odd-backbone reduction. } 
    \label{fig:basicreductions}
\end{figure}

%\figbasicreductions*

%The ground of a basic reduction $R$ written as $ground(R)$ is the set of vertices made of the root and the middle vertices in $R$. An edge between any middle vertex of $R$ and any of its contact vertex is also called a \emph{contact edge} of $R$.

\subsection{Potential function of reductions}\label{section:potential}

Suppose that we are given an independent set $I$ in a connected graph $G=(V,E)$ and an execution $\calE = (R_1,\dots,R_k)$ of a greedy algorithm on the input graph $G$. This execution is associated to a decreasing sequence of subgraphs of $G$:
%Notice that the size of the solution output by \algo{Greedy} is exactly the number of reductions which is $k$.

%This sequence of reductions is associated with a sequence of extended reductions $\overline{R_1},\dots, \overline{R_\kappa}$.

%These reductions corresponds to a sequence of sub-graphs of $G$ :
$$
G=G_0 \supset \dots \supset G_{k}=\emptyset,
$$
where %\piotr{I have removed overlines from the reductions in this subsection, because they are basic and not extended reductions.} 
$G_i=G\left[V\setminus \bigcup_{j=1}^i ground(R_j)\right]$ is the induced sub-graph of $G$ on the set of vertices $V\setminus \bigcup_{j=1}^i ground(R_j)$.

Given a basic reduction $R_i=(v_i,G_{i-1})$, we define \emph{loan edges} of $R_i$ as all contact edges with a white contact vertex. Notice that the middle vertex of a loan edge can either be either black or white. The \emph{loan} of reduction $R_i$, denoted by $loan_{I}(R_i)$ corresponds to its total number of loan edges.

%\mathieu{try of an other definition}
We also define the \emph{debt} of a \emph{white} vertex in the ground of $R_i$ as the number of times this vertex was incident to a loan edge, let us call it $e'$, of a reduction that was previously executed. Such loan edge $e'$ is also called a {\em debt} edge of reduction $R_i$.
%\piotr{I have also defined here the debt edge explicitly, because we use it later.}
  It turns out
that the debt of a white vertex corresponds exactly to the difference between its degree in the original graph $G$ and in the current graph $G_{i-1}$. Similarly, we define the debt of a reduction as the sum of the debts of the vertices of its ground.
$$
debt_{G,I}(R_i) = \sum_{u \in ground(R_i)\setminus I} \left( d_G(u)-d_{G_{i-1}}(u) \right)
$$

%We also define \emph{debt edges} of $R$
%as all edges which were adjacent on any vertex in $ground(R) \setminus I$ in the input graph $G$, and were removed by some previous reduction that was executed before the execution of $R$. \piotr{I defined loan and debt edges, but debt edges need more formal definition.}

%We denote by $loan(R)$ and by $debt(R)$, respectively, the set of all loan (debt, \resp$\!\!\!\!$) edges of $R$. Then if $\overline{R}$ is an (extended) reduction with the sequence $R_1,\ldots, R_s$ of basic reductions, then the set of its loan edges is $loan(\overline{R}) = \bigcup_{i=1}^{s} loan(R_i)$. The set $debt(\overline{R})$ is defined analogously.

%\piotr{Check these definitions:}
 
%Given an independent set $I$ in a graph $G$ and an induced subgraph $G_{i-1}$, we define the \emph{loan} of the reduction $\overline{R_{i}}$ as
%$$
%loan_{G_{i-1},I}(\overline{R_i}) = |\{(u,v) \in E(G_{i-1}) : u \in ground(\overline{R_{i}}), v \in contact(\overline{R_{i}}) \setminus I \}|
%$$ and the \emph{debt} as $$
%debt_{G,G_{i-1},I}(\overline{R_i}) = |\{(u,v) \in E(G) : u \in V(G) \setminus V(G_{i-1}), v \in ground(\overline{R_{i}}) \setminus I \}|.
%$$

Given two parameters $\gamma,\sigma\ge 0$, we now define the \emph{exact potential} of a reduction $R_i$, for $1\le i \le k$, as 
$$
\Phi_{G,I}(R_i):=\gamma-\sigma\cdot|I\cap ground(R_i)|+
loan_{I}(R_i)-debt_{G,I}(R_i)
%|contact(\overline{R_i})\setminus I|-\sum_{v\in ground(R_i)\setminus I} \left(d_G(v)-d_{G_{i-1}}(v)\right)
$$
%Where $OUT$ is the number of contact vertices that are in not in $I$ and $IN= \sum_{v\in ground(R)\cap V\setminus I} \left(d_G(v)-d_{G_{i-1}}(v)\right)$. 

The \emph{exact potential} of an execution $\calE = (R_1,\dots,R_k)$ is the sum of the exact potential of all reductions:
$$
\Phi_{G,I}(\calE)=\sum_{i=1}^k  \Phi_{G,I}(R_i)
$$

Since the independent set produced by the greedy algorithm is maximal and the total debt and the total loan are equal we obtain the following property.

\begin{restatable}{prop}{propsumpotential}\label{prop:sumpotential}
%\begin{prop}
%\label{prop:sumpotential}
Given an execution $\calE = (R_1,\dots,R_k)$, we have: $\Phi_{G,I}(\calE)=\gamma k -\sigma|I|$.
%\end{prop}
\end{restatable}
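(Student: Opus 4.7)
The plan is to expand $\Phi_{G,I}(\calE)$ by linearity into four sums and show that the loan and debt sums cancel, while the ground-intersection sum collapses to $|I|$. Writing
\[
\Phi_{G,I}(\calE)=\sum_{i=1}^k\gamma-\sigma\sum_{i=1}^k|I\cap ground(R_i)|+\sum_{i=1}^k loan_I(R_i)-\sum_{i=1}^k debt_{G,I}(R_i),
\]
the first sum is trivially $\gamma k$, so it remains to handle the other three.

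For the second sum, I would observe that the greedy terminates with $G_k=\emptyset$, so every vertex of $V$ lies in $ground(R_i)$ for exactly one $i$ (namely the iteration in which it is removed, either as root or as neighbor of the current root). Thus the family $\{ground(R_i)\}_{i=1}^k$ is a partition of $V$, so intersecting with $I\subseteq V$ gives $\sum_{i=1}^k|I\cap ground(R_i)|=|I|$.

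The main step, and the only one requiring real work, is to prove that the total loan equals the total debt. I would do this by exhibiting a bijection between the multiset $L$ of pairs $(e,R_i)$ where $e$ is a loan edge of $R_i$ and the multiset $D$ of pairs $(e,R_j)$ where $e$ is a debt edge (incident to a white vertex of $ground(R_j)$). Given a loan edge $e=mc$ of $R_i$, with $m$ a middle vertex of $R_i$ and $c$ a white contact vertex, note that $c\in V(G_{i-1})\setminus ground(R_i)$ (contact vertices are at distance $2$ from the root in $G_{i-1}$), so $c\in V(G_i)$; let $R_j$ be the unique later reduction with $c\in ground(R_j)$. Since $m\in ground(R_i)$ was removed strictly before $R_j$, the edge $e$ is present in $G$ but absent from $G_{j-1}$, so it contributes $+1$ to the debt of $c$ in $R_j$; send $(e,R_i)$ to $(e,R_j)$. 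Conversely, from a debt edge $e=uw$ of a white vertex $u\in ground(R_j)$, the endpoint $w$ lies in $ground(R_i)$ for some $i<j$; $w$ cannot be the root $v_i$ (else $u$ would be adjacent to $v_i$ in $G_{i-1}$, hence also in $ground(R_i)$, contradicting $u\in ground(R_j)$), so $w$ is a middle vertex of $R_i$, and $u$ is at distance exactly $2$ from $v_i$ in $G_{i-1}$, making $e$ a contact edge; since $u$ is white, $e$ is a loan edge of $R_i$. These assignments are mutually inverse, so $|L|=|D|$, i.e.\ $\sum_i loan_I(R_i)=\sum_i debt_{G,I}(R_i)$.

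Assembling the three facts, $\Phi_{G,I}(\calE)=\gamma k-\sigma|I|+0$, as claimed. The only delicate point is verifying the bijection in the preceding paragraph (in particular, ruling out $w$ being the root of $R_i$ and confirming that $c$ is indeed consumed by a later reduction); the rest is bookkeeping.
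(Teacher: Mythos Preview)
Your proof is correct and takes essentially the same approach as the paper, which dismisses the proposition with a one-line ``simple counting argument'': the grounds partition $V$ (giving $\sum_i|I\cap ground(R_i)|=|I|$), and total loan equals total debt. Your bijection argument is more explicit than strictly needed---since the paper \emph{defines} a debt edge of $R_j$ to be a loan edge of some earlier $R_i$ incident to a white ground vertex of $R_j$, the converse direction is nearly tautological; your verification that $w$ cannot be the root is really re-proving the paper's side remark that debt equals the degree drop $d_G(u)-d_{G_{j-1}}(u)$, which is correct but auxiliary.
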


%\propsumpotential*
% \begin{prop}\label{prop:sumpotential}
% Given an execution $\calE = (R_1,\dots,R_k)$, we have: $\Phi_{G,I}(\calE)=\gamma k -\sigma|I|$.
% \end{prop}

\begin{proof}
By the definition of the exact potential, this can easily be seen by a simple counting argument.
\end{proof}

Suppose we want to analyse the approximation ratio of a greedy algorithm for a given class of graphs $\calG$. % with maximum degree at most $\Delta$. 
Then if we manage to find suitable values $\gamma,\sigma$ such that all possible reductions have non negative potential, then a direct corollary of Proposition \ref{prop:sumpotential} is that \Greedy is an $(\gamma/\sigma)$-approximation algorithm in $\calG$.

%The following lemma is the key of the analysis.

%\begin{lem}\label{lem:positivesum}
%$\sum_{i=1}^k \Phi_{G,I}(R_i)\ge 0$
%\end{lem}

In order to measure the potential of each reduction, we now define a new potential, called simply \emph{potential} that is a lower bound on the exact potential. This lower bound is obtained by supposing that the debt of each white vertex is maximal, or equivalently, that its degree in the original graph was equal exactly to $\Delta$. 
$$
 debt_{I}(R_i) := \sum_{u \in ground(R_i)\setminus I} \left( \Delta-d_{G_{i-1}}(u) \right)\ge debt_{G,I}(R_i)
$$
Then we define the \emph{potential} of reduction $R_i$, which is now independent from the original graph, as
$$
\Phi_{I}(R_i):=\gamma-\sigma\cdot|I\cap ground(R_i)|+
loan_{I}(R_i)-debt_I(R_i)
$$
To evaluate the potential of a reduction, we do not need anymore to know the set of reductions previously executed but simply the structure of the graph formed by the vertices at distance two from the root, and also which vertices are black/white, which reduces to a relatively small number of cases. We define the \emph{potential of an execution} similarly. 
 Obviously, this
new potential is a lower bound on the exact potential defined previously, and more precisely, we have the following fact.

% \begin{claim}\label{claim:comparepot}
% Let $G$ be a graph with maximum degree $\Delta$, $I$ an independent set in $G$ and $\calE$ an execution in $G$. Then, 
% $$
% \Phi_{G,I}(\calE)=\Phi_I(\calE)+\sum_{v\notin I} (\Delta-d_G(v)).
% $$
% \end{claim}
%\claimcomparepot*

\begin{restatable}{claim}{claimcomparepot}
\label{claim:comparepot}
Let $G$ be a graph with maximum degree $\Delta$, $I$ an independent set in $G$ and $\calE$ an execution in $G$. Then, 
$$
\Phi_{G,I}(\calE)=\Phi_I(\calE)+\sum_{v\notin I} (\Delta-d_G(v)).
$$
\end{restatable}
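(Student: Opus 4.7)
The plan is to compare the two potentials reduction by reduction, observing that they differ only in the debt term, then sum and use the fact that the grounds of the reductions partition $V(G)$.

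First, I would fix an index $i$ and compare $\Phi_{G,I}(R_i)$ with $\Phi_I(R_i)$. Since the values of $\gamma$, $\sigma\cdot|I\cap ground(R_i)|$ and $loan_I(R_i)$ are identical in both expressions, the difference reduces to
\[
\Phi_{G,I}(R_i) - \Phi_I(R_i) = debt_I(R_i) - debt_{G,I}(R_i) = \sum_{u \in ground(R_i)\setminus I}\bigl((\Delta - d_{G_{i-1}}(u)) - (d_G(u) - d_{G_{i-1}}(u))\bigr) = \sum_{u \in ground(R_i)\setminus I} (\Delta - d_G(u)).
\]
This is a routine unfolding of the two definitions; the $d_{G_{i-1}}(u)$ terms cancel, which is precisely the point of introducing the relaxed debt.

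Next, I would sum this identity over all reductions $R_1,\dots,R_k$. Because the execution terminates with $G_k=\emptyset$, the grounds $ground(R_1),\dots,ground(R_k)$ form a partition of $V(G)$. Restricting this partition to $V(G)\setminus I$ still gives a partition of $V(G)\setminus I$, so every white vertex $v\notin I$ contributes the term $\Delta - d_G(v)$ exactly once. Therefore
\[
\Phi_{G,I}(\calE) - \Phi_I(\calE) = \sum_{i=1}^k \sum_{u \in ground(R_i)\setminus I} (\Delta - d_G(u)) = \sum_{v\notin I}(\Delta - d_G(v)),
\]
which rearranges to the claimed identity.

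There is really no obstacle here: the statement is a bookkeeping identity that isolates the one place where $\Phi_{G,I}$ and $\Phi_I$ disagree. The only point that deserves a brief justification is that the grounds genuinely partition $V(G)$, which follows immediately from the definition of an execution of \Greedy and the decreasing chain $G=G_0\supset\cdots\supset G_k=\emptyset$.
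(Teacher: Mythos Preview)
Your proof is correct and is exactly the ``simple counting argument'' the paper alludes to in its one-line proof: the two potentials differ only in the debt term, and summing the per-reduction difference over the partition of $V(G)$ into grounds yields the claimed identity. There is nothing to add.
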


\begin{proof}
By the definition of the two potentials, this can easily be seen by a simple counting argument.
\end{proof}

\subsection{Warm-up I: New proof of $\frac{\Delta +2}{3}$-ratio for greedy on degree-$\Delta$ graphs} \label{section-general-case}

Halld{\'o}rsson and Radhakrishnan \cite{Halldorsson1997,HalldorssonR1994} proved that for any graph with maximum degree $\Delta$, the basic greedy algorithm %\nan{definition of primitive?}
obtains a $\frac{\Delta+2}{3}$-approximation ratio. In here, we present an alternative proof for the same result, but using our payment scheme. Our proof will be simpler and shorter compared to the proof in \cite{Halldorsson1997}. 

  Let us use the potential from the previous section, with parameters $\gamma= (\Delta+b)\frac{\Delta+2}{3}$ and $\sigma=\Delta+b$ where $b=1$ if $\Delta \equiv 2 \pmod 3$, and $b=0$ otherwise. The choice of the value $b$ is simply to ensure that the potential value is integer. 
%For any basic reduction $R$ and any independent set $I$ we have
% \begin{equation*}
%     \Phi_I(R) = (\Delta +b) \cdot \frac{\Delta+2}{3} - (\Delta+b) \cdot |I\cap ground(R)| + \emph{loan}_I(R) -\emph{debt}_I(R).
% \end{equation*}
%where $k \in \{0,1\}$ is fixed.
%If $S$ is the sequence of all reductions, note that $|E_\emph{loan}(S)| -|E_\emph{debt}(S)| = \sum_{R \in S} (|E_\emph{loan}(R)| -|E_\emph{debt}(R)|) = 0$. 
As we remark before, if we can prove that the potential of any reduction is non-negative, then the approximation ratio of \Greedy in graphs with maximum degree $\Delta$ is $\gamma/\sigma=(\Delta+2)/3$.

% \begin{lem}
% Let $G$ be a graph with maximum degree $\Delta$. For any basic reduction $R$ and any independent set $I$ we have
% \begin{equation*}
%     \Phi_I(R) := (\Delta +b) \cdot \frac{\Delta+2}{3} - (\Delta+b) \cdot |I\cap ground(R)| + loan_I(R) -debt_I(R) \ge 0
% \end{equation*}
% where $b=1$ if $\Delta \equiv 2 \pmod 3$, and $b=0$ otherwise.
% \end{lem}
%\lemmaapxdelta*

\begin{restatable}{lem}{lemmaapxdelta}\label{lemma:apxdelta}
Let $G$ be a graph with maximum degree $\Delta$. For any basic reduction $R$ and any independent set $I$ we have
\begin{equation*}
    \Phi_I(R) := (\Delta +b) \cdot \frac{\Delta+2}{3} - (\Delta+b) \cdot |I\cap ground(R)| + loan_I(R) -debt_I(R) \ge 0
\end{equation*}
where $b=1$ if $\Delta \equiv 2 \pmod 3$, and $b=0$ otherwise.
\end{restatable}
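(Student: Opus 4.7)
The plan is a case analysis on whether the root $v$ of $R$ belongs to $I$. Throughout, write $d := d_{G_{i-1}}(v) \leq \Delta$. Since $v$ is a minimum-degree vertex in $G_{i-1}$, every middle vertex of $R$ has degree at least $d$ in $G_{i-1}$, which yields the uniform upper bound $debt_I(R) \leq W(\Delta - d)$, where $W := |ground(R) \setminus I|$ counts the white ground vertices.

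If $v \in I$, independence forces every middle vertex to be white, so $|I \cap ground(R)| = 1$ and $W = d$; combining with the trivial $loan_I(R) \geq 0$ gives $\Phi_I(R) \geq \gamma - \sigma - d(\Delta - d)$. If $v \notin I$, let $A$ denote the number of black middle vertices; then $|I \cap ground(R)| = A$ and $W = d + 1 - A$. The crucial observation is that, by independence of $I$, every neighbor of a black middle vertex is white, so every contact edge incident to a black middle is a loan edge. A short degree count using $d_{G_{i-1}}(u) \geq d$ for each middle together with the trivial bipartite bound $A(d - A)$ on the number of edges between black and white middles yields $loan_I(R) \geq A(A-1)$. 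Substituting gives $\Phi_I(R) \geq \gamma - (d+1)(\Delta - d) + A(A - d - b - 1)$.

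It remains to verify that both lower bounds are nonnegative at every integer $d \in \{0, \dots, \Delta\}$ and, in the white-root case, every integer $A \in \{0, \dots, d\}$. The black-root bound reduces to $\gamma - \sigma \geq d(\Delta - d)$, which is easy by a direct quadratic-in-$d$ check. The white-root bound is the delicate part: rearranged, it demands $(d+1)(\Delta - d) + A(d + b + 1 - A) \leq \gamma$. The real-valued maximum of the left-hand side is attained at $(d^*, A^*) = ((2\Delta + b - 1)/3, (\Delta + 2b + 1)/3)$, and exceeds $\gamma$ by exactly $1/3$. The definition of $b$ (equal to $1$ precisely when $\Delta \equiv 2 \pmod{3}$) is tuned so that neither $d^*$ nor $A^*$ is an integer; a short case analysis on the residue of $\Delta$ modulo $3$ (equivalently, a Hessian-based estimate of the distance from the integer lattice, using the eigenvalues $1$ and $3$ of $-H$) then shows that the integer maximum is exactly $\gamma$. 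This final integer verification is the main obstacle of the proof: without the parity adjustment $b$, the real-valued optimum would lie on the lattice and the ratio $(\Delta + 2)/3$ would not be achievable by this analysis.
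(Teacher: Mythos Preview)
Your proposal is correct and follows essentially the same route as the paper: both derive identical lower bounds on $loan_I(R)$ (namely $A(A-1)$, which coincides with the paper's $i(i-1)$) and upper bounds on $debt_I(R)$, yielding the same quadratic lower bound on $\Phi_I(R)$, whose real minimum is $-1/3$. Your black/white root case split is just a reparameterization of the paper's uniform treatment in $(i,\ell)$.

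The one substantive difference is the final integrality step. The paper observes that the choice of $b$ makes $\gamma$ (and hence the entire bound $F(\Delta,i,\ell)$) an integer for all integer $i,\ell$; thus $F\ge -1/3$ immediately gives $F\ge 0$. Your argument --- that $b$ pushes the critical point $(d^*,A^*)$ off the integer lattice, and that the gap $f(d^*,A^*)-f(d,A)=(d-d^*)^2-(d-d^*)(A-A^*)+(A-A^*)^2$ is at least $1/3$ on the shifted lattice --- is also correct and reaches the same conclusion. One caution: the naive eigenvalue bound $\tfrac{\lambda_{\min}}{2}\|x\|^2$ alone only gives a gap of $1/9$, which is not enough; you really do need the explicit quadratic form (or, equivalently, the residue case check you mention). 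The paper's ``it's an integer'' argument is shorter and avoids this computation entirely, but your version is a valid alternative.
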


%we obtain that $\Phi(S) = (\Delta +b) \cdot \frac{\Delta+2}{3} \cdot |SOL(S)| - 3 \cdot \frac{\Delta+k}{3} \cdot |OPT(S)| \geq 0$, and $\frac{|OPT(S)|}{|SOL(S)|} \leq \frac{\Delta +2}{3}$. Therefore, we want to show for all reductions $R \in S$, $\Phi(R) \geq 0$.

\begin{proof}

Let $\mathcal{R}$ be the set of all possible basic reductions, and let $I$ be a maximum independent set in the input graph.
We note that although there are many types of reductions in $\mathcal{R}$, their structure is highly regular. The idea of the proof is to find the worst type reduction and show that its potential is non-negative. Observe that, if we want to find a reduction $R^\ast$ to minimize the potential, $R^\ast = \arg\min_{R\in \mathcal{R}} \Phi_I(R)$, such reduction intuitively needs more debt edges and vertices in $I$ and less loan edges. Also, if $v^\ast$ is the root of reduction $R$, then for each $v \in V(R) \setminus \{v^\ast\}$, if $d_R(v^\ast) = k$, then $d_R(v)\geq k$, by the greedy rule.
  For any reduction $R$, let $i$ be the number of vertices in $I \cap ground(R)$ and let $\ell$ be the number of vertices in $ground(R) \setminus I$. We have the following formulas:
$$loan_I(R) \geq (i +\ell - 1 -\ell) \cdot i,$$
$$debt_I(R) \leq (\Delta - i -\ell +1 ) \cdot \ell.$$

We will justify these bounds now. Let $G'$ be the current graph just before $R$ is executed. Note first that the degree of the root of $R$ is $i+\ell-1$. The lower bound on $loan_I(R)$ depends on the vertices in $I$, by the definition. By the greedy order, for each of vertex $v \in I$, $d_{G'}(v) \geq i+\ell-1$. There are at most $\ell$ vertices not in $I$ which can be connected to $v$, thus, the total number of loan edges of $v$ is at least $(i+\ell-1-\ell)$, and we have $i$ such vertices. Note that in this argument we have possibly missed all loan edges that are contact edges of $R$ with both end vertices from $ground(R) \setminus I$.
  The upper
bound on $debt_I(R)$ depends on $\Delta$, the degree of the root vertex and the number vertices not in $I$. The number of debt edges is at most $\Delta - i - \ell + 1$, as otherwise it violates the greedy order, and we have $\ell$ vertices not in $I$.
\begin{align}
\Phi_I(R) &= \frac{\Delta+b}{3} \cdot (\Delta +2) -  (\Delta+b) |I \cap ground(R)| + loan_I(R) -debt_I(R)  \nonumber \\
   &\geq \frac{\Delta+b}{3}(\Delta +2) - (\Delta+b) i + (i - 1)i - (\Delta - i -\ell + 1) \ell \nonumber \\
   &= \ell^2 - (\Delta - i + 1) \ell +
   \frac{\Delta+b}{3}(\Delta +2) - (\Delta+b) i +(i-1)i \nonumber
\end{align}

Let $F(\Delta,i,\ell)  = \ell^2 - (\Delta - i + 1) \ell +
   \frac{\Delta+b}{3}(\Delta +2) - (\Delta+b) i +(i-1)i$. Then, the question now is to find the minimum value of $F(\Delta,i,\ell)$ with constrains $\Delta, i,\ell \in \mathcal{Z}^{+} \cup \{0\}$. 
 We will first
prove that 
$F(\Delta,i,\ell) \geq b/3 - b^2/3 - 1/3$ for any $\Delta, i,\ell \in \mathcal{R}^{+} \cup \{0\}$.
For any fixed $\Delta$ and $i$ let us treat the function $F(\Delta,i,\ell)$ as a function of $\ell$. We know that it is a parabola with the global minimum at point $\ell$ such that $\frac{\partial F}{\partial \ell} = 0$, which gives us that $\ell = (\Delta - i+1)/2$.
 Plugging $\ell = (\Delta - i+1)/2$ into $F(\Delta,i,\ell)$, we obtain the following function:
$$ F(\Delta,i,(\Delta - i + 1)/2) = F(\Delta,i)  =
-\frac{1}{4} (\Delta - i+1)^2 
+
   \frac{\Delta+b}{3}(\Delta +2) - (\Delta+b) i +(i-1)i =
$$
$$
= \frac{3}{4}i^2 
- (\Delta/2 + 1/2 + b)i +
   \frac{\Delta+b}{3}(\Delta +2)
-\frac{1}{4} \Delta^2 - \frac{1}{2}\Delta - \frac{1}{4}.
$$

Similarly as above for any fixed $\Delta$, we see that the function $F(\Delta,i)  =
\frac{3}{4}i^2 
- (\Delta/2 + 1/2 + b)i +
   \frac{\Delta+b}{3}(\Delta +2)
-\frac{1}{4} \Delta^2 - \frac{1}{2}\Delta - \frac{1}{4}$
 as a function of $i$ is a parabola with the global minimum for $i$
such that $\frac{\partial F}{\partial i} = 0$, which gives us that $i = \frac{2}{3}(\Delta/2 + 1/2 + b)$.
  Plugging $i = \frac{2}{3}(\Delta/2 + 1/2 + b)$
in $F(\Delta,i)$ we obtain the following:
$$
  F\left(\Delta, \frac{2}{3}\left(\Delta/2 + 1/2 + b\right)\right) = F(\Delta) = b/3 - b^2/3 - 1/3.
$$
From the above we have that $F(\Delta,i,\ell) \geq b/3 - b^2/3 - 1/3$ for any $\Delta, i,\ell \in \mathcal{R}^{+} \cup \{0\}$.

Now, let us observe that if $\Delta \equiv 0, 1 \pmod 3$, then $F(\Delta,i,\ell)$ with $b=0$ is an integer whenever $\Delta, i$ and $\ell$ are integers. This means that in those cases we have
$F(\Delta,i,\ell) \geq -1/3$ which implies that $F(\Delta,i,\ell) \geq 0$. In case when $\Delta \equiv 2 \pmod 3$, we have that $F(\Delta,i,\ell)$ with $b=1$ is an integer whenever $\Delta, i$ and $\ell$ are integers. This again means that in those cases $F(\Delta,i,\ell) \geq -1/3$, again implying $F(\Delta,i,\ell) \geq 0$.
\end{proof}

\begin{cor}[\cite{Halldorsson1997}]\label{main-theorem-upper-bound-delta}
 For \pr{MIS} on a graph with maximum degree $\Delta$, \Greedy achieves an approximation ratio of $\frac{\Delta +2}{3}$.
\end{cor}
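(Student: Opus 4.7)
The plan is to derive the corollary directly by combining Lemma \ref{lemma:apxdelta} with Proposition \ref{prop:sumpotential} and Claim \ref{claim:comparepot}, choosing the independent set $I$ to be an optimum solution $OPT$ of size $\alpha(G)$.

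First, I would fix any execution $\calE = (R_1,\dots,R_k)$ of \Greedy on the input graph $G$, where $k$ is the size of the greedy solution, and instantiate the potential machinery with $I = OPT$ and with the parameters $\gamma = (\Delta+b)(\Delta+2)/3$ and $\sigma = \Delta+b$ chosen in Section \ref{section-general-case}. Lemma \ref{lemma:apxdelta} then gives $\Phi_I(R_i) \ge 0$ for every individual reduction, so by additivity $\Phi_I(\calE) \ge 0$. Since $\Delta - d_G(v) \ge 0$ for every $v \notin I$, Claim \ref{claim:comparepot} yields
\[
\Phi_{G,I}(\calE) \;=\; \Phi_I(\calE) + \sum_{v\notin I}(\Delta - d_G(v)) \;\ge\; 0.
\]

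Next, I would apply Proposition \ref{prop:sumpotential}, which asserts $\Phi_{G,I}(\calE) = \gamma k - \sigma|I|$. Combined with the inequality above, this gives $\gamma k \ge \sigma |I| = \sigma\cdot \alpha(G)$, and hence
\[
\frac{\alpha(G)}{k} \;\le\; \frac{\gamma}{\sigma} \;=\; \frac{\Delta+2}{3}.
\]
Since $k$ is the size of the independent set returned by \Greedy, this is exactly the claimed approximation ratio. The only technical content needed is Lemma \ref{lemma:apxdelta}, which is already established; everything else in this corollary is bookkeeping on the definitions of the (exact) potential.

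The main (and really only) obstacle here is the non-triviality already packaged inside Lemma \ref{lemma:apxdelta}; given that lemma, the corollary is a one-line consequence of the payment-scheme framework. I would make sure to remark that the argument is independent of which optimum $I$ is chosen, and that the decomposition $\Phi_{G,I}(\calE) = \sum_i \Phi_{G,I}(R_i)$ together with the per-reduction bound gives a local-to-global transfer, which is the conceptual reason the proof is so short compared with \cite{Halldorsson1997}.
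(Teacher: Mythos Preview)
Your proposal is correct and follows exactly the route the paper intends: Lemma \ref{lemma:apxdelta} gives $\Phi_I(R_i)\ge 0$, Claim \ref{claim:comparepot} (or the remark preceding it that $\Phi_I$ lower-bounds $\Phi_{G,I}$) passes this to the exact potential, and Proposition \ref{prop:sumpotential} converts it into the ratio bound. The paper states precisely this chain of implications just before Lemma \ref{lemma:apxdelta} and treats the corollary as immediate, so your write-up simply makes the implicit bookkeeping explicit.
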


This theorem implies only an approximation of $5/3$ for sub-cubic graphs. To do significantly better we need a stronger potential, better advice for greedy and a new method of analysis.

\subsection{Warm-up II: Proof of $\frac{\Delta+6}{4}$-ratio for   greedy  on degree-$\Delta$ triangle-free graphs}

Let us use the potential from Subsection \ref{section-general-case}, but with parameters $\gamma = \Delta \cdot \frac{\Delta+6}{4}$ and $\sigma = \Delta$. For any basic reduction $R$ and any independent set $I$ we have:
\begin{equation*}
    \Phi_I(R) = \Delta \cdot \frac{\Delta+6}{4}-\Delta \cdot |I \cap ground(R)|+\emph{loan}_I(R) -\emph{debt}_I(R).
\end{equation*}
 If we prove 
that $\Phi_I(R)\geq 0$ for any reduction $R$, and any independent set $I$, then the approximation ratio of \Greedy on triangle-free graphs with maximum degree $\Delta$ is $\gamma/\sigma=\frac{(\Delta+6)}{4}$. Let us first assume that the root vertex of $R$ is white. Then, in analogy to Subsection \ref{section-general-case} we have:
$$
\emph{loan}_I(R) \geq (i + \ell - 1 - 1 ) \cdot i,
$$
$$
\emph{debt}_I(R) \leq (\Delta - i -\ell +1 ) \cdot \ell. 
$$ Note that the upper bound on \emph{debt} edges is the same, however, the lower bound on \emph{loan} edges is significantly greater than the bound in Subsection \ref{section-general-case}. We will justify this first lower bound. Observe, that for a triangle-free graph, for any reduction $R$, no two middle vertices of $R$ are adjacent. Note first that the degree of the root of $R$ is $i+\ell-1$. We will obtain a lower bound on 
$\emph{loan}_I(R)$ by counting the number of contact edges incident on any middle vertex $v \in I$ of $R$.
By the greedy order, for each such vertex $v$, $d_{G'}(v) \geq i+\ell-1$, where $G'$ is the current graph. 
Because the roof of $R$ was assumed to be white, there is at most one vertex (the root vertex) not in $I$ which can be connected to $v$. Thus, the total number of loan edges of $v$ is at least $(i+\ell-1-1)$, and we have $i$ such vertices. Then we obtain further that:
\begin{align*}
    \Phi_I(R) & \geq \frac{\Delta}{4} (\Delta+6) - \Delta i + (i + \ell -2) i - (\Delta -i - \ell + 1) \ell \\
    & = \ell^2 - (\Delta -2i +1) \ell + \frac{\Delta}{4} (\Delta +6) - \Delta i + (i -2)i.
\end{align*}
Let $F(\Delta,i,\ell) = \ell^2 - (\Delta -2i +1) \ell + \frac{\Delta}{4} (\Delta +6) - \Delta i + (i -2)i$. Then, by using the same approach as in Subsection \ref{section-general-case} we obtain that $F(\Delta,i,\ell) \geq - i + \Delta - \frac{1}{4}$ for any $\Delta,i,\ell \in \mathcal{R}^+ \cup {0}$. 
%
\begin{comment}
We will first prove that $F(\Delta,i,\ell) \geq -i + \Delta - \frac{1}{4}$ for any $\Delta,i,\ell \in \mathcal{Z}^+ \cup {0}$. For any fixed $\Delta$ and $i$ let us treat the function $F(\Delta,i,\ell)$ as a function of $\ell$. We know that it is a parabola with the global minimum at point $\ell$ such that $\frac{\partial F}{\partial \ell} = 0$, which gives us that $\ell = (\Delta - 2i+1)/2$.
 Plugging $\ell = (\Delta - 2i+1)/2$ into $F(\Delta,i,\ell)$, we obtain the following function:
\begin{align*}
    F(\Delta,i,(\Delta-2i+1)/2) &= F(\Delta,i) = -\frac{1}{4}(\Delta - 2i + 1)^{2}+ \frac{\Delta}{4} \cdot (\Delta +6) - \Delta i + (i -2)i \\
    & =  - i + \Delta - \frac{1}{4}
\end{align*}
\end{comment}
%
Let now $d \leq \Delta$ be the degree of the root of $R$. If $d \geq i+1$, then $F(\Delta, i, \ell) \geq 0$ and $\Phi_I(R) > 0$. Suppose now that $d \leq i$, then obviously $d=i$. In such case, because $d=i+ \ell -1$, $\ell = 1$, thus:
\begin{align*}
    \Phi_I(R) &= \frac{\Delta}{4}(\Delta+6) - \Delta d + (d - 1)d \geq \frac{4\Delta - 1}{4} > 0,
\end{align*} where the first inequality follows by the fact the quadratic function is minimized for $d=(\Delta+1)/2$.

Let us now assume that the root of $R$ is black. Noting that $i=1$ and $\ell \leq \Delta$, we obtain:
\begin{align*}
 \Phi_I(R) & = \frac{\Delta}{4} \cdot (\Delta +6) - \Delta \cdot |I \cap ground(R)| + \emph{loan}_I(R) - \emph{debt}_I(R) \\
    & = \frac{\Delta}{4} \cdot (\Delta+6) - \Delta  + 0 - (\Delta - 1  - \ell + 1) \ell = \frac{\Delta^2}{4} + \frac{3}{2}\Delta - \Delta \ell + \ell^2 \geq \frac{3}{2} \Delta,
\end{align*} where the last inequality follows by the fact that the quadratic function is minimized for $\ell=\Delta/2$.
%
\begin{comment}
%%% Precise proof:
Let $F(\Delta,\ell) = \frac{\Delta^2}{4} + \frac{3}{2}\Delta - \Delta \ell + \ell^2$. Similarly as above, we compute $\frac{\partial F}{\partial \ell} = 0$, which gives us that $\ell = \frac{\Delta}{2}$. Plugging $\ell = \frac{\Delta}{2}$ back to $F(\Delta, \ell)$, we obtain the following:

$$F(\Delta,(\frac{\Delta}{2})) = \frac{\Delta^2}{4} + \frac{3}{2}\Delta - \frac{\Delta^2}{2} + \frac{\Delta^2}{4}  = \frac{3}{2}\Delta \geq 0$$
\end{comment}
%
Thus we proved:
\begin{restatable}{theo}{theotrianglefree}\label{theo:trianglefree}
%\begin{theo}
  For \pr{MIS} on a triangle-free graph with maximum degree $\Delta$, any greedy algorithm achieves an approximation ratio of $\frac{\Delta+6}{4}$.
%\end{theo}
\end{restatable}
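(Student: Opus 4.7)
The plan is to apply the potential framework of Section \ref{section-general-case} with parameters $\gamma = \Delta(\Delta+6)/4$ and $\sigma = \Delta$, so that by Proposition \ref{prop:sumpotential} it suffices to show $\Phi_I(R) \geq 0$ for every basic reduction $R$ on a triangle-free graph of maximum degree $\Delta$ and every maximum independent set $I$; the ratio is then $\gamma/\sigma = (\Delta+6)/4$. The improvement over Lemma \ref{lemma:apxdelta} will come from a single observation: in a triangle-free graph the middle vertices of any reduction form an independent set, so a middle vertex inside $I$ has at most one non-$I$ neighbour within $ground(R)$ (namely the root, when it is white). This tightens the count of loan edges emanating from black middle vertices.

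First I would handle the case where the root of $R$ is white. Setting $i = |I \cap ground(R)|$ and $\ell = |ground(R)\setminus I|$, the root has degree $d = i + \ell - 1$, the greedy rule forces every middle vertex to have current degree at least $d$, and the triangle-free refinement gives that each of the $i$ middle vertices in $I$ contributes at least $d - 1 = i + \ell - 2$ loan edges, so $loan_I(R) \geq i(i+\ell-2)$. The debt bound $debt_I(R) \leq \ell(\Delta - i - \ell + 1)$ carries over unchanged from Warm-up I. Substituting yields a quadratic $F(\Delta,i,\ell)$ in $\ell$, and minimising at $\ell = (\Delta - 2i + 1)/2$ reduces it to $\Delta - i - 1/4$. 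This is non-negative whenever $d \geq i+1$, so the only residual sub-case is $d = i$, which forces $\ell = 1$; here I would substitute directly to obtain a one-variable quadratic in $d$ whose minimum on the reals, attained at $d = (\Delta+1)/2$, equals $(4\Delta - 1)/4 > 0$.

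Next I would treat the black-root case, where $i = 1$ and all $\ell \leq \Delta$ middle vertices are white. Here the trivial lower bound $loan_I(R) \geq 0$ is sufficient; the root has current degree $\ell$, so each white middle vertex has current degree at least $\ell$, which gives $debt_I(R) \leq \ell(\Delta - \ell)$. The resulting expression $\Delta^2/4 + 3\Delta/2 + \ell^2 - \Delta\ell$ is a parabola in $\ell$ minimised at $\ell = \Delta/2$ with value $3\Delta/2 \geq 0$. The only delicate point, and the place where I expect the argument to require some care, is the boundary sub-case $\ell = 1$ of the white-root analysis: the continuous minimum of the generic parabola can drop below zero there, so one really must use that the greedy-order constraint forces $d = i$ and then re-do the single-variable optimisation to recover positivity; the remainder of the proof is routine quadratic optimisation exactly parallel to Lemma \ref{lemma:apxdelta}.
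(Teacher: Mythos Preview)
Your proposal is correct and mirrors the paper's proof exactly: the same parameters $\gamma=\Delta(\Delta+6)/4$ and $\sigma=\Delta$, the same case split on the root's colour, the same triangle-free improvement of the loan bound to $loan_I(R)\ge i(i+\ell-2)$ in the white-root case, and the same quadratic minimisations. Even the separate treatment of the boundary sub-case $\ell=1$ (with minimiser $d=(\Delta+1)/2$ and value $(4\Delta-1)/4$) and of the black-root case (with value $3\Delta/2$ at $\ell=\Delta/2$) match the paper verbatim.
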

%\theotrianglefree*
% \begin{theo}
%   For MIS on a triangle-free graph with maximum degree $\Delta$, any greedy algorithm achieves an approximation ratio of $\frac{\Delta+6}{4}$.
% \end{theo}

\ignore{
\begin{align*}
    \Phi_I(R) &= \frac{\Delta+b}{4} \cdot (\Delta +6) - (\Delta +b) |I \cap ground(R)| + \emph{loan}_I(R) - \emph{debt}_I(R) \\
    & \geq \frac{\Delta+b}{4} \cdot (\Delta+6) - (\Delta +b)i + (i + \ell -2)i - (\Delta -i - \ell + 1)\ell \\
    & = \ell^2 - (\Delta -2i +1) \ell + \frac{\Delta+b}{4} \cdot (\Delta +6) - (\Delta +b)i + (i -2)i
\end{align*}
Some calculus..
let $\frac{\partial F}{\partial \ell} =0$, then $0 = 2\ell - (\Delta - 2i + 1)$, then $\ell = (\Delta - 2i + 1)/2$, then 

\begin{align*}
    F(plugging) & = (\Delta - 2i + 1)^{2}/4 -(\Delta - 2i + 1)^{2}/2+ \frac{\Delta+b}{4} \cdot (\Delta +6) - (\Delta +b)i + (i -2)i \\
    & = -bi - i + \frac{3}{2}b + \frac{\Delta b}{4} + \Delta - \frac{1}{4}
\end{align*}
Let $b=0$, then
}

%Given a degree three graph $G$, the \emph{extended graph associated to $G$} is defined as \dots \mathieu{necessary ?}

\section{Subcubic graphs} \label{Section:greedy-subcubic}

The exact potential that we use for subcubic graphs is given by the values $\gamma=5$ and $\sigma=4$. %Given a graph $G$, an independent set $I$ in $G$ and a basic re
The table in Figure \ref{fig:negativepot} shows the potential of several basic reductions for some different independent sets. Unfortunately, as one can see in Figure \ref{fig:negativepot}, there exists reductions with negative potential. The goal of our additional advice for greedy will be to deal with these cases. The first step is to collect some consecutive basic reductions into one \emph{extended reduction} so that the potential of some basic reductions is balanced by others. For instance, one way to deal with the basic reduction 2.d in Figure \ref{fig:basicreductions}, which can have potential $-2$ (see $(a)$ in Figure \ref{fig:negativepot}), is to force \Greedy to prioritize a vertex of degree two with a neighbor with degree three. Therefore, if at some point the reduction 2.d is executed it means that the current graph is a disjoint union of cycles. This allows us to consider that the whole cycle forms an extended reduction --- that we will call as \emph{cycle reduction} --- and we will see later that its potential is now at least $-1$. 
%This rule was first proposed by Halld{\'o}rsson and Yoshihara in \cite{10.1007/BFb0015418}. 
This \emph{advised} greedy algorithm, %with this additional rule
 called \textbf{MoreEdges} in  \cite{10.1007/BFb0015418}, improves the approximation ratio from $5/3$ to $3/2$ in sub-cubic graphs. %Under parameters $\gamma,\sigma = 6,4$.
 This result can easily be proved by using our potential function with parameters $\gamma,\sigma = 6,4$. Such approximation simply follows from the fact that all reductions have now non-negative potential.

\begin{figure}
    \centering
    \includegraphics[width=15cm]{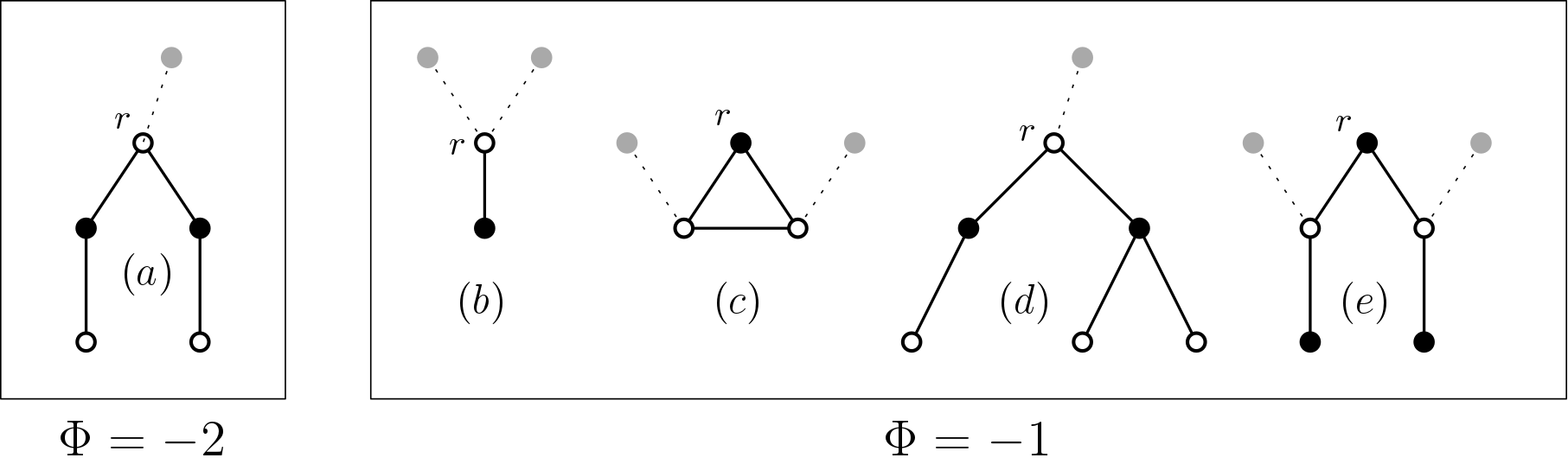}
    \caption{Basic reductions with negative potential. The root vertex of reductions is denoted by the letter $r$. Dotted edges translate the debt of each white vertex. Grey vertex can either be black or white.}
    \label{fig:negativepot}
\end{figure}

An useful observation in order to define an appropriate extended reduction is to notice that the (basic) path reduction %\piotr{When you say reduction 1.b you should always refer to Figure 1.}
(1.b from Figure \ref{fig:basicreductions}) has potential at least zero. %when its ground contains one vertex from the independent set. 
%This implies that picking this kind of reduction does not change the potential of the sequence (except when the ground set does not contain vertices from the independent set). 
This observation suggests to introduce the following notion. Given a graph $G=(V,E)$ we will say that the set $B=\{w,v_1,\dots,v_b,w'\}\subset V$ is a \emph{backbone} if the induced subgraph $G[\{v_1,\dots,v_b\}]$ is a path and if $w$ and $w'$ have both degree three. In this case, $w$ and $w'$ are called the \emph{end-points} of the backbone $B$. Moreover, when $b$ is odd (\resp even), we will say that $B$ is an \emph{even} (\resp \emph{odd}) backbone --- notice the asymmetry --- which corresponds to the parity of the number of \emph{edges} between the end-points. As an example, the ground of the basic even-backbone reductions (2.f and 2.c in Figure \ref{fig:basicreductions}) are special case of an even-length-backbone (of edge-length two).

\subsection{Extended reductions}

An \emph{extended reduction} $\overline{R}=(R_1,\dots,R_s)$ is a sequence of basic reductions $R_i$ of special type that we will precisely describe in the next paragraph. All different extended reductions are summarised in Proposition \ref{prop:listext}.  %\piotr{I think that we need to say more here, that is, this is a sequence of basic reductions of special type -- namely basic edge reductions, right?}. 
To facilitate the discussion, when there is no risk of confusion, we will simply call it a \emph{reduction}. %The set of (extended) reductions that will be considered by the algorithm is defined in the following text.
The \emph{size} of an extended reduction $\overline{R}$, written $\vert\overline{R} \vert$ is the number of executed basic reductions. Its \emph{ground} naturally corresponds to the union of the grounds of its basic reductions, $ground(\overline{R}):= \bigcup_i ground(R_i) $ and its \emph{root} is the same as the root of the first basic reductions. Finally, the \emph{contact vertices} corresponds to all contact vertices of its basic reductions that are not in $ground(\overline{R})$.
 The \emph{degree} of a reduction is the degree of the first executed basic reduction. 
    All basic 
reductions of Figure \ref{fig:basicreductions} except 2.d will be considered as (extended) reductions of size one. In particular, all (extended) reductions of degree one considered by the algorithm have size one. 
 Other considered (extended) reductions of degree two have a ground which is a backbone (except the case of odd-backbone where one end-point is excluded). When the two end-points of the backbone are the same vertex, it corresponds to a \emph{loop reduction}. Otherwise, reductions associated to an even and odd backbone are respectively called \emph{even-backbone} and \emph{odd-backbone} reductions. When these reductions have size at least two, they correspond to a sequence $\overline{R}=(R_1,\dots,R_s)$ of basic reductions where: 
 \begin{itemize}
     \item The first (basic) reduction $R_1$ is 2.e from Figure \ref{fig:basicreductions}. 
     \item Intermediate reductions $R_i$, with $2\le i \le s-1$, are basic path reduction (1.b from Figure \ref{fig:basicreductions}), where the root vertex of $R_i$ is the contact vertex of $R_{i-1}$. 
     \item The final (basic) reduction $R_s$ corresponds to:
     \begin{itemize}
        \item branching (1.c) or path (1.b), when $\overline{R}$ is an \emph{even-backbone} reduction. The case $R_s=$ \emph{path}, occurs when the end-points are adjacent.
        \item path (1.b), when $\overline{R}$ is an \emph{odd-backbone} reduction.
        \item point (0.a) or edge (1.a), when $\overline{R}$ is a \emph{loop} reduction, depending on the parity of the length of the backbone. Recall that the two end-points are identical in this case.
     \end{itemize}
 \end{itemize}
 
 \begin{figure}
     \centering
     \includegraphics[width=13cm]{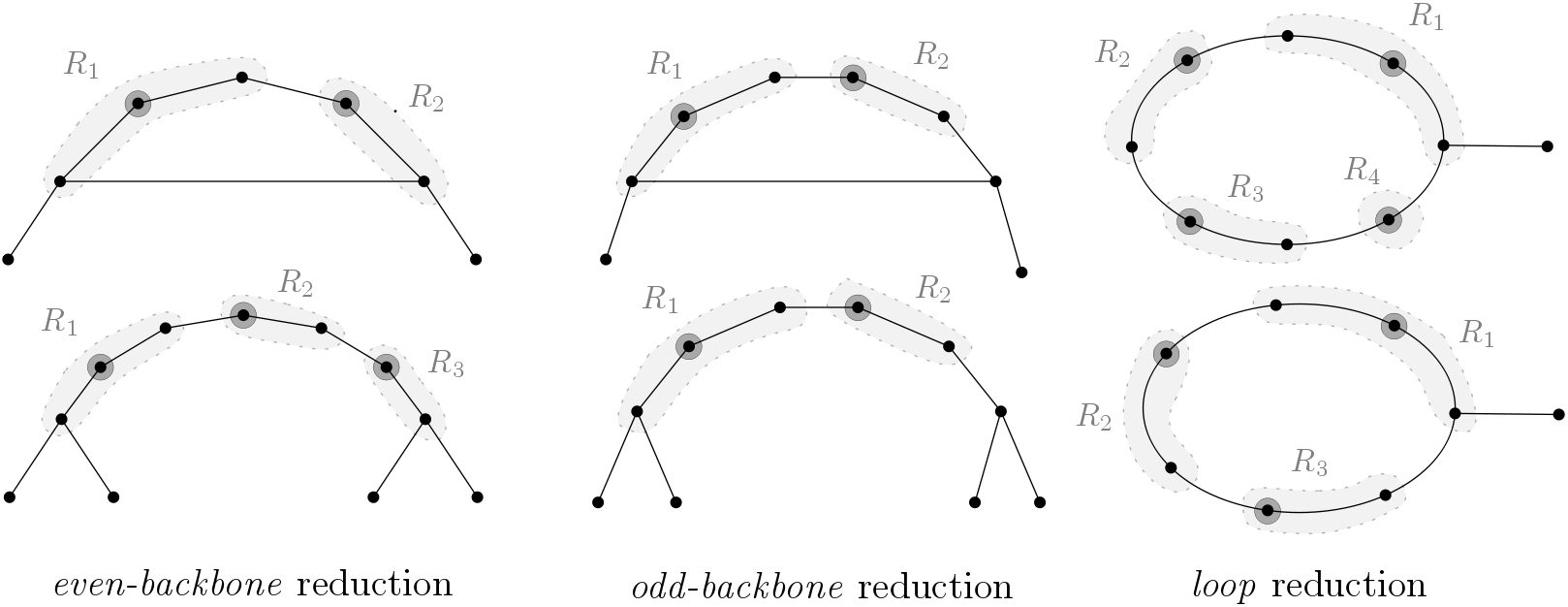}
     \caption{Some examples of (extended) reductions of degree two and size at least two. Light grey areas indicate the ground of each executed basic reduction, that together form the ground of the (extended) reduction. Vertices surrounded by grey rings are the roots of the corresponding basic reductions. }
     \label{fig:extendedred} %\piotr{It looks good, but I think we should not use colors, because people often print papers to read in black and white. Maybe you can use circles around root nodes or some other way.}
 \end{figure}
 
  We give examples of different types of extended reductions of degree two in Figure \ref{fig:extendedred}. Some further remarks are in place here:
  \begin{itemize}
  \item The following basic reductions in Figure \ref{fig:basicreductions} are special case of (extended) reduction of size one.
  \begin{description}
     \item[2.a :] cycle reduction.
      \item[2.c and 2.f :] even-backbone reduction.
      \item[2.e :] odd-backbone reduction. This applies only when the right-most contact vertex $c$ has degree three.
      \item[2.b :] loop reduction.
\end{description}
      \item The root of an even-backbone, odd-backbone or loop reduction is always the neighbor of one of the end-points of the backbone. For even-backbone and loop reduction of even-length, any of the two choices leads to the same solution. In the case of an odd-loop reduction, the size of the solution --- and therefore also the potential --- and the ground of the reduction is exactly the same. For loop and even-backbone reductions, the ground of the reduction is the full backbone. However, for odd-backbone reductions, given one backbone, there are \emph{two distinct} possible root, associated to two distinct ground. For each odd-backbone reduction, only one end-point is contained in the ground. See Figure \ref{fig:extendedred}.% and they may lead to solutions of different sizes. Moreover,  %In order to guarantee a good approximation. \piotr{Here is an incomplete sentence.}
      \item All basic reductions of an extended reduction, except the first one have degree at most one, so that at any given moment, any executed basic reduction has minimum degree in the current graph. This means that we are allowed to execute the full extended reduction without violating our original greedy rule.
     \end{itemize}

In what follows, when we refer to an extended reduction in two different (and equivalent) ways. We will either write its name with the first capital letter or we will write its name with the first lower-case letter followed by the word ``reduction''. Thus, for example, we will say a loop reduction or just \textsf{Loop}, or an even-backbone reduction or just \textsf{Even-backbone}, etc. Note that basic reductions are special case of extended reductions, and therefore they also may follow this convention.

\subsection{Ultimate advices for \Greedy }

We now describe the additional rules used to reach the best possible approximation. This advised greedy algorithm will be called \Greedystar. The first of these rules is to execute basic reductions such that the obtained sequence can be grouped in a sequence of extended reductions as described above. This is justified by Proposition \ref{prop:listext}. This choice is always possible since all basic reductions from Figure \ref{fig:basicreductions}, except 2.d, are special cases of extended reductions. In the case where any minimum degree vertex is the root of a basic reduction 2.d, the graph must be a disjoint union of cycles. In this case we are able to execute \Greedystar so that its execution corresponds to a sequence of cycle reductions. This argument leads to Proposition \ref{prop:listext}. 

% \begin{prop}\label{prop:listext}
% For each sub-cubic graph with minimum degree at most two, it is always possible to execute one of the following (extended) reductions:
% \begin{center}
%     \emph{\textsf{Point} - \textsf{Edge} - \textsf{Path} - \textsf{Branching} - \textsf{Loop} - \textsf{Cycle} - \textsf{Even-backbone} - \textsf{Odd-backbone}.}
% \end{center}
% \end{prop}
%\proplistext*
\begin{restatable}{prop}{proplistext}\label{prop:listext}
For each sub-cubic graph with minimum degree at most two, it is always possible to execute one of the following (extended) reductions:
\begin{center}
    \emph{\textsf{Point} - \textsf{Edge} - \textsf{Path} - \textsf{Branching} - \textsf{Loop} - \textsf{Cycle} - \textsf{Even-backbone} - \textsf{Odd-backbone}.}
\end{center}
\end{restatable}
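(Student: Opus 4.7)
The plan is to proceed by case analysis on the minimum degree $\delta\in\{0,1,2\}$ of the current graph and, when $\delta = 2$, on the local structure around a chosen minimum-degree vertex. The argument is constructive: in each case I exhibit an explicit extended reduction from the list that can be executed.

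When $\delta=0$ the graph has an isolated vertex and we execute \textsf{Point}. When $\delta=1$, pick any vertex $v$ of degree $1$ with its unique neighbor $u$; since $G$ is subcubic, $d(u)\in\{1,2,3\}$, and we execute \textsf{Edge}, \textsf{Path}, or \textsf{Branching} respectively (basic reductions 1.a, 1.b, 1.c). All these are size-one extended reductions, so no further check is needed.

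The substance of the proof is the case $\delta=2$. Pick any vertex $v$ with $d(v)=2$ and grow the induced path $v_1v_2\cdots v_b$ of degree-$2$ vertices containing $v$ as long as possible in both directions (so each $v_i$ has degree $2$ and, for interior $v_i$, both its neighbors lie in the path). Two configurations may occur. Either the walk closes on itself, meaning every vertex of the connected component of $v$ has degree exactly two, in which case this component is a cycle and we execute \textsf{Cycle} on it. Otherwise the walk terminates at two vertices outside the path: a neighbor $w$ of $v_1$ distinct from $v_2$ and a neighbor $w'$ of $v_b$ distinct from $v_{b-1}$. Since $\delta = 2$ neither $w$ nor $w'$ can have degree one, and since the graph is subcubic both must have degree exactly three. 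If $w=w'$ then $\{w,v_1,\dots,v_b\}$ is a loop and we execute \textsf{Loop}; otherwise $\{w,v_1,\dots,v_b,w'\}$ satisfies the definition of a backbone, and we execute \textsf{Even-backbone} or \textsf{Odd-backbone} according to the parity of $b$ (recall the convention after the definition of backbone).

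It remains to verify that the sequence $R_1,\dots,R_s$ of basic reductions constituting one of the \textsf{Loop}, \textsf{Even-backbone}, or \textsf{Odd-backbone} choices is a legal greedy execution, i.e.\ each $R_i$ picks a minimum-degree vertex in the current graph at the time $R_i$ is performed. The first reduction $R_1$ has root $v_1$, a minimum-degree vertex of the starting graph, and matches basic reduction 2.e from Figure \ref{fig:basicreductions} (two middle vertices $w$ and $v_2$, the contact $v_3$ having degree two or three as specified by the type). After executing $R_1$ the vertex $v_3$, if it existed, loses one of its two neighbors and becomes a degree-$1$ vertex; by induction on $i$, each $R_i$ for $2\le i\le s-1$ is then a basic \textsf{Path} reduction rooted at a freshly created degree-$1$ vertex, and the terminating $R_s$ is \textsf{Point}, \textsf{Edge}, \textsf{Path} or \textsf{Branching} depending on how the chain terminates (loop parity, meeting of the two endpoints, or the outer degree of $w'$). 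Thus in each case the greedy priority is respected. The only potential obstacle is the ambiguity in the odd-backbone case where the two candidate roots adjacent to the endpoints of the backbone lead to distinct grounds; but either choice yields an \textsf{Odd-backbone} reduction on the list, which suffices. Together these cases exhaust the possibilities when $\delta\le 2$, proving the proposition.
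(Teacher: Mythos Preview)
Your proof is correct and follows essentially the same approach as the paper's. The paper's justification (the paragraph immediately preceding the proposition) simply observes that every basic reduction in Figure~\ref{fig:basicreductions} except 2.d is the first step of some extended reduction, and that if \emph{every} minimum-degree vertex yields 2.d then the graph is a disjoint union of cycles, whence \textsf{Cycle} applies. Your argument makes this explicit by growing the maximal degree-$2$ path through a chosen vertex and classifying the result (closed $\Rightarrow$ \textsf{Cycle}; coinciding endpoints $\Rightarrow$ \textsf{Loop}; distinct degree-$3$ endpoints $\Rightarrow$ backbone of the appropriate parity), together with a check that the resulting sequence of basic reductions respects the greedy minimum-degree rule. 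This is the same idea spelled out in more detail. One very minor imprecision: for the size-one cases ($b=1$ with $w\neq w'$, or $b=2$ with $w=w'$) the first basic reduction is 2.c/2.f or 2.b rather than 2.e, but these are already size-one extended reductions so no further verification is needed there.
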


%In the following, we will say that the \textsf{Point}, \textsf{Edge} and \textsf{Cycle} are \emph{closed} reductions, in the sense that they do not have any contact vertices. \mathieu{this definition should not be here}

%Then, these closed reductions can be treated independently from remaining other connected components. 

\paragraph{\Greedystar order.}
%Since the approximation ratio of the basic greedy algorithm in sub-cubic graphs is $5/3$, if one wants to improve this approximation ratio, she will need to find additional rules when there exists several vertices of minimum degree. The first rule will be to pick and execute one (extended) reduction as described above. 
When several choices of reductions are possible, \Greedystar will have to select one with the \emph{highest priority}, according to the following \emph{order} from the highest to lowest priority:
\begin{enumerate}
    \item \textsf{Point}, \textsf{Edge}, \textsf{Path}, \textsf{Branching}, \item \textsf{Cycle} or \textsf{Loop},
    \item \textsf{Even-backbone},
    \item {\sf Odd-backbone}.
\end{enumerate}

Any two reductions among the first group or any two reductions among the second group can be arbitrarily executed first, as soon as both have the minimum degree. We say that a reduction is a \emph{priority reduction} if there exists no reduction in the same graph with strictly higher priority. Thus a priority reduction is one of the highest priority reductions in the current graph. One implication of this order is that when an \textsf{Even-backbone} is executed, it means that the current graph does not contain any degree one vertices, or any loop reduction. Additionally, when the priority reduction is \textsf{Odd-backbone}, the graph does not contain any \textsf{Even-backbone}. These structural observations will be useful later.

%\nann{Additional rules for \tf{Branching}, A10}

When the priority reduction is an even-backbone or an odd-backbone reduction, \Greedystar applies the following two additional rules.

%\paragraph{\Greedystar order.}

%The order is \dots \mathieu{see above}

%We say the $\overline{R}$ is the \emph{priority reduction } if \dots

% \begin{claim}
%     According to greedy rule, basic reduction $2.d$ only appear in cycle.
% \end{claim}

\paragraph{\textsf{Even-backbone} rule.}

Suppose that the priority reduction in the current graph $G$ is the even-backbone reduction, and several choices are possible.
%, and we have to decide between several  As we observed, $G$ does not contain any degree one or loop reductions. %  --- but may contain some Odd-backbone reductions. 
%In this situation, we may have the choice between different Even-backbone reductions. 
Unfortunately, picking arbitrarily one of these reductions can lead to a solution with poor approximation ratio. For instance, consider the graph $H_{i}$ in Figure \ref{fig:lowerbound}, highly inspired by \cite{10.1007/978-3-540-27796-5_5}. 

It turns out that the difficulty comes from the fact that executing an even-backbone reduction can split the graph into several connected components, each of them having a negative potential. To address this issue, we want to make sure that we are able to ``control'' the potential of almost all of these connected components. This right choice, followed by \Greedystar, is given by the following lemma. For any reduction $\overline{R}$ in a graph $G$ we will 
say that $\overline{R}$ \emph{creates} connected components $H_1,\dots, H_s$ if they are the connected components of the graph $G[V\setminus ground(\overline{R})]$. Intuitively, it suffices to execute an even-backbone reduction $\overline{R}$ such that all other even-backbone reductions are all present in the same connected component created by $\overline{R}$.

\begin{figure}
    \centering
    \includegraphics[width=15cm]{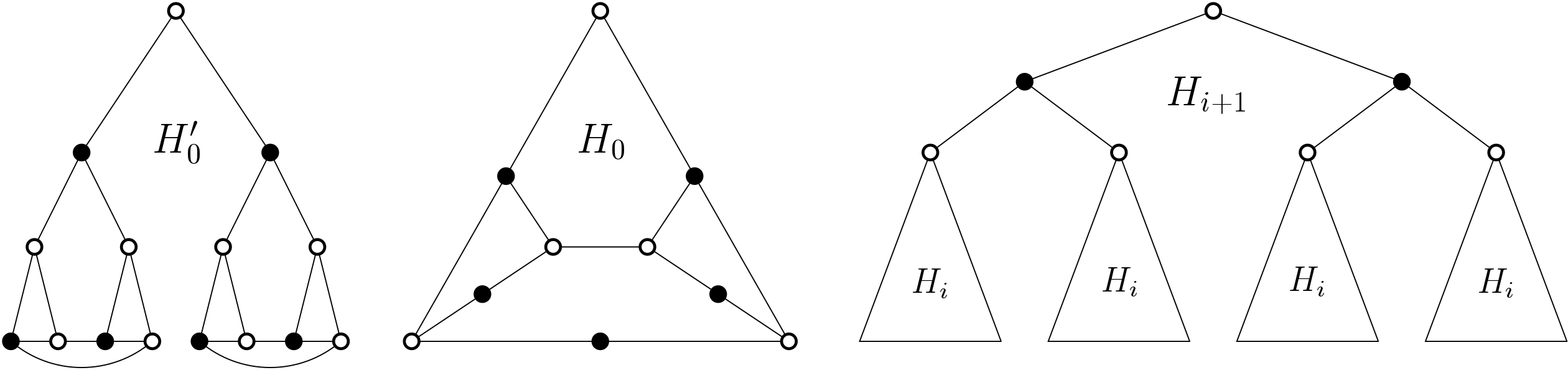}
    \caption{For any $i\ge 0$, the highest priority reduction in $H_i$ is \textsf{Even-backbone}, and picking recursively the top vertex leads to a solution where the approximation ratio tends to $17/13>5/4$ when $i$ tends to infinity, when $H_0$ is used as the base gadget. However, if we use $H_0'$ as the base gadget, because the greedy choice of the reduction is essentially unique at each stage, these instances show that {\em any} greedy algorithm has an approximation guarantee that tends to $5/4$ when $i$ tends to infinity. This second family is due to Halld{\'o}rsson and Yoshihara \cite{10.1007/BFb0015418}.}
   \label{fig:lowerbound}
\end{figure}

\begin{restatable}{lem}{lemrulefor}\label{lem:rulefor26}
Let $G$ be a connected graph, with no degree one vertices, and no loop reduction. Let $\calB=\{\overline{R}_1,\dots, \overline{R}_p\}$ be the set of all even-backbone reductions in $G$. Each even-backbone reduction $\overline{R}_i$ has two root vertices $r_i$ and $r_i'$. In the case when $\overline{R}_i$ has only one root, we set $r_i=r_i'$.  Then, there exists one even-backbone reduction, say $\overline{R}_1$ that satisfies the following property. Let $H_1,\dots, H_t$ be the connected components created by $\overline{R}_1$, with $1\le t\le4$. Then either $t=1$, or $t \geq 2$ and then the following is true.
If there exist $r_i, r_j$ for some $i, j \geq 2$ and $i \not = j$, such that $r_i \in V(H_1)$ and $r_j \in V(H_2)$ (in words: $r_i$ and $r_j$ belong to two different connected components among $H_1,\dots, H_t$), then at least one of $r_i, r_i', r_j, r_j'$ is a contact vertex of $\overline{R}_1$.
\end{restatable}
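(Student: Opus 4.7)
The plan is to recast the lemma as a purely combinatorial statement on a topological multigraph $T$ associated to $G$. Take the vertices of $T$ to be the degree-three vertices of $G$ and the edges of $T$ to be the backbones of $G$, each labelled \emph{even} or \emph{odd} according to the parity of its edge-length. Under our hypotheses (no degree-one vertex, no loop reduction), $T$ is a connected cubic multigraph without self-loops. Each even-backbone reduction $\overline{R}_i \in \calB$ corresponds bijectively to an even edge $e_i = \{w_i, w_i'\}$ of $T$, and its potential roots $r_i, r_i'$ are the internal vertices of the backbone adjacent to $w_i$ and $w_i'$ respectively (with $r_i = r_i'$ when the backbone has only one internal vertex).

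Writing $e^* = \{w_1, w_1'\}$ for the edge associated to a candidate $\overline{R}_1$, one verifies that $r_i$ is a contact vertex of $\overline{R}_1$ precisely when $w_i \in V(e^*)$, and that whenever $V(e_i) \cap V(e^*) = \emptyset$ the connected component of $G \setminus ground(\overline{R}_1)$ containing $r_i$ is identified with the component of $T - V(e^*)$ containing $w_i$ (and analogously for $r_i',w_i'$). Backbones sharing one or both endpoints with $e^*$ automatically make at least one of their roots a contact vertex of $\overline{R}_1$, so the lemma reduces to the following claim about $T$: \emph{there exists an even edge $e^* = \{w, w'\}$ of $T$ such that all even edges of $T$ disjoint from $\{w, w'\}$ lie inside a single connected component of $T - \{w, w'\}$.}

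I would prove this combinatorial claim by extremality. For every even edge $e$ of $T$ and every connected component $C$ of $T - V(e)$, let $N(e, C)$ denote the number of even edges of $T$ having both endpoints in $C$. Choose a pair $(e^*, C^*)$ maximizing $N(e, C)$ and, for a contradiction, assume some even edge $e_i$ with $V(e_i) \cap V(e^*) = \emptyset$ lies in a component $C' \neq C^*$ of $T - V(e^*)$. Let $C^*_{\mathrm{new}}$ be the component of $T - V(e_i)$ containing $w_1$. Because $V(e_i) \subseteq C'$ while $C^*$ is disjoint from $C'$, no vertex of $V(e_i)$ lies in $C^* \cup V(e^*)$; moreover $C^*$ is attached to $V(e^*)$ in $T$ by at least one contact edge which is preserved in $T - V(e_i)$, so $C^* \cup V(e^*) \subseteq C^*_{\mathrm{new}}$. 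The surviving even edge $e^*$ is then counted by $N(e_i, C^*_{\mathrm{new}})$ in addition to all even edges previously counted by $N(e^*, C^*)$, yielding $N(e_i, C^*_{\mathrm{new}}) \geq N(e^*, C^*) + 1$ and contradicting maximality.

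The hard part will be making the translation between $G$ and $T$ fully rigorous in the presence of multi-edges, namely handling backbones of length one (where $r_i = r_i'$), backbones sharing one or both endpoints with $e^*$ (producing contact vertices among their roots or ``dangling'' components in $G \setminus ground(\overline{R}_1)$ for parallel backbones), and the degenerate case where $T$ has only two vertices so that $T - V(e^*)$ is empty and the condition is vacuously satisfied. Once these special cases are addressed, the extremal swap argument closes the proof in a few lines.
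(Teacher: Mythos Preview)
Your approach is correct and genuinely different from the paper's. The paper does not form the full topological multigraph $T$; instead it contracts only the \emph{even} backbones to single degree-two vertices $r_i$, fixes an arbitrary degree-three vertex $a$, and picks $r_1$ to maximise the shortest-path distance $d_{\widetilde G}(r_1,a)$. It then argues that any $r_j$ landing in a component not containing $a$ must, by a simple distance comparison, be adjacent in $\widetilde G$ to one of the two endpoints $w_1,w_1'$ of the chosen backbone, which translates back to one of $r_j,r_j'$ being a contact vertex of $\overline R_1$.

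So the two proofs share the ``contract backbones'' idea but diverge in how they select $\overline R_1$: the paper uses a farthest-from-$a$ rule, while you use an extremal swap on the count $N(e,C)$. Your argument is slightly cleaner combinatorially and makes the sufficient condition (all vertex-disjoint even edges live in one component of $T-V(e^*)$) explicit. The paper's argument buys something you lose, however: it is directly algorithmic. A single BFS from $a$ in $\widetilde G$ yields the right $\overline R_1$ in $O(|V|)$ time, which the paper actually needs for its running-time claims. Your maximisation of $N(e,C)$ over all pairs is of course polynomial, but recovering a linear-time selection rule from it would require extra work. The bookkeeping you flag (length-one backbones, parallel backbones, the two-vertex case) is genuine but routine, and your handling of it is sound.
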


\begin{proof}(Lemma \ref{lem:rulefor26})
Let $a\in V(G)$ be any degree three vertex. %Compute an Depth-first-search tree on $G$ rooted in $a$, and choose the vertex $r_i$ 
Consider a graph $\widetilde{G}$ obtained from $G$ by replacing each backbone from $r_i$ to $r_i'$ by a single degree two vertex which is also called $r_i$.
On this \emph{contracted} graph, let $d_{\widetilde{G}}(u,v)$
denote the shortest path distance (\ie with minimum number of edges) between vertices $u,v \in V(\widetilde{G})$ in $\widetilde{G}$. Now, let us pick the root $r_i$ in $\widetilde{G}$ that has the largest distance $d_{\max} := \max_i d_{\widetilde{G}}(r_i,a)$ from $a$. Without loss of generality this is $r_1$ : $d_{\widetilde{G}}(r_1,a)=d_{\max}$. Denote by $H_1,\dots, H_t$ the connected components created after executing the corresponding even-backbone reduction $\overline{R}_1$. At most one connected component, say $H_1$, contains $a$. Suppose that there is another connected component, i.e., $H_2$, that contains a vertex $r_j$. Any path from $r_j$ to $a$ intersects $ground(\overline{R}_1)$, including the shortest one, and $d_{\widetilde{G}}(r_j,a)=d_{\widetilde{G}}(r_1,a)=d_{\max}$. It follows that $r_j$ and $r_1$ have one common neighbor $b$, so that $d_{\widetilde{G}}(r_1,r_j)=2$. In particular, in the original graph $G$, $r_1$ (or $r_1'$) is at distance two from $r_j$ (or $r'_j$) which means that this vertex is a contact vertex of $\overline{R}_1$. 
\end{proof}

Notice that this proof is constructive and allows us to find the appropriate \textsf{Even-backbone} in time $\calO(\vert V\vert)$.

%\nann{wonder if put no even-backbone-cycle} 

%\nann{Might be just need to say here directly that if they are close, then find the appropriate \tf{Even-backbone} in time $\calO(1)$}.

\paragraph{\textsf{Odd-backbone} rule.} 

%We force even-backbone and twin reductions to create the smallest number of connected components if possible.
Suppose now that the priority reduction is the odd-backbone reduction. In this case, \Greedystar chooses the one that was created latest. More formally, suppose that we are given a partial execution $\overline{R}_1, \dots, \overline{R}_j$ in a graph $G$ such that the priority reduction in $G_j$ is an odd-backbone reduction, where $G_i$ is the subgraph of $G$ obtained from $G$ after the execution of $\overline{R}_1, \dots, \overline{R}_i$, for $i=1,\ldots,j$. We associate to each vertex $v$ of degree two a \emph{creation time} $t_v\in \{0,\dots, j-1\}$, such that $t_v$ is the greatest integer such that $v$ had degree three in $G_{t_v - 1}$. Moreover, if $v$ had already degree two in the original graph $G$, then set $t_v=0$. When $t_v\ge1$, this means that $v$ was a contact vertex of $t_v$-th reduction. Then, the \emph{creation time} of an (odd) backbone $B$ is the greatest creation time over all vertices of degree two in $B$.

\Greedystar picks the odd-backbone reduction that has the greatest creation time, among all possible odd-backbone reductions. If several choices are possible, it can pick any of them.

%\nann{A11}

We believe that this rule is not necessary in the sense that it does not improve the approximation ratio. However, this rule makes the algorithm easier to analyse. Intuitively, with this rule, we can not have several successive reductions with negative potential within the same connected component. %This implies that at any time 
%\mathieu{maybe it would be nice to detail more} 

\paragraph{Rule for cubic graphs.} When the input graph is cubic, \ie each vertex has degree exactly three, then the first reduction has degree three. However, this is the only degree three reduction executed during the whole execution since there will always be a vertex with degree at most two. In such a situation, we guess the first degree three vertex to pick, so that the potential of the associated execution is positive. By guessing, we mean choosing {\em any single fixed} vertex $u$ and then trying all \emph{four} executions, each starting from a vertex in the closed neighborhood of vertex $u$. We show later that the first step can only increase the total potential of the whole sequence. After this step, all reductions have degree at most two, and therefore in the following sections, we will always consider graphs with at least one vertex of degree at most two.

%An odd-backbone reduction is \emph{problematic in $I$} when \dots 

\begin{algorithm}[H]
%\SetAlgoVlined
\SetAlgoLined
% \AlgoDisplayBlockMarkers\SetAlgoBlockMarkers{begin}{end}%
% \SetAlgoNoEnd
% \SetKwData{Left}{left}\SetKwData{This}{this}\SetKwData{Up}{up}
% \SetKwFunction{Union}{Union}\SetKwFunction{FindCompress}{FindCompress}
% \SetKwInOut{Input}{input}\SetKwInOut{Output}{output}

 \KwIn{a graph $G$}
 \KwOut{an independent set $S$ in $G$}% and a set $S_{add}$ of additional disks }
 %a real number $maxi\ge (1-\vep) \opt$}
 
 $S\leftarrow\emptyset$\;
 
 \If{all vertices have degree three}{%(rule for cubic graphs)\; 
 Let $u$ be any vertex.\;
 
 Execute \emph{four} times the while loop (line \ref{line:while}) starting with $S=\{v\}$ and $G=G\setminus N_G[v]$, for all $v\in N_G[u]$ and output the maximum size solution.\;
    }
    
    \While{$G\neq \emptyset$}{\label{line:while}
       \If{the \emph{priority reduction} (according to \Greedystar order) in $G$ is \textsf{Even-backbone}}{
          Let $\overline{R}$ be the \textsf{Even-backbone} given by Lemma \ref{lem:rulefor26} (\textsf{Even-backbone} rule).\;
        }
        \If{the \emph{priority reduction} in $G$ is \textsf{Odd-backbone}}{
            Let $\overline{R}$ be the latest created \textsf{Odd-backbone} (\textsf{Odd-backbone} rule).\;
        }
        \Else{
            Let $\overline{R}$ be any priority reduction.\;
        }
        Let $v_1,\dots,v_s$ denote the roots of the basic reductions of $\overline{R}$.\;
        
        $S\leftarrow S\cup \{v_1,\dots,v_s\}$\;
        
        $G\leftarrow G \setminus ground(\overline{R})$
     
    }

 \Return S;
 \caption{ - \Greedystar}
 \label{algo:greedy}
\end{algorithm}

It is clear that the set returned by Algorithm \ref{algo:greedy} is an independent set. In the next section, we establish its approximation ratio.

%\subsection{Analysis with the following rule of 2,5}

\subsection{Analysis of the approximation ratio}

%\theoratio*
\begin{restatable}{theo}{theoratio}\label{theo:ratio}
\Greedystar is a $5/4$-approximation algorithm for \pr{MIS} in subcubic graphs.
\end{restatable}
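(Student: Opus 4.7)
My plan is to apply the potential-function framework with parameters $\gamma = 5$ and $\sigma = 4$ to a maximum independent set $I = OPT$ in $G$, together with the execution $\calE$ of \Greedystar. By (the natural generalization to extended reductions of) Proposition \ref{prop:sumpotential}, if $k$ denotes the number of basic reductions executed -- which equals $|S|$, the size of the output -- then $\Phi_{G,I}(\calE) = 5k - 4|I|$, and by Claim \ref{claim:comparepot} we have $\Phi_{G,I}(\calE) \ge \Phi_I(\calE)$. Hence it suffices to show $\Phi_I(\calE) \ge 0$, which yields $|S| \ge \tfrac{4}{5}|OPT|$ and the claimed $5/4$-approximation.

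The central step is to invoke the Inductive Low-debt Lemma \ref{lemma:induction}, applied to each connected component at each stage of the execution. The lemma will establish that the aggregate potential of the sub-execution restricted to any single connected component is at least $-1$, and that whenever precisely $-1$ is attained the first reduction in that component is forced into a very specific graph structure captured by Definition \ref{df:problematic}. Leveraging the lemma requires each of the \Greedystar rules to do its work: the priority ordering (\textsf{Point}/\textsf{Edge}/\textsf{Path}/\textsf{Branching} before \textsf{Cycle}/\textsf{Loop} before \textsf{Even-backbone} before \textsf{Odd-backbone}) ensures that a lower-priority negative-potential reduction is executed only when none of the higher-priority ones exist; the \textsf{Even-backbone} rule of Lemma \ref{lem:rulefor26} prevents a single even-backbone execution from splitting the graph into multiple components that each still contain problematic even-backbones far from the contact vertices; and the \textsf{Odd-backbone} latest-creation rule prevents two consecutive negative-potential reductions from accumulating within the same connected component.

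The remaining $-1$ deficit is absorbed by a $+1$ surplus coming from the very first reduction executed on the (connected) input graph. When $G$ has minimum degree at most two, a small case analysis over the possible first reductions -- which necessarily have zero debt, since no prior reductions have occurred -- shows that $\Phi_I$ of that first reduction is at least $\gamma+1 = 6$, providing the required unit of credit. For cubic $G$, the guessing step in Algorithm \ref{algo:greedy} enumerates all four vertices of $N_G[u]$ for a fixed vertex $u$; because $I$ is a maximal independent set, at least one of these four vertices is black, and starting the execution from that black vertex turns the initial degree-$3$ reduction into one with non-negative potential, after which the remaining graph is subcubic with minimum degree at most two and the previous analysis applies. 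Taking the maximum over the four runs gives $|S| \ge \tfrac{4}{5}|OPT|$.

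The main obstacle is the Inductive Low-debt Lemma \ref{lemma:induction} itself, together with the characterization of problematic graphs. Its proof is a delicate induction on the execution, because the underlying payment scheme is inherently non-local: a loan edge created by a bad reduction may correspond to a debt edge of a good reduction that will only be executed far in the future, and the mapping from bad to good reductions must be unique. At each step of the induction one must produce, for any reduction whose immediate potential cannot be made non-negative locally, either a strictly higher-priority reduction elsewhere in the current graph (contradicting the \Greedystar ordering, and hence ruling the bad configuration out), or a structural witness -- a black or white reduction in the sense of Definition \ref{df:type} -- that guarantees a compensating future reduction. Carrying out this case analysis while preserving the $\ge -1$ invariant across component splits is where the bulk of the technical work resides, and is the step I expect to require the most care.
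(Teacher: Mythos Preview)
Your overall framework and use of Lemma~\ref{lemma:induction} are on target, but the mechanism you propose for absorbing the remaining $-1$ contains a genuine error.

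You write that for the first reduction, ``which necessarily [has] zero debt, since no prior reductions have occurred,'' one has $\Phi_I \ge \gamma + 1 = 6$. This confuses the two potentials. Zero debt holds only for the \emph{exact} potential $\Phi_{G,I}$; the potential $\Phi_I$ always charges maximal debt $\sum_{u \in ground(R)\setminus I}(3 - d_{G_{i-1}}(u))$ regardless of history. And even the exact potential of the first reduction need not be large: if the first reduction is a bad odd-backbone (white root of degree $2$, both neighbours black, three white contact vertices), then $\Phi_{G,I}(R_1)=5-8+3-0=0$, while $\Phi_I(R_1)=-1$. So no extra credit can be extracted from the first reduction \emph{per se}, and since Lemma~\ref{lemma:induction} already includes the first reduction's $\Phi_I$ in its bound, you would be double-counting anyway.

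The paper closes the gap differently. It uses part~(\ref{prob}) of Lemma~\ref{lemma:induction}: whenever $\Phi_I(\calE)=-1$, either the first reduction is a bad \textsf{Odd-backbone} (whose root is a \emph{white} vertex of degree~$2$), or $G$ is potentially problematic (hence contains a white-type reduction, whose root is again white of degree $\le 2$). In either case $G$ has a white vertex $v$ with $d_G(v)\le 2$, so the correction term in Claim~\ref{claim:comparepot},
\[
\Phi_{G,I}(\calE)-\Phi_I(\calE)=\sum_{v\notin I}(3-d_G(v)),
\]
is at least $1$, and $\Phi_{G,I}(\calE)\ge 0$ follows. Thus the $+1$ does not come from zero debt of the first reduction but from the \emph{structural conclusion} of the Inductive Low-debt Lemma combined with the gap between the two potentials.

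For the cubic case your idea is essentially correct, but the paper uses one more trick you did not mention: after picking the black vertex $v\in N_G[u]$, it treats the first reduction $R^*$ as contributing \emph{no} loan edges. This forces $\Phi_{G,I}(R^*)=5-4+0-0=1$ exactly, and at the same time means the created components inherit no debt from $R^*$, so each is a ``fresh'' subcubic graph with a vertex of degree $\le 2$ to which the non-cubic argument above applies verbatim. Without this trick you would have to track the loan/debt interface between $R^*$ and the components, which is unnecessary work.
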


% \begin{theo}\label{theo:ratio}
% Algorithm \ref{algo:greedy} (\Greedystar\hspace*{-2mm}) is a $\dfrac{5}{4}$-approximation algorithm for \pr{MIS} in subcubic graphs.
% \end{theo}

Let $\calE=(\overline{R_1},\dots,\overline{R_\ell})$ is a sequence of extended reductions performed by \Greedystar on an input graph $G$. In order to analyse the approximation ratio of \Greedystar, we use our potential function in sub-cubic graphs ($\Delta=3$) with parameters $\gamma,\sigma= 5,4$. Given an independent set $I$ in $G$, the \emph{potential} of an (extended) reduction $\overline{R}$ is
\begin{equation*}
    \Phi_I(\overline{R}) = 5 \cdot |\overline{R}| - 4 \cdot |I\cap ground(R)| + \emph{loan}_I(R) -\emph{debt}_I(R).
\end{equation*}
  We start by
looking at the potential of (extended) reductions.

%\algo{Greedy-${\bf 3}$}

%\algo{Greedy-${\bf \Delta}$}

\subsubsection{Potential of extended reductions} 

%\claimpotentialextended*
\begin{restatable}{claim}{claimpotentialextended}\label{claim:potentialextended}
For any independent set $I$ we have the following potential estimates for the reductions:
\begin{center}
      \begin{tabular}{r|c|l}
    $\Phi_I(\textsf{Edge})\ge -1$  & $\Phi_I($\textsf{Path}$)\ge 0$  & $\Phi_I(\textsf{Point})\ge 1$ \\
       $\Phi_I(\textsf{Cycle})\ge -1$  & $\Phi_I(\textsf{Loop})\ge 0$  & $\Phi_I(\textsf{Branching})\ge 1$ \\
       $\Phi_I($\textsf{Odd-backbone}$)\ge -1$  &  $\Phi_I($\textsf{Even-backbone}$)\ge 0$ & 
  \end{tabular}
  \end{center}
\end{restatable}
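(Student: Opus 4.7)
My plan starts by establishing the decomposition $\Phi_I(\overline R) = \sum_{i=1}^s \Phi_I(R_i)$ for any extended reduction $\overline R = (R_1, \ldots, R_s)$. The terms $5s$ and $-4|I \cap \mathrm{ground}(\overline R)|$ split immediately over the basic reductions, so it remains to check that the internal contact edges contribute zero net to the combined loan and debt. An edge $uv$ between $\mathrm{ground}(R_i)$ and $\mathrm{ground}(R_j)$ with $i<j$ necessarily has $u$ a middle of $R_i$ and $v$ a contact vertex of $R_i$ (by the distance-two convention); it contributes $+1$ to $\mathrm{loan}_I(R_i)$ iff $v$ is white, and in that case also contributes $+1$ to $\mathrm{debt}_I(R_j)$ because $v$ loses an edge between $G_0$ and $G_{j-1}$, so the two contributions cancel.

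For each size-one reduction---\textsf{Point}, \textsf{Edge}, \textsf{Path}, \textsf{Branching}, basic \textsf{Cycle} (2.a), basic \textsf{Loop} (2.b), basic \textsf{Even-backbone} (2.c or 2.f), basic \textsf{Odd-backbone} (2.e)---I enumerate the finitely many colourings of root, middle(s), and contact vertices, using that a black middle forces every adjacent contact to be white. Since degrees are fixed in each case, $\mathrm{debt}_I(R)$ is just $\sum_{u \text{ white in ground}}(3 - d_{G_{i-1}}(u))$ and the minimum $\mathrm{loan}_I(R)$ equals the number of contact edges incident to black middles. A short case table confirms all eight bounds. Crucially, the only colouring that gives $\Phi_I(\text{2.e}) = -1$ is the \emph{bad colouring}, in which the root is white and both middles are black; this will play a central role below.

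For the size-$\ge 2$ reductions \textsf{Even-backbone} and \textsf{Odd-backbone}, the structure is $R_1 = \text{2.e}$, then basic \textsf{Path}s $R_2, \ldots, R_{s-1}$, then $R_s$ equal to \textsf{Branching} or \textsf{Path}. The key cascade step is: if $\Phi_I(R_1) = -1$ then the bad colouring forces the degree-two middle $v_2$ to be black, so the unique contact of $v_2$---which, by definition of the extended reduction, is exactly the root of $R_2$---is white. A basic \textsf{Path} with a white root has $\Phi_I \ge +1$ (a quick case check). Therefore whenever $R_1$ costs $-1$, $R_2$ repays $+1$, and together with $\Phi_I(R_i) \ge 0$ for all remaining basic reductions this gives $\Phi_I(\overline R) \ge 0$ for \textsf{Even-backbone} and for \textsf{Odd-backbone} of size $\ge 2$. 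The advertised tightness of the bound $-1$ for \textsf{Odd-backbone} is realised only by the basic 2.e case.

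The main obstacle is \textsf{Loop}, whose terminal $R_s$ may be an \textsf{Edge} with $\Phi_I = -1$, so both $R_1$ and $R_s$ can simultaneously cost $-1$. Here I use a two-ended cascade: the bad colouring of $R_1$ forces the unique degree-three vertex $w$ on the loop to be black and hence its cycle-neighbour $v_b$ to be white; a bad $R_s = \textsf{Edge}$ forces its root $v_{b-1}$ black and middle $v_b$ white, whence $v_{b-2}$ is white. Propagating the left cascade through the interior \textsf{Path}s, each contributes $\ge +1$ as long as the alternating pattern ``even-indexed middle black, odd-indexed root white'' persists; but since $v_{b-2}$ must be white, the pattern breaks at some $k^* \le s-1$, and at that step the interior \textsf{Path} $R_{k^*}$ has both its root (white by the left cascade) and its middle (white by forcing) white, giving $\Phi_I(R_{k^*}) \ge +3$. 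The bookkeeping
\[
\Phi_I(\overline R) \;\ge\; -1 + (k^* - 2)\cdot 1 + 3 + 0\cdot(s - 1 - k^*) + (-1) \;=\; k^* - 1 \;\ge\; 1
\]
finishes the generic case, and the small cases $s=2$ (namely $b \in \{3,4\}$) together with the sub-cases where only one of $R_1, R_s$ is bad are handled by hand. Making this meeting-point argument rigorous across all parities of $b$ will be the technical crux.
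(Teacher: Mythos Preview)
Your decomposition $\Phi_I(\overline R)=\sum_i\Phi_I(R_i)$ is correct, and so is the plan of enumerating the size-one cases. The argument breaks at the cascade step: the assertion ``a basic \textsf{Path} with a white root has $\Phi_I\ge +1$'' is false. In \textsf{Path} (1.b) with white root $r$ (degree~$1$, so $\mathrm{debt}_r=3-1=2$) and \emph{black} middle $m$ (forcing the unique contact $c$ white, so $\mathrm{loan}=1$), one gets
\[
\Phi_I = 5 - 4\cdot 1 + 1 - 2 = 0,
\]
not $\ge 1$. This is precisely the worst case the paper records in Figure~\ref{fig:01pot}(a). Consequently the ``repayment'' you attribute to $R_2$ does not happen: the alternating colouring propagates with every interior \textsf{Path} at potential exactly $0$, and an \textsf{Odd-backbone} of size $\ge 2$ can genuinely have $\Phi_I=-1$ (the paper's Claim~\ref{techclaim:od} characterises this). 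So your claim $\Phi_I(\textsf{Odd-backbone})\ge 0$ for size $\ge 2$ is wrong, though the required bound $\ge -1$ is of course immediate from $\Phi_I(R_1)\ge -1$ and $\Phi_I(\textsf{Path})\ge 0$.

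The same miscount undermines your \textsf{Even-backbone} and \textsf{Loop} arguments as written. For \textsf{Even-backbone} the fix is much simpler than a cascade: when the endpoints are non-adjacent the terminal reduction is a \textsf{Branching}, and $\Phi_I(\textsf{Branching})\ge 1$ unconditionally, so $-1+0+\cdots+0+1\ge 0$. This is exactly the paper's one-line argument. For \textsf{Loop}, your meeting-point idea can be repaired (the interior \textsf{Path}s contribute $0$, the both-white \textsf{Path} contributes $\ge 2$, and the parity of the cycle forces such a break before a bad terminal \textsf{Edge} can occur), but the paper's route is cleaner: it computes $\Phi_I(\textsf{Cycle})\ge -1$ directly from $b\le\lfloor n/2\rfloor$, and then observes that a \textsf{Loop} is the same cycle with one extra incident edge at the degree-$3$ vertex $w$, which either decreases $\mathrm{debt}$ by $1$ (if $w$ is white) or increases $\mathrm{loan}$ by $1$ (if $w$ is black, forcing the contact white), hence $\Phi_I(\textsf{Loop})\ge\Phi_I(\textsf{Cycle})+1\ge 0$. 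The same trick handles \textsf{Even-backbone} with adjacent endpoints.

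Finally, you never treat \textsf{Cycle} of size $\ge 2$ (cycles of length $\ge 4$, whose first basic reduction is 2.d); your size-one enumeration only covers the triangle 2.a. The paper's direct computation $\Phi_I(\textsf{Cycle})=5\lfloor n/2\rfloor-4b-w\ge -1$ handles all lengths at once.
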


% \begin{claim}\label{claim:potentialextended}
% For any independent set $I$ we have the following potential estimates for the reductions:
% \begin{center}
%       \begin{tabular}{r|c|l}
%     $\Phi_I(\textsf{Edge})\ge -1$  & $\Phi_I($\textsf{Path}$)\ge 0$  & $\Phi_I(\textsf{Point})\ge 1$ \\
%       $\Phi_I(\textsf{Cycle})\ge -1$  & $\Phi_I(\textsf{Loop})\ge 0$  & $\Phi_I(\textsf{Branching})\ge 1$ \\
%       $\Phi_I($\textsf{Odd-backbone}$)\ge -1$  &  $\Phi_I($\textsf{Even-backbone}$)\ge 0$ & 
%   \end{tabular}
% \end{center}

% \end{claim}

For basic reductions, one can easily check by hand all possible cases. Notice that the worst case potential always arises when $I$ is maximum in the ground of a reduction. Figure \ref{fig:negativepot} presents these worst cases for basic reductions: \textsf{Edge}, \textsf{Cycle} and \textsf{Odd-backbone}. Figure \ref{fig:01pot} shows the worst potential cases of the remaining basic reductions: \textsf{Path}, \textsf{Even-backbone}, \textsf{Loop}, \textsf{Point}, and \textsf{Branching}. From the worst case potential of basic reductions, we can lower-bound the potential of (extended) reductions.

\begin{proof}(Claim \ref{claim:potentialextended})
It remains to prove lower-bounds for reductions of arbitrary size. See Figure \ref{fig:cycles}. An odd-backbone reduction is a sequence of basic reductions starting with 2.e (in Figure \ref{fig:basicreductions}), which has a potential at least $-1$ (Figure \ref{fig:negativepot} (d)), and a certain number of path reductions, with potential at least zero (Figure \ref{fig:01pot} (a) and (b)), so that the total potential is at least $-1$. More generally, the potential of an extended reduction is lower-bounded by the sum of the potentials of the first and the last basic reduction. For \textsf{Even-backbone} with non-adjacent backbone end-points, these first and last basic reductions are 2.e ($\Phi\ge-1$) and \textsf{Branching} ($\Phi\ge 1$, see Figure \ref{fig:01pot} (g) and (h))  so that the sum is non-negative. 

Consider now a cycle reduction $\overline{R}$ of length $n\ge 3$. Let us denote by $b$ and $w$, respectively, the number of black and white vertices, \ie $b=\vert I\cap ground(\overline{R}) \vert$ and $b+w=n$. Since \Greedy is optimal in degree at most two graphs and the size of $I$ is at most $\left \lfloor \dfrac{n}{2}\right\rfloor$, we have:

%The number of black and white vertices are respectively $\lfloor \frac{n}{2} \rfloor$ and $\lceil \frac{n}{2} \rceil$, so that 
\begin{equation}\label{eq:cycle}
    \Phi_I(\overline{R})=5\left\lfloor \frac{n}{2} \right\rfloor - 4b - w = 5\left\lfloor \frac{n}{2} \right\rfloor - 3b - n
\ge 5\left\lfloor \frac{n}{2} \right\rfloor - 3\left\lfloor \frac{n}{2} \right\rfloor-  n  \ge -1.
\end{equation}

For \textsf{Loop} and \textsf{Even-backbone} with adjacent end-points, simply observe that their ground is a cycle with one or two additional edges. Each of these edges is either a debt edge --- the loan increases by one --- or the corresponding middle white vertex has now degree three --- the debt decreases by one. In any case, the potential increases by the number of added edges, so that we proved what we wanted.
\end{proof}

\begin{figure}
    \centering
    \includegraphics[width=12cm]{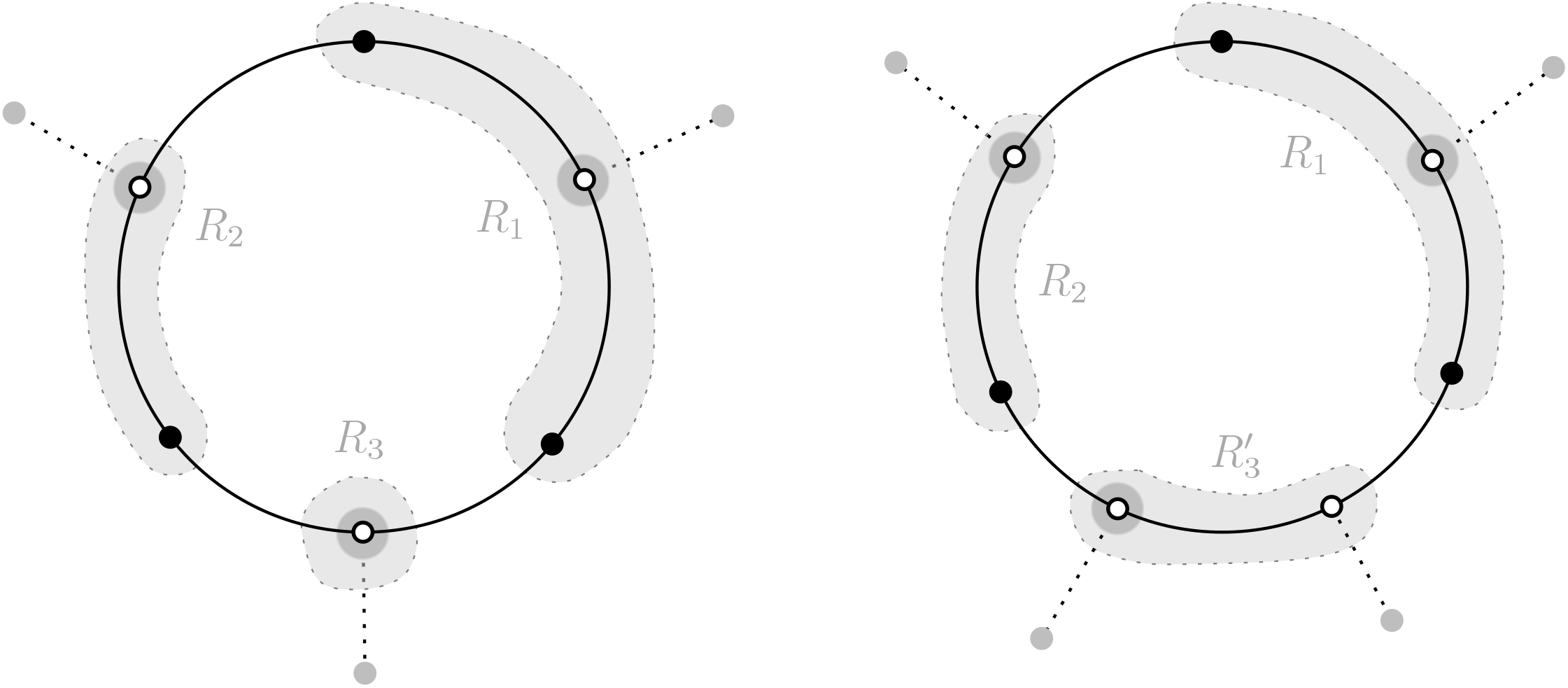}
    \caption{A \emph{cycle} reduction. On the left, a even cycle with potential $\Phi_I(R_1,R_2,R_3)= -2+0+2=0
    $ and on the right a odd cycle with potential $-1$. Vertices surrounded by a grey disk are the ones picked by the algorithm and dotted edges are debt edges. }
    \label{fig:cycles}
\end{figure}

Notice that the worst potential of \textsf{Cycle} and \textsf{Loop} depends on the length of the ground and the worst case corresponds to odd-length cycles. Moreover, notice that when its two end-points are adjacent, the potential of an \textsf{Even-backbone} is at least one.

% \mathieu{some more comments ?}

% \piotr{Maybe a small instance on a figure with a 5-cycle or 7-cycle could be nice?}

% \nan{I will write the proof for this claim and more precise version for different middle vertices.}\mathieu{I can do it, thanks, maybe let me reformulate the others lemmas first}

\begin{figure}
    \centering
    \includegraphics[width=16cm]{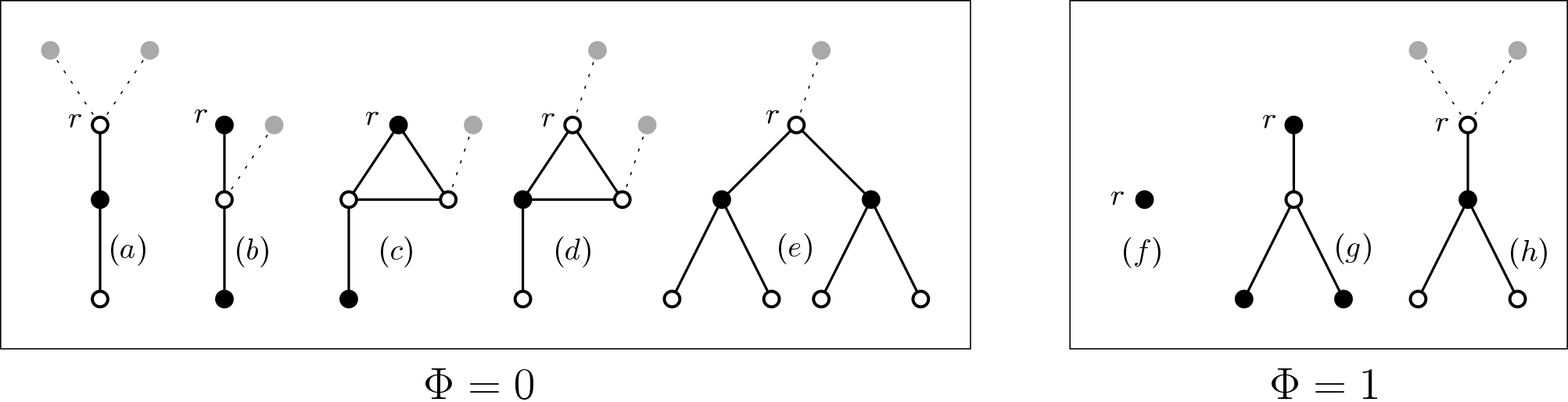}
    \caption{Basic reductions with worst potential equal to $0$ or $1$. Dotted edges translate the debt of each white vertex. Grey vertex can either be black or white.}
    \label{fig:01pot}
\end{figure}

\subsubsection{Proof of the inductive lemma}

%\begin{lem}\label{lemma:induction}
%Let $G$ be a connected graph with minimum degree two,  $I$ an independent set in $G$ and %$\calE=(\overline{R_1},\dots,\overline{R_\kappa})$ an execution on $G$. 
%\begin{description}
%\item[(a)] If $G$ is a problematic graph, then $\Phi_I(\calE)\ge -1$.
%\item[(b)] Otherwise, $\Phi_I(\calE) \ge 0$
%\end{description}
%\end{lem}

%A \emph{bad} odd-backbone reduction is \dots and otherwise it is a \emph{good} odd boackbone.
%\begin{df}
%Given a graph $G$ and an independent set $I\subseteq V(G)$, a reduction $R$ is of \emph{type black} (\resp \emph{type white}) if \dots 
%\end{df}

As we have seen before, we are able to avoid reduction with potential $-2$ (Figure \ref{fig:negativepot} (a)) by grouping this basic reduction with the following ones, so that the resulting (extended) reduction, a cycle reduction, has now only potential $-1$. Unfortunately, we can not use the same trick to avoid reductions with potential $-1$. Moreover, such reductions: edge, (odd) cycle and odd-backbone reductions can not be avoided if we want to respect the original greedy constraint which is to pick a vertex with minimum degree. 

In order to prove that \Greedystar delivers a $5/4$-approximation, we show that the \emph{exact potential} of any execution is non-negative. Unfortunately, it is not true that the \emph{potential} of any execution is non-negative. For instance, picking the top vertex of $H_0$ in Figure \ref{fig:lowerbound} produces an execution with potential $-1$. Hopefully, we will prove that $-1$ is the worst value for the potential of any execution.

A potential problem may arise when an execution creates a lot of connected components where each corresponding execution has negative potential. This could possibly lead to an execution with arbitrarily negative potential. Such connected components might be created by reductions having many contact vertices, such as the even-backbone reduction. Our \textsf{Even-backbone} rule was designed to keep the potential of the created connected components under control, ensuring that at most one (or two) of them have negative potential. 

This suggests that to solve our problem we could try to characterize the type of graphs that can have negative potential, \ie, for which there exists an execution with negative potential. Finding such a characterization seems difficult but hopefully, since we know which reduction's potential is $-1$, we are able to formulate a necessary condition together with a suitable induction hypothesis. In the following, we describe this class of graphs called \emph{potentially problematic} graphs. Additionally, all executions starting with a negative potential reduction, namely the \emph{bad} odd-backbone reduction, can possibly have a negative potential. Notice that 'potentially' refers to the potential function and at the same time to the fact that there exist some executions on some potentially problematic graphs with non-negative potential.

Given a graph $G$ and an independent set $I$ in $G$, recall that a vertex $v \in V(G) \cap I$ from $I$ is {\em black} and {\em white} otherwise. We say that a backbone $B=\{w, v_1,\dots, v_b, w'\}$ is {\em alternating for} $I$ (or simply {\em alternating}) when $v_i$ is black (or white) if and only if $i$ is even. See Figure \ref{fig:helpproof} for an example of an odd alternating backbone. Notice that there is no restriction on the types of the end-points of the backbone.

%Given a path $P$ of adjacent degree two vertices in graph $G$ (the end vertices of this path can have degree $3$), we call the independent set $I \cap V(P)$ {\em alternating} if for any edge in $E(P)$ one of its end vertices is black and the other is white.

% \piotr{This below is an attempt to write an introduction to white/black reductions in Definition 8:}

The next definition of the white and black type reductions is absolutely crucial to our proof. The main intention of this definition is that the black and white reductions should have the greedy priority strictly higher than that of an odd-backbone reduction, and in many cases, also higher than that of an even-backbone reduction. However we must be very careful here because what will matter will be in some sense a ``parity'' of the reductions. Namely, the first observation is that the potential of an odd-backbone reduction, let us call it $\ol{R}$, can be $-1$ in the case when all its contact vertices are white (see Figure \ref{fig:negativepot} (d)). Therefore when such a reduction $\ol{R}$ is executed first, then we need to ``pay'' for it by showing that the potential of the following reductions in the connected components it creates will in total be zero (as this is the only way of keeping the total value of the potential to be at least $-1$). What we now want is that if a potential of a connected component $H$ created by $\ol{R}$ is negative, then because $\ol{R}$ has only white contact vertices in $H$, even {\em before} $\ol{R}$ is executed, $H$ must contain a reduction with higher priority than $\ol{R}$, thus leading to a contradiction. Our definition below will ensure this by guaranteeing that such a reduction in $H$ exists with a black root vertex %\piotr{Is it really a root vertex that is black?} (because white vertices are ``blocked''
in $H$ by $\ol{R}$'s contact edges).

A kind of a ``dual'' such high priority reduction is also needed in $H$. Imagine namely that the first executed reduction $\ol{R}$, that created component $H$, is also an odd-backbone reduction but its contact vertices are all black. Then they will ``block'' the black vertices in $H$. But then our definition below still guarantees an existence of a reduction in $H$, before executing $\ol{R}$, which has a white root vertex %\piotr{Is it really a root vertex that is white?}
and has priority strictly higher than $\ol{R}$.

The third possibility is when $\ol{R}$ is an odd-backbone reduction with a white and black contact vertex. This is not a problem because the potential of such a reduction is non-negative or even positive. 
It turns out that we can deal with $\ol{R}$ being an even-backbone reduction in a different way.

There are some further technicalities and details and they can be read in the details of our full proof.

% \mathieu{i'm not sure if this is the right introduction to black/white vertices, but I don't have any ideas, have you ?}

% \piotr{I have started to write some introduction to black/white reductions above -- please check and rewrite if needed.}

\begin{restatable}[Black and white type reductions]{df}{dftype}\label{df:type}
Given a graph $G$ and an independent set $I$ in $G$, we define the black or white type reductions in $G$ by the following rules:\\ {\bf(1)} Any path reduction or branching reduction in $G$ with black root and white middle vertex (\resp with white root vertex and black middle vertex) is a black (\resp white) type reduction.\\ %\mathieu{and the independent set is st the potential is minimum} \piotr{I agree, and changed it!}
{\bf(2)} Any Loop reduction $\overline{R}$ in $G$, where $I \cap ground(\overline{R})$ is a maximum independent set in $ground(\overline{R})$, whose both root vertices are white (\resp whose at least one root vertex is black) is called a white (\resp black) type reduction.\\
{\bf(3)} Any even-backbone reduction $\overline{R}$ in $G$, with an alternating backbone, whose both root vertices are white (\resp black) is called a white (\resp black) type reduction.
%\item Any even-backbone reduction $\overline{R}$ in $G$, where $I \cap ground(\overline{R})$ is alternating in $ground(R)$, whose both root vertices are white (\resp black) is called a white (\resp black) type reduction.
\end{restatable}

%\dftype*

% \begin{df}[Black and white type reductions]\label{df:type}
% Given a graph $G$ and an independent set $I$ in $G$, we define the black or white type reductions in $G$ by the following rules:
% \begin{itemize}
% \item Any path reduction or branching reduction in $G$ with black root and white middle vertex (\resp with white root vertex and black middle vertex) is a black (\resp white) type reduction. %\mathieu{and the independent set is st the potential is minimum} \piotr{I agree, and changed it!}
% \item Any Loop reduction $\overline{R}$ in $G$, where $I \cap ground(\overline{R})$ is a maximum independent set in $ground(\overline{R})$, whose both root vertices are white (\resp whose at least one root vertex is black) is called a white (\resp black) type reduction.
% \item Any even-backbone reduction $\overline{R}$ in $G$, with an alternating backbone, whose both root vertices are white (\resp black) is called a white (\resp black) type reduction. \nann{A2}
% %\item Any even-backbone reduction $\overline{R}$ in $G$, where $I \cap ground(\overline{R})$ is alternating in $ground(R)$, whose both root vertices are white (\resp black) is called a white (\resp black) type reduction.
% \end{itemize}
% \end{df}

We note here that the black and white reductions correspond to the worst case of the potential (see Figures \ref{fig:negativepot} and \ref{fig:01pot}).
We also say that an an odd-backbone reduction $\overline{R}$ is {\em bad for $I$} (or simply {\em bad}) if $\Phi_I(\overline{R}) = -1$, see Figure \ref{fig:negativepot} (d). More generally, given an independent set $I$, we will say that a reduction $\ol{R}$ is \emph{bad} when its potential is minimized by $I$, \ie $\Phi_I(\ol{R})=\min\{\Phi_{I'}(\ol{R}), I' \text{ independent set}\}$. %\mathieu{define it precisly}

Notice that an odd-backbone reduction does not appear in the definition of the black and white type reductions. Recall that our intention was that any black/white reduction has a strictly higher priority than \textsf{Odd-backbone}.  

%\dfproblematic*

\begin{restatable}{df}{dfproblematic}\label{df:problematic}
Let $G$ be a connected graph with minimum degree at most $2$ and $I$ an independent set in $G$. We say that $G$ is \emph{potentially problematic for} $I$ (or just \emph{potentially problematic}), if either:\\
{\bf{(1)}} $G$ is an odd cycle or an edge and $I$ is maximum in $G$. \\
{\bf{(2)}}  or, there exists one reduction of black type \emph{and} one reduction of white type in $G$.
\end{restatable}

% \begin{df}\label{df:problematic}
% Let $G$ be a connected graph with minimum degree at most two and $I$ an independent set in $G$. We say that $G$ is \emph{potentially problematic for} $I$ (or simply \emph{potentially problematic}), if either 
% \begin{itemize}
%     \item $G$ is an odd cycle or an edge and $I$ is maximum in $G$.
%     \item or, there exists one reduction of black type \emph{and} one reduction of white type in $G$.
% \end{itemize}

% \end{df}

Notice that \textsf{Cycle} (and also \textsf{Edge}) reduction has potential $-1$ if and only if its ground has odd-length and $I$ is maximum (Claim \ref{claim:base-case}). %This can be deduced easily from the proof of Claim \ref{claim:potentialextended}. However, we will \piotr{What is this``However, we will ...'' ?}
In this situation, we will call these graphs \emph{bad} cycle and \emph{bad} edge.

The following Lemma states that any execution has a potential always at least $-1$. We prove this result by induction together with a necessary condition characterising executions with minimum potential. 

%\lemmainduction*
\begin{restatable}[Inductive low-debt Lemma]{lem}{lemmainduction}\label{lemma:induction}
%\begin{lem}[Inductive low-debt Lemma]\label{lemma:induction}
Let $G$ be a connected graph with minimum degree at most two, $I$ an independent set in $G$ and $\calE=(\overline{R}_1,\dots,\overline{R}_\kappa)$ an execution on $G$. Then %Let $A$ be the set set of vertices with degree one and two in $G$ that are not in a problematic 2,5. $G$ is uniform if $A\subseteq I$ or $A\subseteq V\setminus I$ and mixed otherwise. $G$ is problematic if $A=\empty$ or $G$ is mixed. 
\vspace{-5pt}
\begin{enumerate}[(1)]
\itemsep0em 
\item $\Phi_I(\calE)\ge -1$ \label{-1}
\item \label{prob} If $\Phi_I(\calE) = -1$, then either

   \begin{enumerate}[(a)]
    \vspace{-5pt}
    \itemsep0em 
        \item \label{prob25} The first reduction $\overline{R}_1$ is a bad odd-backbone reduction.
        \item \label{prob26} or, $G$ is potentially problematic.
    \end{enumerate}
 
\end{enumerate}
%\end{lem}
\end{restatable}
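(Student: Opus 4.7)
The plan is to prove statements (\ref{-1}) and (\ref{prob}) simultaneously by strong induction on the number of vertices of $G$ (equivalently, on the length $\kappa$ of the execution $\calE$). The base cases are graphs that consist of a single extended reduction. For these, Claim \ref{claim:potentialextended} gives exactly what is needed: every reduction has potential at least $-1$, and the only basic reductions attaining $-1$ are a bad \textsf{Edge}, a bad (odd) \textsf{Cycle}, or a bad \textsf{Odd-backbone}. The first two instances put $G$ in condition (1) of Definition \ref{df:problematic}, and the third is precisely case (\ref{prob25}).

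For the inductive step, let $\overline{R}_1$ be the first reduction of $\calE$ and let $H_1,\dots,H_t$ be the connected components of $G\setminus ground(\overline{R}_1)$, with $\calE_j$ the induced sub-execution on $H_j$. By the induction hypothesis, $\Phi_I(\calE_j)\ge -1$, and hence
\[
\Phi_I(\calE) \;=\; \Phi_I(\overline{R}_1) + \sum_{j=1}^t \Phi_I(\calE_j) \;\ge\; \Phi_I(\overline{R}_1) - t.
\]
To prove (\ref{-1}) it suffices to show $\Phi_I(\overline{R}_1)\ge t-1$ whenever $\overline{R}_1$ is not a bad odd-backbone. The crucial quantitative ingredient is that the number of components $t$ is bounded by the number of contact edges of $\overline{R}_1$, and each such contact edge either contributes to $loan_I(\overline{R}_1)$ (if it leads to a white vertex) or otherwise sits on a component $H_j$ whose first reduction has its own loan boost. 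A case analysis over the type of $\overline{R}_1$ (\textsf{Path}, \textsf{Branching}, \textsf{Loop}, \textsf{Cycle}, \textsf{Even-backbone}, \textsf{Odd-backbone}), refining the potential bounds from Claim \ref{claim:potentialextended}, will then yield $\Phi_I(\overline{R}_1) \ge t-1$. If $\overline{R}_1$ is a bad odd-backbone, then $\Phi_I(\overline{R}_1)=-1$ and the inequality forces $\Phi_I(\calE_j)=-1$ for some $j$ only when we are already in case (\ref{prob25}).

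To prove (\ref{prob}), I would assume $\Phi_I(\calE)=-1$ and rule out the non-trivial case step by step. Suppose $\overline{R}_1$ is not a bad odd-backbone. Then the inequality above is tight, which forces each $\Phi_I(\calE_j)$ to match exactly the budget left by $\overline{R}_1$; in particular, every $\calE_j$ that is not value-$0$ must itself be value $-1$ and, by induction, either starts with a bad odd-backbone or lives in a potentially problematic $H_j$. The heart of the argument is to translate this inductive information, together with the structure of the contact edges of $\overline{R}_1$ landing in each $H_j$, into the existence of a black type and a white type reduction inside $G$ prior to executing $\overline{R}_1$. Concretely, in a potentially problematic $H_j$ the black/white reductions guaranteed by Definition \ref{df:problematic} get ``extended'' by the adjacent middle/contact vertices of $\overline{R}_1$, and one checks that the resulting reduction in $G$ is still a valid black or white type reduction (and of strictly higher priority than $\overline{R}_1$ whenever $\overline{R}_1$ is an odd-backbone). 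If both colors can be produced in $G$, we land in case (\ref{prob26}); if only one color arises, its higher greedy priority contradicts the fact that $\overline{R}_1$ was chosen first.

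The main obstacle will be handling $\overline{R}_1 = \textsf{Even-backbone}$ that splits $G$ into several components each with value $-1$. Here the straightforward bookkeeping is insufficient: one must invoke Lemma \ref{lem:rulefor26} so that at most one of the $H_j$ containing a ``foreign'' backbone root can host a problematic sub-execution without being itself pinned by a contact edge of $\overline{R}_1$, and then exploit the alternating-backbone hypothesis in the definition of black/white type to match the parities on the two sides of $\overline{R}_1$. A symmetric but easier argument handles $\overline{R}_1=\textsf{Odd-backbone}$ using the \textsf{Odd-backbone} rule, which prevents chaining bad odd-backbones within the same component and thereby ensures that any second bad odd-backbone arising in some $\calE_j$ must have been already a priority reduction in $G$, again contradicting greedy. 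The delicate parity and priority bookkeeping across these cases is where the proof is genuinely hard, and where the careful definitions of black and white type reductions pay off.
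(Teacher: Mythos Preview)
Your overall architecture matches the paper's: induction on $\kappa$, decompose $\Phi_I(\calE)=\Phi_I(\overline{R}_1)+\sum_j\Phi_I(\calE_j)$, case-split on the type of $\overline{R}_1$, and use the black/white machinery together with greedy priority to rule out bad configurations. But the concrete strategy you state for part (\ref{-1}) contains a genuine gap.

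You write that it suffices to show $\Phi_I(\overline{R}_1)\ge t-1$ whenever $\overline{R}_1$ is not a bad odd-backbone, and that a case analysis will yield this. That inequality is \emph{false} in key cases. A white \textsf{Even-backbone} has $\Phi_I(\overline{R}_1)=0$ (Claim \ref{techclaim:eb}) while it may create up to $t=4$ components; an \textsf{Odd-backbone} with $\Phi_I(\overline{R}_1)=0$ (all-black contacts, Claim \ref{techclaim:od}) may create $t=3$. So the bound $\Phi_I(\overline{R}_1)\ge t-1$ simply cannot be established, and the naive estimate $\Phi_I(\calE)\ge\Phi_I(\overline{R}_1)-t$ is too weak. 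The paper never tries to prove this inequality. Instead, it bounds the \emph{number of components $H_j$ with $\Phi_I(\calE_j)=-1$}: for \textsf{Even-backbone} with monochromatic contacts, Lemma \ref{lem:rulefor26} together with Claim \ref{claim:induction} forces any second potentially problematic $H_j$ to already contain a higher-priority reduction in $G$, a contradiction, so at most one $H_j$ is at $-1$; for mixed contacts, $\Phi_I(\overline{R}_1)\ge 2$ and at most two components can be at $-1$; for \textsf{Odd-backbone} with $\Phi_I\in\{-1,0\}$, Claim \ref{claim:induction} plus the \textsf{Odd-backbone} rule show that \emph{no} $H_j$ can be at $-1$. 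This is the mechanism you only gesture at in your last paragraph; it is not a refinement of your stated strategy but a replacement for it.

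A second, smaller gap is in your treatment of the bad \textsf{Odd-backbone} case. You say ``the inequality forces $\Phi_I(\calE_j)=-1$ for some $j$ only when we are already in case (\ref{prob25})'', but this is backwards: when $\Phi_I(\overline{R}_1)=-1$ you must prove that \emph{every} $\Phi_I(\calE_j)\ge 0$ to get $\Phi_I(\calE)\ge -1$ at all, and being in case (\ref{prob25}) does not help with (\ref{-1}). The paper handles this by the same priority-contradiction argument: since all contacts of a bad \textsf{Odd-backbone} are white (Claim \ref{techclaim:od}), any $H_j$ satisfying the inductive alternative (\ref{prob25}) or (\ref{prob26}) yields, via Claim \ref{claim:induction}, a black reduction already present in $G$ with strictly higher priority than $\overline{R}_1$, contradicting the greedy order.
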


% \begin{lem}[Inductive low-debt Lemma]\label{lemma:induction}
% Let $G$ be a connected graph with minimum degree at most two, $I$ an independent set in $G$ and $\calE=(\overline{R}_1,\dots,\overline{R}_\kappa)$ an execution on $G$. Then %Let $A$ be the set set of vertices with degree one and two in $G$ that are not in a problematic 2,5. $G$ is uniform if $A\subseteq I$ or $A\subseteq V\setminus I$ and mixed otherwise. $G$ is problematic if $A=\empty$ or $G$ is mixed. 
% \begin{comment}
% \begin{enumerate}
% \item $\Phi_I(\calE)\ge -1$ \label{-1}
% \item If $\Phi_I(\calE) = -1$, then one of the following cases occur :
%     \begin{enumerate}
%         \item \label{bad25} $\overline{R_1}$ is a bad odd-backbone reduction
%         \item\label{2types} There exist two problematic reductions on $G$ different than \textsf{Even-backbone}, such that one has its root vertex in $I$ and the other one not.
%     \end{enumerate}
% \item If $R_1$ is a good odd-backbone, then $\Phi_I(\calE)\ge 0$
% \end{enumerate}
% \end{comment}
% \begin{enumerate}[(1)]
% \item $\Phi_I(\calE)\ge -1$ \label{-1}
% \item \label{prob} If $\Phi_I(\calE) = -1$, then either
%     \begin{enumerate}[(a)]
%         \item \label{prob25} The first reduction $\overline{R}_1$ is a bad odd-backbone reduction.
%         \item \label{prob26} or, $G$ is potentially problematic.
%     \end{enumerate}
 
% \end{enumerate}
% \end{lem}

Along the proof of Lemma \ref{lemma:induction} we will refer to several technical claims proved in Section \ref{section:technical} about extended reductions.

\begin{proof}(of Lemma \ref{lemma:induction}) 
%
%-> associate a reduction to its root vertex.
We prove this result by induction on the number $\kappa$ of extended reductions in the execution $\calE$.
  First, if 
$\kappa=1$, then since $G$ is connected, the reduction is a terminal reduction, \ie point, edge or cycle reduction, and by Claim \ref{claim:potentialextended} we know that their potential is at least $-1$.  Moreover, if the potential is exactly $-1$, it is not difficult to see from the proof of Claim \ref{claim:potentialextended} that the reduction is either an edge or odd cycle reduction, which are potentially problematic graphs. For a detailed proof of this fact see Claim \ref{claim:base-case}.

Suppose now that $\calE$ consists of $\kappa\ge 2$ extended reductions. We will treat all cases depending on the first extended reduction $\overline{R}_1$. We denote its root and the contact vertices by letters $r$ and $c_i$, with $1\le i \le 4$. In all these cases we will apply the induction hypothesis to each connected component of the graph after executing the first reduction. 
%To make the proof fluid, let us introduce some vocabulary and notations. the 
Recall that we say that $\overline{R}_1$ \emph{creates} connected components $H_1,\dots, H_s$ if they are connected components of the graph $G[V\setminus ground(\overline{R}_1)]$. Here, $s$ with $1\le s\le 4$, denotes the number of connected components created by $\overline{R}_1$. Reductions executed by \Greedystar in distinct connected components are independent. Therefore, without loss of generality we can assume that each execution on $H_i$ corresponds to a sub-execution $\calE_i$ of $\calE$ so that $\calE=(\overline{R}_1,\calE_1,\dots,\calE_s)$. Notice that according to Proposition \ref{prop:sumpotential}, we have
\begin{equation*}\label{eq:sum}
    \Phi_I(\calE)=\Phi_I(\overline{R}_1)+\Phi_I(\calE_1)+\dots+\Phi_I(\calE_s).
\end{equation*}

 We first prove hypothesis (\ref{-1}) and (\ref{prob}) if the first reduction is \textsf{Path}, \textsf{Branching} or \textsf{Loop}.  %\nann{A4}
 
 \paragraph{When $\overline{R}_1$ is \textsf{Path}, \textsf{Branching} or \textsf{Loop}.}~These are easy cases since the number of connected components created (and therefore the potential of each corresponding execution) is always balanced by the potential of the reduction. This is precisely written in the following fact that can be easily verified thanks to Figure \ref{fig:basicreductions} and Claim \ref{claim:potentialextended}.

%\obscontactpotential*
\begin{restatable}{obs}{obscontactpotential}\label{obs:contact-potential}
%\begin{obs}\label{obs:contact-potential}
For any independent set $I$, and any reduction $\overline{R}$ that is \textsf{Path}, \textsf{Branching} or \textsf{Loop}:
$$
\Phi_I(\overline{R})\ge|contact(\overline{R})|-1.
$$
%\end{obs}
\end{restatable}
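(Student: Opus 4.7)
The plan is to do a direct case analysis on the type of $\overline{R}$, using the potential formula
$$\Phi_I(\overline{R}) = 5|\overline{R}| - 4|I\cap ground(\overline{R})| + loan_I(\overline{R}) - debt_I(\overline{R}).$$

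For Path and Branching, $\overline{R}$ is a size-one reduction with root $r$ and a single middle vertex $m$. Let $k$ be the number of contact edges incident to $m$, so $k=1$ for Path and $k=2$ for Branching; since multiple contact edges may share an endpoint, I only have $|contact(\overline{R})|\le k$. I would split into three subcases depending on which of $r,m$ lies in $I$ (they cannot both lie in $I$, as they are adjacent), and directly compute $\Phi$ in each using $debt_I(v) = 3 - d_{G'}(v)$ for every white $v\in ground(\overline{R})$, where $G'$ is the graph just before $\overline{R}$ is executed. When $m\in I$, all $k$ contact edges are loan edges (their other endpoints cannot be in $I$), giving $\Phi = 5-4+k-2 = k-1$. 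When $r\in I$, the middle $m$ is white with current degree $1+k$, giving $\Phi = 5-4+\ell-(2-k) = k+\ell-1 \ge k-1$, where $\ell\le k$ denotes the number of loan edges. When $r$ and $m$ are both white, $\Phi = 5-0+\ell-(4-k) = k+\ell+1 \ge k-1$. In every subcase, $\Phi_I(\overline{R}) \ge k-1 \ge |contact(\overline{R})| -1$.

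For Loop, I would simply invoke the bound $\Phi_I(\textsf{Loop}) \ge 0$ already proved in Claim~\ref{claim:potentialextended}, and check that $|contact(\overline{R})| \le 1$. Indeed, the ground of a loop reduction is the entire backbone $\{w, v_1,\dots,v_b\}$ in which the two endpoints of the backbone coincide at $w$. By the definition of a backbone, only $w$ has degree three in $G$, while each $v_i$ lies in the interior of a degree-two path and hence has no edge leaving the ground. Consequently the only edge out of $ground(\overline{R})$ is the third edge at $w$, and its other endpoint is the unique candidate contact vertex. Thus $\Phi_I(\overline{R})\ge 0 \ge |contact(\overline{R})|-1$.

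I expect the only subtle point to be tracking the degree of each vertex in the current graph $G'$ rather than in the original graph $G$, so that the debt $3 - d_{G'}(v)$ is computed correctly in each subcase; once this convention is fixed, each of the Path/Branching subcases reduces to a one-line calculation, and the Loop case is handled purely by the earlier lemma together with the observation on the backbone structure.
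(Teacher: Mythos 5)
Your proposal is correct and follows essentially the same route as the paper: the paper simply combines the worst-case potentials of Claim~\ref{claim:potentialextended} ($\Phi_I(\textsf{Path})\ge 0$, $\Phi_I(\textsf{Branching})\ge 1$, $\Phi_I(\textsf{Loop})\ge 0$) with the fact that these reductions have at most $1$, $2$, $1$ contact vertices respectively, which is exactly your Loop argument, while your Path/Branching case computation just makes explicit the ``easily verified'' check the paper delegates to Figure~\ref{fig:basicreductions}. The only caveat is cosmetic: in the Loop case the degree-two statement about the interior backbone vertices should refer to the current graph rather than $G$, a convention you already flag at the end.
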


% \begin{obs}\label{obs:contact-potential}
% For any independent set $I$, and any reduction $\overline{R}$ that is \textsf{Path}, \textsf{Branching} or \textsf{Loop}, we have
% $$
% \Phi_I(\overline{R})\ge|contact(\overline{R})|-1.
% $$
% \end{obs}

%\piotr{What about proof of Observation 11 ?}\nan{seems it can be deduced from claim \ref{claim:potentialextended}?}\mathieu{I think a prrof is not necessary. I added a sentence to explain how to verify it though}

\noindent
{\bf Remark:} The inequality is also valid for an even-backbone reduction whose both end-points are adjacent.

Then, if $\overline{R}_1$ is a path, branching, or loop reduction then the number of connected components is at most the number of contact-edges, therefore by the induction hypothesis (\ref{-1}) on each connected component of $G[V\setminus ground(\overline{R}_1)]$, we have 
$$
\Phi_I(\calE)= \Phi_I(\overline{R}_1)+\Phi_I(\calE_1)+\dots+\Phi_I(\calE_s) \ge \Phi_I(\overline{R}_1)+(-1)s\ge \Phi_I(\overline{R}_1)+(-1)|contact(\overline{R}_1)|\ge -1.
$$
Moreover, if $\Phi_I(\calE)=-1$, then all these inequalities are tight and in particular, the potential of $\overline{R}_1$ is minimum. This implies that $\overline{R}_1$ must be a reduction of black or white type (\ref{techclaim:easyred}), and additionally, it must create exactly $|contact(R)|$ connected components, and each one must have potential $-1$. Applying the inductive assumption (\ref{prob}) to these connected components, together with the following Claim \ref{claim:induction} implies property (\ref{prob26}) for $G$. %\nann{A6}

%\piotr{I have stopped here ...}

%\claiminduction*
\begin{restatable}{claim}{claiminduction}\label{claim:induction}
%\begin{claim}\label{claim:induction}
Let $G$ be a connected graph, and $I$ an independent set in $G$. Consider $\calE=(\overline{R}_1,\dots,\overline{R}_\kappa)$ an execution in $G$. Let $H$ be a connected component created by the first reduction $\overline{R}_1$, such that all contact vertices of $\overline{R}_1$ that are in $H$ are all {white}  (\resp all black). Then, if $H$ is potentially problematic \emph{or} if the first reduction executed in $H$ is a bad odd-backbone reduction, then there exists a black (\resp white) reduction $\overline{R}$ in $G$ such that $ground(\overline{R})\subseteq V(H)$.% before $\overline{R_1}$ is executed.
%\mathieu{explicit the situation where rule for 25 is used}
%Suppose that $R_1$ is a black or white reduction. 
%Let $H$ be a connected component of $G[V\setminus ground(R_1)]$, such that $\overline{R_1}$ has exactly one contact vertex in $H$.   
%Then, if $H$ is potentially problematic or if the first reduction executed in $H$ is a bad odd-backbone reduction, then $G$ is potentially problematic.
%\end{claim}
\end{restatable}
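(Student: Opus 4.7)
By symmetry I assume all contact vertices of $\overline{R}_1$ lying in $V(H)$ are white, and I seek a black type reduction in $G$ with ground in $V(H)$. The pivot of the proof is that every black type reduction has at least one black root and, for backbones of black type, black interior vertices at alternating positions; since blacks lie in $I$, no black vertex of $H$ can be a contact vertex of $\overline{R}_1$ under our colour hypothesis, so each such black vertex has identical neighbourhood in $G$ and in $H$. Thus the only way a black type reduction present in $H$ can fail in $G$ is that a white interior backbone vertex $v_j$ of the reduction is also a contact vertex of $\overline{R}_1$, raising $d_G(v_j)$ from two to three and breaking the backbone in $G$.

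If $H$ is potentially problematic via condition (2) of Definition \ref{df:problematic}, then $H$ already contains a black type reduction $\overline{R}_b$. A \textsf{Path} or \textsf{Branching} of black type lifts to $G$ unchanged by the pivot observation. A black type \textsf{Even-backbone} or \textsf{Loop} whose backbone is broken at some white $v_j$ gets truncated between consecutive obstructions (endpoint or other break); because all breaks sit on white interior vertices of an alternating backbone, any two consecutive obstructions are separated by an even number of edges and the two root positions of the truncated segment are then both black, giving a shorter black type \textsf{Even-backbone} reduction in $G$ with ground still inside $V(H)$.

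If $H$ is potentially problematic via condition (1), then $H$ is an odd cycle or a bad edge on which $I$ is maximum. Connectivity of $G$ forces at least one white contact vertex $c\in V(H)$ of $\overline{R}_1$ with $d_G(c)=3$. A direct inspection of the structure of maximum $I$ on an odd cycle exhibits a black vertex $r$ adjacent on the cycle to a neighbour $x$ and to $c$, both having degree three in $G$; the triple $\{x,r,c\}$ is a black type \textsf{Even-backbone} of edge-length two in $G$ with ground in $V(H)$, and the bad edge case is handled analogously. If instead the first reduction in $H$ is a bad odd-backbone $\overline{R}'$ with backbone $\{w,v_1,\dots,v_b,w'\}$, then Figure \ref{fig:negativepot}(d) fixes the alternating colouring with $w$ and the $v_{2i}$ black and all remaining backbone vertices and contact vertices of $\overline{R}'$ white, so breaks by $\overline{R}_1$ again occur only at white $v_j$ with $j$ odd: truncating to $\{v_j,v_{j+1},\dots,v_b,w'\}$ with $j$ maximal gives a black type \textsf{Even-backbone} of even edge-length $b-j+1$ with both root positions black, contained in $V(H)$.

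The main obstacle is the sub-case of the bad-odd-backbone case in which no interior $v_j$ of $B$ is broken. Connectivity of $G$ then forces the contact vertex of $\overline{R}_1$ inside $V(H)$ to lie outside $B$, at a neighbour of $w$ or $w'$ or at a vertex reached by a short path, and the priority of $\overline{R}'$ in $H$ (forbidding any \textsf{Even-backbone}, \textsf{Loop} or degree-one vertex in $H$) severely restricts how this attachment may look. I plan to classify these attachments by a bounded case analysis on the distance from the contact vertex to $B$ and in each case identify a black type \textsf{Path}, \textsf{Branching}, or short \textsf{Even-backbone} in $G$ with ground in $V(H)$. Across all cases the unifying reason the truncation works is that the monochromaticity of the contact vertices forces every break to occur at a white vertex, and on an alternating backbone the white vertices all sit at positions of the same parity, which is exactly the parity needed to restore a black-rooted alternating structure after truncation.
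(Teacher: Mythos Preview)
Your overall strategy is right and matches the paper: black vertices of $H$ keep their degree in $G$, so a black type reduction in $H$ can only be disrupted at white interior vertices, and truncation between consecutive white breaks of an alternating backbone preserves black type. Your treatment of condition~(2) of Definition~\ref{df:problematic} is essentially the paper's argument.

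There are two real problems, one minor and one central.

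\textbf{Odd cycle with a single contact vertex.} You claim a black vertex $r$ adjacent to $c$ and to another vertex $x$ both of degree three in $G$, giving an \textsf{Even-backbone} of edge-length two. But if $\overline{R}_1$ has exactly one contact vertex in $H$ (this is precisely the situation when $\overline{R}_1$ is \textsf{Path}, \textsf{Branching} or \textsf{Loop} and the bound of Observation~\ref{obs:contact-potential} is tight), then only $c$ has degree three in $G$; every other cycle vertex has degree two. There is no second degree-three endpoint available, so no \textsf{Even-backbone} exists. What you actually obtain is a \textsf{Loop} in $G$ with endpoint $c$: one of the two neighbours of $c$ on the cycle is black (since $I$ is maximum on the odd cycle), and that gives a black type \textsf{Loop} by Definition~\ref{df:type}(2). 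This is exactly what the paper does.

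\textbf{The bad odd-backbone case and your ``main obstacle''.} This is the crucial gap. You correctly handle the situation when some contact vertex of $\overline{R}_1$ lies in the interior of the backbone $B$ of $\overline{R}'$, but then concede the sub-case where no interior $v_j$ is broken and propose a bounded case analysis on the attachment point. That case analysis cannot work as stated: the contact vertex of $\overline{R}_1$ in $H$ may lie in a \emph{different} odd backbone of $H$, at unbounded distance from $B$, and you have no information about the colouring along that other backbone. The paper eliminates this obstacle entirely by invoking the \textsf{Odd-backbone} rule. Since the priority reduction in $H$ is \textsf{Odd-backbone}, every contact vertex $c_i\in V(H)$ has degree two in $H$ and hence lies in some odd backbone there, giving that backbone creation time at least $1$. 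The rule forces $\overline{R}'$ to be the odd backbone of greatest creation time, so its backbone necessarily contains some $c_i$. Once you have this, your truncation argument (and the paper's) goes through; without it, the claim is not established. The footnote in the proof of Lemma~\ref{lemma:induction} flags exactly this: the \textsf{Odd-backbone} rule is what prevents two consecutive minimum-potential odd-backbone reductions, and Claim~\ref{claim:induction} is where it is used.
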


We will prove that in this situation, $\ol{R}_1$ can not be an \textsf{Odd-backbone}, since this would not be the priority reduction according to \Greedystar order.
This claim is useful in the sense that, when the potential of $\overline{R}_1$ is minimum then $\overline{R}_1$ is white (or black) and its contact vertices are all white (or all black) (Claims \ref{claim:base-case},\ref{techclaim:eb}), so that $G$ is a potentially problematic graph. %\mathieu{I think we can put more details}\nan{might be use claim \ref{claim:even-p0-alternating}}

%\piotr{Check if in the definition of black (resp. white) type reductions, we need to assume that all contact vertices are black (resp. all are white) ??!! It seem that we should not say anything about type of contact vertices in the white (black) type reduction. This means that the current definition is fine!!! This means that we have to check the current induction step proof -- namely we might need to add the case where contact vertices are the same type or mixed.}

\begin{proof}(Claim \ref{claim:induction})
Suppose that $H$ is created by $\overline{R}_1$ with contact vertices $c_1,\dots, c_t\in V(H)$ all white (\resp all black), with $1\le t\le 4$. %Notice that $c\notin I$ (\resp $c\in I$). 
%Obviously, $r$ is the root of a white (\resp black) reduction. 
We show that there exists one black (\resp white) reduction  $\overline{R}$ in $G$ such that $ground(\overline{R})\subseteq V(H)$.

First, assume that the first reduction executed in $H$, say $\overline{R}_2$, is a bad odd-backbone reduction. Denote by $B = \{w,v_1,\dots, v_{2b},w' \}$ its backbone, with end-points $w$ and $w'$. Since $\Phi_I(\overline{R}_2)=-1$, its backbone is alternating (Claim \ref{techclaim:od}) %\nan{such claim is written, it is Claim \ref{claim:odd-p-1-alternating}} 
and without loss of generality we assume that $v_j$ is white if and only if $j$ is odd.%\mathieu{ to be continued, check link odd back, alternating, ... }\nan{might be use claim \ref{claim:odd-p0-alternating}}

According to the \textsf{Odd-backbone}-rule, at least one of the contact vertices $c_i$ must be one vertex of its backbone. % : $B \cap \{v_1,\dots,v_{2b}\}\neq \emptyset$. 
Since all $c_i$ are white (\resp black), the distance along $B$ between any pair of $\{c_1,\dots, c_t,w'\}\cap B$ (\resp $\{w, c_1,\dots, c_t\}\cap B$) is even. Therefore, there is an alternating even-length-backbone between any two consecutive ones. This implies the existence of a black (\resp white) even-backbone reduction in $G$, see Figure \ref{fig:helpproof}. %\nann{A5: distance to $\ol{R}_1$}%\mathieu{I will make a drawing on Monday, i think it can be helpful} 

\begin{figure}
    \centering
    \includegraphics[width=12cm]{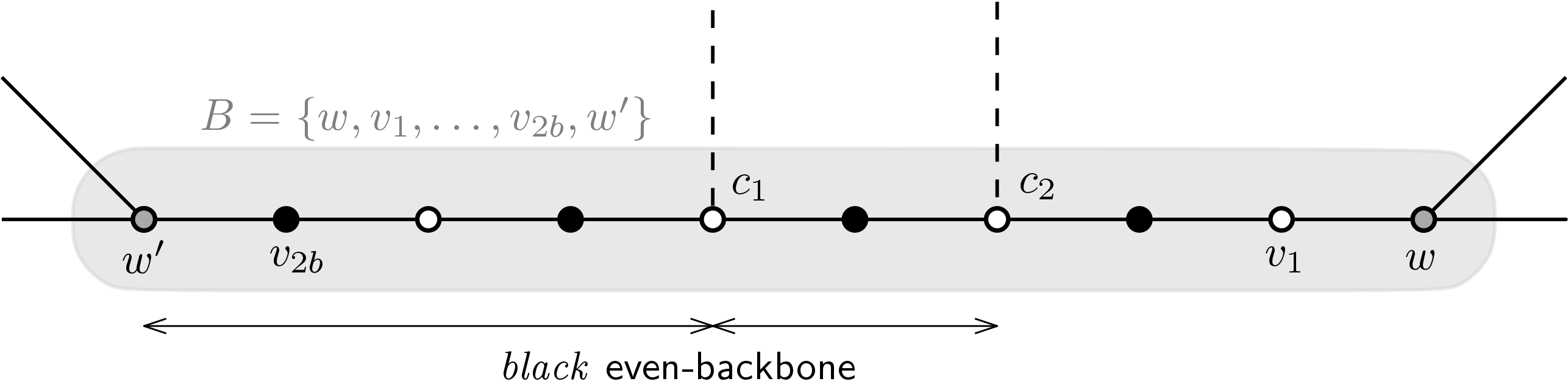}
    \caption{Existence of a black reduction in proof of Claim \ref{claim:induction}. $B$ is an alternating odd backbone in $H$. Grey end-points illustrate the fact that these vertices can either be black or white. Dashed edges are contact edges from $\overline{R}_1$. Before the execution, there exists a black (alternating) \textsf{Even-backbone} in $G$ between $w', c_1$ and $c_1,c_2$.}
    \label{fig:helpproof}
\end{figure}

%Adding an edge split this backbone into two backbones, where one of these two has even length. Therefore, one of $c$'s neighbors is the root of an even-backbone reduction. It is easy to check that this is a black reduction (\resp white) since $\overline{R_2}$ is a bad odd-backbone reduction.

Now, assume that $H$ is a potentially problematic graph, and suppose first that 
% \begin{itemize}
%     \item If $H$ is a bad\footnote{meaning here that $I$ is maximum in $H$} edge. Let $r$
% \end{itemize}
there is one black (\resp white) reduction $\ol{R}_2$ with a black (\resp white) root vertex $r$  in $H$. Notice that this vertex is distinct than all $c_i$, and then has the same degree in $G$. Then, if $d_H(r)=d_G(r)=1$, then $r$ is also the root of a black (\resp white) reduction in $G$. 
Then, suppose that $d_H(r)=d_G(r)=2$. 

If $r$ is the root of a black (\resp white) even-backbone reduction $\overline{R}_2$ in $H$, with backbone end-points $w$ and $w'$, then any two consecutive vertices of the set $\{w,w', c_1,\dots, c_t\}$ along this backbone form a alternating even-length-backbone. In particular, $r$ is the root of a black (\resp white) \textsf{Even-backbone} in $G$.
%it was also true in $G$. If no $c_i$ is not on the backbone on this reduction it is trivially true, and otherwise since the backbone is``alternating'' by assumption, and $r'\in I$ and $c\notin I$ (\resp $r'\notin I$ and $c\in I$), necessarily $r'$ remains in $G$ the root of a (shorter) even-backbone reduction of type black (\resp white). See Figure ??. 

The case when $\ol{R}_2$ is a \textsf{Loop} is very similar, but slightly more technical. First, if there is no $c_i$ in the ground of $\ol{R}_2$, the root of this reduction is obviously also the root of an black (\resp white) \textsf{Loop} in $G$. Now, suppose that there is at least one contact vertex in $ground(\ol{R}_2)$. Let $w$ the vertex of degree three in $ground(\ol{R}_2)$, and $r,r'$ its two neighbors. Let us focus on the first two contact vertices $c$ and $c'$ met when we sweep the loop from $w$ in each direction (or just $c$ its the only contact vertex present). We claim that at least one of $r$ of $r'$ a the black (\resp white) root of an (alternating) black (\resp white) \textsf{Even-backbone} in $G$. See Figure \ref{fig:helploop} (a) and (b) (\resp (c)). %If $c$ (or $c'$) is contact vertex on the loop, then $r$ (or $r'$) is the root of a black (\resp white) even-backbone in $G$. This can be easily to loop with arbitrary length and any maximum independent set. 
Form left to right this vertex is respectively $r'$, $r$ and $r$. 

%We first treat the case where  that . The two first cases correspond to a black loop reduction.

\begin{figure}
    \centering
    \includegraphics[width=14cm]{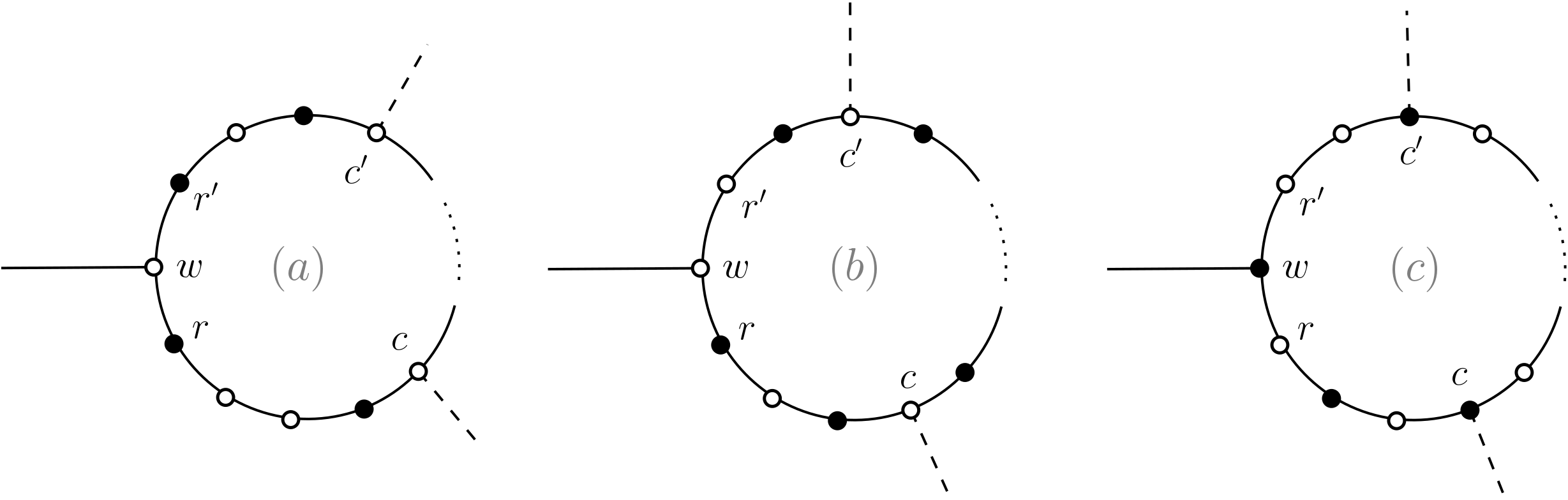}
    \caption{Different types of black, $(a)$ and $(b)$, and white \textsf{Loop} $(c)$. Dashed edges are contact edges. }
    \label{fig:helploop}
\end{figure}

We now turn our attention to the case when $H$ is a bad\footnote{meaning here that $I$ is maximum in $H$} edge or cycle. %\nan{bad edge or bad cycle?}.
If $H$ is a bad edge, then its black (\resp white) vertex has degree one in $G$, and therefore is the root of black (\resp white) path or branching reduction in $G$. Similarly as the case when $\ol{R}_2$ is a \textsf{Loop}, if $H$ is a bad cycle, %\nann{A6}, 
then there exists a black (\resp white) vertex  $r'\in V(H)$ that is the root of a black (\resp white) \textsf{Loop} in $G$, when $\ol{R}_1$ has one contact vertex in $H$, or the root of a black (\resp white) \textsf{Even-backbone} in $G$, when $\ol{R}_1$ has more that one contact vertex in $H$. %Refer to Figure \ref{fig:helpproof} where $w=w'$ is a white vertex. If all $c_1, \dots, c_t$ have the same type, with $t\ge 2$, then the cycle is ``split'' in $t-1$ odd-length-backbones and one even-backbone (with eventually no degree two vertices). Moreover since $I$ is maximum in $H$, all these odd-length backbone are alternating.

\end{proof}

%This claim implies property (\ref{prob}) whene $R_1$ is either a loop, branch or path reduction. Indeed, as we observed below, if $\Phi_I(\calE)=-1$, then $R_1$ is a worst potential reduction and each of the connected components $H_i$ it creates has potential $-1$. We can apply induction hypothesis (\ref{prob}) to $H_i$ to state that $H_i$ is potentially problematic. It follows from Claim \ref{claim:nonmixed} that $G$ is potentially problematic. 

We now turn our attention to the case when $\overline{R}_1$ is a backbone reduction.

%\paragraph{$R_1$ is .} $R_1$ creates exactly one connected component $H_1$. By inductive hypothesis \ref{-1}., $\Phi_I(\calE_1)\ge-1$. Moreover, by ***, $\Phi_I(R_1)\ge 0$, and equation \ref{eq:sum} becomes $\Phi_I(\calE)\ge -1$, \ie \ref{-1}. is satisfied by $G$. 

%Moreover, if $\Phi_I(\calE)=-1$, then, previous inequalities are tight so that the independent set in $R_1$ corresponds to situations **** or ****, so that $R_1$ is non-mixed. Moreover we must have $\Phi_I(H_1)=-1$ and by induction, $H_1$ must satisfies (b.1) or (b.2). Claim \ref{claim:nonmixed} implies that $G$ satisfies (\ref{2types}).

\begin{comment}
\begin{itemize}
    \item If $H_1$ satisfies (\ref{bad25}), then the first execution $R_2$ in $H_1$ is a bad odd-backbone reduction. Moreover, according to the ``following odd-backbone'' rule of the Algorithm, the contact vertex $c$ of $R_1$ must be a vertex of $ground(R_2)$. Since, $R_2$ is bad, $c$'s neighbors $w,w'$ in $H_1$ have different type than $c$, and at least one of these two vertices, say, $w$ is the root vertex of an even-backbone reduction in $G$. Since $c$ and $r$ have the same type, $r$ and $w$ correspond to two non-odd-backbone reductions with different types. We have establish (\ref{2types})
    
    \item If $H_1$ satisfies (\ref{2types}), then there exists a vertex $w$ in $H_1$ that is the root vertex of a non-odd-backbone reduction of different type than $r$. If 
\end{itemize}
\end{comment}

%\paragraph{$R_1$ is a loop reduction.}

%\paragraph{$R_1$ is a branch reduction.}
%\ref{-1} : easy

\paragraph{$\overline{R}_1$ is an even-backbone reduction.}%\nann{A3}
If the first reduction executed in $G$ is an \textsf{Even-backbone}, it means according to Greedy order, that the graph $G$ does not contain any degree one vertices or any loop reductions. All degree two vertices are contained in some backbones linking two distinct degree three vertices. If the end-points of the backbone of $\overline{R}_1$ are adjacent, then $\overline{R}_1$ satisfies Observation \ref{obs:contact-potential}, so that this case was treated in the previous section. From now, let us assume that these end-points are independent. In the following, we use the same terminology as in Lemma \ref{lem:rulefor26}.

\begin{itemize}
    \item If all contact vertices of $\overline{R}_1$ are white (\resp black), then at most one connected component created by $\overline{R}_1$ has potential $-1$. Indeed, suppose for a contradiction, that $H_i$, with $i\ge 2$, has potential $-1$. By induction hypothesis, it must satisfy assumptions (\ref{prob25}) or (\ref{prob26}) of Lemma \ref{lemma:induction}. According to Claim \ref{claim:induction}, there was a black (\resp white) reduction in $H_i$ before $\overline{R}_1$ was executed. This reduction is neither a degree one nor a loop reduction, so it must be an even-backbone reduction. Since they are black (\resp white), the root vertices of this reduction are distinct than $\overline{R}_1$'s contact vertices% in $H_i$
    , which contradicts Lemma \ref{lem:rulefor26}. We proved (\ref{-1}) when all contact vertices of $\overline{R}_1$ are all white (\resp black).
    
    For (\ref{prob}), if $\Phi_I(\calE)=-1$, then one connected component created, say $H_1$ has potential $-1$, which by induction satisfies (\ref{prob25}) or (\ref{prob26}) and $\Phi_I(\overline{R}_1)=0$, so that $\overline{R}_1$ is a white reduction with only white contact vertices (Claim \ref{techclaim:eb}). %\nann{A3}
   % suppose first that all contact vertices of $\overline{R_1}$ are black, then its potential is at least one (see Observation ...) \nan{some claim \ref{claim:odd-white-p1}}, and then $\Phi_I(\calE)\ge 0$.
    %Therefore, if $\Phi_I(\calE)=-1$, then $\overline{R_1}$ must be a white even-backbone reduction (\ie $\Phi_I(\overline{R_1})=0$, by Obervation ...) \mathieu{add more details} and there is one connected component, say $H_1$ with potential $-1$. 
    Claim \ref{claim:induction} guarantees the existence of a black reduction in $G$, so that $G$ is potentially problematic. 
    \item If some of $\overline{R}_1$ contact vertices are are both black and white, then we argue that $\Phi_I(\calE)\ge 0$. First, the potential of $\overline{R}_1$ is at least two\footnote{We assume here that the two end-points of the corresponding backbone are not adjacent.} (Claim \ref{techclaim:eb}). Therefore we should argue that there are \emph{at most two} connected components with potential $-1$. This is true since there are at most two connected components with strictly more than one contact vertex, and at most one connected component with exactly one contact vertex can have potential $-1$. Indeed, for connected components with only one contact vertex, Claim \ref{claim:induction} applies so that we can use the same argumentation than in the previous paragraph.
    
\end{itemize}

\paragraph{$\overline{R}_1$ is an odd-backbone reduction.}
\begin{itemize}
    \item Assume first that $\overline{R}_1$ has potential $\Phi_I(\overline{R}_1)=-1$ (\resp $\Phi_I(\overline{R}_1)=0$). Then, Claim \ref{techclaim:od} indicates that all its contact vertices are white (\resp black). 
    \begin{enumerate}[(1)]
    \item  We prove that all connected components created by $\overline{R}_1$ have potential at least zero. Assume that it is not true for the component $H_1$. By induction it satisfies (\ref{prob25}) or (\ref{prob26}). According to Claim \ref{claim:induction}, there exists a black (\resp white) reduction in $H_1$ before $\overline{R}_1$ is executed which contradicts Greedy order\footnote{This is where the \textsf{Odd-backbone}-rule is used : in any execution we can not have two consecutive odd-backbone reductions with minimum potential. %Otherwise Claim \ref{claim:induction} would imply that the backbone of the first one 
    }
    , since any black or white reduction has a priority strictly higher than \textsf{Odd-backbone}.
    %\piotr{Here, we should check that either the rule for newly created $(2,5)$ is present and used in Claim \ref{claim:induction}. If it is not used there, then it must be used in $(1)$ by adding something like this: If in $H_1$ a $(1,5)$, say $R_2$, is present before $\overline{R_1}$ is executed, then $R_2$ must have been created by contact edge removed by some reduction that was executed before $\overline{R_1}$. By using the newly created rule for $(2,5)$, we get a contradiction with the greedy order because $R_2$ must have been executed and not $\overline{R_1}$.}
    %    Since all contact edges of $\overline{R_1}$ in $H_1$ are white, there exists a non-odd-backbone (black) reduction $R$ such that $ground(R)\subset V(H_1)$, so that $R\neq \overline{R_1}$. This reduction has highest priority than $\overline{R_1}$, and therefore we found a contradiction with Greedy order. Then, (\ref{-1}) is true for this case.
    \item  When $\overline{R}_1$ is supposed to be bad odd-backbone by assumption, (\ref{prob25}) is always true, and otherwise, if $\Phi_I(\overline{R}_1)=0$, we just proved that $\Phi_I(\calE)\ge 0$.

    \end{enumerate}
    \item Suppose that $\Phi_I(\overline{R}_1)\ge 1$. We claim that at most one component created by $\overline{R}_1$ can have potential $-1$. First note that $\overline{R}_1$ has three contact edges and thus at most three contact vertices.  Indeed, at most one connected component created has at least two contact vertices, and any connected component $H$ created with exactly one contact vertex can not have potential $-1$ since Claim \ref{claim:induction} would imply  the existence of an highest priority reduction in $H$.
   %
   % Therefore, our Claim \ref{claim:nonmixed} implies that $G$ satisfies (\ref{2types}), which contradicts greedy order. Thus, \ref{-1}. is true. Moreover, (\ref{bad25}) is trivially true. 
   
   %\item Now, if $\Phi_I(R_1)=0$, then all of $\overline{R_1}$ contact vertices must be in 
   
 %  we show then, at least one connected component created by $R_1$ can have potential $-1$. Indeed, if there are at least two connected components created, one should have only one contact vertex. If this connected component has potential $-1$ then by induction hypothesis, it should be a potential problematic graph so that using similar arguments than the previous bullet section, we can find a contradiction with Greedy order.
\end{itemize}
%It remains to prove (\ref{prob}) when $R_1$ is an odd-backbone reduction.

%  We have just seen that at least one connected component created by an odd-backbone reduction can have potential $-1$. Then, if $\Phi_I(\calE)=-1$, then either $R_1$ is a worst potential reduction or $\Phi_I(\calE)=0$ and there is one connected component with potential exactly $-1$. This second case can not happend since $\Phi_I(\calE)=0$ arises only in situation \dots (figure). In this situation, all connected components have potential at least zero, otherwise we can find a contradiction with Greedy order as in \dots

%Now, suppose that $R_1$ is a good-odd-backbone reduction. We show that $\Phi_I(\calE)\ge 0$. Since $R_1$ is good, its potential is at least $1$. Applying inductive hypothesis to connected components created by $R_1$, so that all of them have potential at least $-1$. We prove that at most one of them have potential $-1$.
This concludes the proof of Lemma \ref{lemma:induction}.%
\end{proof}

\begin{proof}(of Theorem \ref{theo:ratio})
We first treat the case where the input graph has at least one vertex with degree at most $2$. Let $G$ be a connected graph, $I$ a \emph{maximum} independent set in $G$, and $\calE$ an execution. Our goal is to show that the \emph{exact} potential is non-negative :
\begin{equation}\label{eq:goal}
\Phi_{G,I}(\calE)\ge 0
\end{equation}
Suppose this is true. Then from Proposition \ref{prop:sumpotential} we have that $5|\calE|-4|I|\ge 0$, which can be re-written as $\dfrac{|I|}{|\calE|}\le \dfrac{5}{4}$, and since this is true for any independent set, then we have established the desired approximation. 
%To prove the inequality we first suppose, without loss of generality that the first reduction is not a even-path reduction. Indeed, the structural potential of this reduction is positive so 

In Lemma \ref{lemma:induction}, we proved that $\Phi_I(\calE)\ge -1$. Suppose that the inequality is tight. Then, the first reduction is a bad odd-backbone reduction or $G$ is potentially problematic. In any case there exists in $G$ a white vertex with degree at most two --- that is the root of the first \textsf{Odd-backbone} or of the white reduction in $G$ --- so that, using Claim \ref{claim:comparepot} we have:
$$
\Phi_{G,I}(\calE)=\Phi_I(\calE)+\sum_{v \notin I}(3-d_G(v))\ge -1 + 1 =0
$$
Suppose now that all vertices in $G$ have degree three, and assume that $I$ is maximal, so that for any vertex $u$, $|N_G[u]\cap I|\ge 1$. Then, let us consider any degree $3$ vertex $u$ and we see that one of the four executions of the algorithm will be execution with $v \in N_G[u]$ being black, and let us call this first reduction $R^*$.
We detect which of those four executions to take by taking the one that leads to the largest size solution. By Claim \ref{claim:larger-larger} it implies that this execution has the largest potential.

On the other hand let us consider the execution $\calE$ of $R^*$ with its black root $v$.  Let $H_1,\dots, H_s$ denote the connected components created by $R^*$, and $\calE_1,\dots, \calE_s$ the corresponding executions. Without loss of generality we have $\calE=(R^*, \calE_1,\dots,\calE_s)$. To prove that the exact potential of $\calE$ in $G$ is positive, the trick is to consider that the first reduction $R^*$ does not use any loan from its loan edges, so that its potential is exactly 1. This implies also that each connected component will not have any debt edges. Then as we proved before, since each connected component $H_i$ has a vertex with degree at most two, its exact potential in $H_i$ is non-negative. Therefore, we have
$$
\Phi_{G,I}(\calE)=1+\sum_{i=1}^s \Phi_{H_i,I}(\calE_i)\ge 1.
$$
%
% Now, we will argue that the potential of each of the connected components created by $R^*$ is at least $0$. Suppose that this is not true and that there is at least one such component $H$ with negative potential. Then observe that the execution of $H$ is as in Lemma \ref{lemma:induction}. So by that lemma we know that potential of $H$ is at least $-1$ and if it is exactly $-1$, then there is a degree at most $2$ reduction in $H$ before executing $R^*$. \mathieu{This argument is not clear because in all previous lemma, we never consider degree three reductions.-- and anyways, this could be possible that the execution start by a bad odd-backbone, and was created by a mixed degree three redution with negative potential.} This leads to a contradiction. \nan{seems in somewhere we need Claim \ref{claim:larger-larger}, because we compare four group of solutions to choose the largest. Also, we need some argument for first reduction is degree 2 or 1 free-reduction, it implies we need to use exact potential here.}
%
%  Thus, we see that the potential of each connected component created by $R^*$ is at lest $0$, and easily seen the potential of $R^*$ is $1$.
%  This finally implies that the potential of the whole execution is at least $1$.
%  \mathieu{alternative proof}
%
\end{proof}

\noindent
{\bf Remark:} Notice that our analysis implies that the size of the final solution is in fact at least $(4/5)OPT + 1/5$ if the first executed reduction is of degree $3$. 
Moreover, if the first reduction $\ol{R}$ is a bad even-backbone reduction, and $G$ is not a problematic graph, then our analysis proves that the \Greedystar solution's size is a at least $(4/5) OPT + 1/5$.
This precisely matches the size of the lower bound example of Halld{\'o}rsson. Our analysis even indicates that this lower bound example has the worst possible approximation ratio for any ultimate greedy algorithm. 
Indeed, this counter example is actually a sequence of graph $(G_n)_n$, and the solution returned by any Greedy has size $(4/5)OPT(G_n)+1/5$, and therefore the corresponding approximation ratio goes to $5/4$ when the size of $G_n$ goes to infinity. However, one may wonder if there exists a (finite) graph $G$, such that any greedy produces a solution of size at most $(4/5)OPT(G)$. Our previous analysis indicates that such a graph can not exists. 

Indeed,
this graph must satisfy (\ref{prob25}) of (\ref{prob26}) from Lemma \ref{lemma:induction}, otherwise our \Greedystar outputs a solution of size at least $(4/5)OPT(G)+1/5$. Then, for any maximum independent set, this graph must have at least one \emph{black} minimum degree vertex, and in this situation, we could for instance try all possible minimum degree vertices (only for the first step), and pick the execution of maximum size. This modified greedy algorithm returns a solution at least
$(4/5)OPT(G)+1/5$, for any input graph $G$. \\

%\nan{A more delicate argument might be, if we use approach for linear algorithm: Consider the first reduction, if it is a good (2,6), then if connected component it creates potential is -1, then by hypothesis, the first reduction in component is bad odd-backbone, however, it is impossible, but this requires this bad odd-backbone really has a debt edge, which contradict to the first reduction does not give these debt edges. In this case, we do not even multiple greedy algorithm!}
%\nan{This claim also work for degree 3 reduction!}

\noindent
{\bf Remark:} Through basically the same technique, we can achieve a greedy algorithm with an approximation ratio of $\frac{4}{3}$ but with linear running time. Observe that if we set $\gamma = 4$ and $\sigma =3$, the minimum potential of an odd-backbone reduction changes from $-1$ to $0$. Now, only \textsf{Cycle} has potential value of $-1$. With this observation, we are able to extend \textsf{Loop} in a particular way, which implies, finally, that we are able to exclude even-backbone reductions from the definition of black and white type reductions, and also preserve the induction hypothesis in our inductive proof. It implies that the \textsf{Even-backbone} rule is not necessarily needed, and then the running time of such greedy algorithm will be linear.

%However, notice that if $\ol{R}$ is not a bad even-backbone reduction then it might happen that it creates two connected components, one with potential $-1$ and the other with potential $0$. Because the potential of $\ol{R}$ is $1$, the total potential is only $0$ and not $1$. In this case the size of the \Greedystar solution is only $(4/5) OPT$.

\subsection{Technical claims}\label{section:technical}

% \mathieu{
% Here are all the properties that we need to prove in the proof. How to find best possible claims to trat all these cases ?
% }
% \nan{might be formalize each of them as a lemma?}
% \begin{itemize}
%     \item a path, branching, loop, even-backbone with min potential $\righarrow$ black or white reduction with all contact vertices of the same type \nan{only if direction: Claims \ref{claim:odd-white-p0}, \ref{claim:even-white-p1}, \ref{claim:odd-black--1}, \ref{claim:even-black-0} if direction: \ref{claim:odd-p-1-alternating}, \ref{claim:even-p0-alternating}}
%     \item backbone reduction with min pot $\rightarrow$ alternating backbone \nan{odd-backbone: Claim \ref{claim:odd-p-1-alternating}; even-backbone: Claim \ref{claim:even-p0-alternating}}
%     \item When even-backbone is mixed $\rightarrow$ pot aty least 2  \nan{Claim \ref{claim:even-mixed-2}. Does need same argument for odd-reduction? claim \ref{claim:odd-mixed-1}}
%     \item odd back with pot 0 $\rightarrow $ alternating black 
% \end{itemize}

\begin{comment}
\piotr{I have polished/improved this claim and the proof:}\mathieu{We have never defined the potential of a graph, and you use this notation in the proof, i know it is clear since the is only one possible reduction in G but still, should we keep it this way ? }
\end{comment}

\begin{claim}\label{claim:base-case}
    Given a connected graph $G$, where any execution consists of \emph{one} extended reduction $\ol{R}$, \ie \textsf{Point}, \textsf{Edge} or \textsf{Cycle}, then and if $\Phi_I(\ol{R}) = -1$, for an given independent set $I$ in $G$, then $G$ is either a bad odd-length cycle, or a bad \textsf{Edge}, and $I$ is \emph{maximum} in $G$.
\end{claim}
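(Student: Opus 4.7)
The plan is to rule out the three possibilities one by one, using that when $G$ is connected and consumed by a single extended reduction, the ground equals $V(G)$ and there are \emph{no} contact vertices, so $loan_I(\ol R) = 0$.

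First I would dispose of \textsf{Point}. Claim~\ref{claim:potentialextended} already gives $\Phi_I(\textsf{Point}) \ge 1$, so the hypothesis $\Phi_I(\ol R) = -1$ excludes this case immediately.

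For \textsf{Edge}, $G$ consists of two adjacent vertices, each of degree $1$, so any white vertex contributes exactly $\Delta - 1 = 2$ to $debt_I$. The only two possibilities are $|I\cap V(G)| = 0$ or $1$ (the two endpoints cannot both lie in an independent set), giving potentials $5 - 0 + 0 - 2\cdot 2 = 1$ and $5 - 4 + 0 - 2 = -1$ respectively. Thus $\Phi_I(\ol R) = -1$ forces $|I\cap V(G)| = 1$, which is exactly the condition that $I$ is maximum in $G$; this is a bad \textsf{Edge}.

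For \textsf{Cycle}, I would recycle the computation leading to equation~(\ref{eq:cycle}) in the proof of Claim~\ref{claim:potentialextended}. With $n \ge 3$ the length of the cycle, $b = |I \cap V(G)|$ and every white vertex of degree $2$ contributing $\Delta - 2 = 1$ to $debt_I$, we obtain
\begin{equation*}
\Phi_I(\ol R) = 5\left\lfloor \tfrac{n}{2} \right\rfloor - 4b - (n-b) = 5\left\lfloor \tfrac{n}{2} \right\rfloor - 3b - n.
\end{equation*}
Since $b \le \lfloor n/2 \rfloor$, imposing $\Phi_I(\ol R) = -1$ solves to $b = (5\lfloor n/2\rfloor - n + 1)/3$; a short parity check shows this value is a nonnegative integer if and only if $n$ is odd, in which case it equals $\lfloor n/2 \rfloor$. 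Hence $G$ must be an odd cycle and $I$ must be maximum on it, i.e., a bad odd cycle.

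Putting the three cases together yields the stated dichotomy. The only delicate point is the parity argument in the \textsf{Cycle} case, and even that is a one-line calculation; there is no real obstacle beyond recording the exact configurations that achieve $\Phi_I(\ol R) = -1$.
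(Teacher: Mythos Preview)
Your proof is correct and follows essentially the same approach as the paper: dismiss \textsf{Point} via Claim~\ref{claim:potentialextended}, handle \textsf{Edge} by direct enumeration, and handle \textsf{Cycle} by analyzing the expression in equation~(\ref{eq:cycle}). The only cosmetic difference is that the paper phrases the cycle case as ``all inequalities in (\ref{eq:cycle}) are tight,'' whereas you solve $5\lfloor n/2\rfloor - 3b - n = -1$ directly for $b$; both routes immediately force $n$ odd and $b = \lfloor n/2\rfloor$.
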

\begin{proof} %% Short proof:
  First, by Claim \ref{claim:potentialextended}, if $\ol{R}$ is a \textsf{Point} then $\Phi_I(G) \geq 1$. Then, if it is an edge reduction, since it is a basic reduction, it is easy to check that $\Phi_I(\ol{R})=-1$ only when one vertex is black \ie $I$ is maximum, see Figure \ref{fig:negativepot} $(b)$.
  
  Finally, when $G$ is a cycle then, the corresponding cycle reduction has potential $-1$ when all inequalities from equation (\ref{eq:cycle}) in the proof of Claim \ref{claim:potentialextended} are tight. In particular the number $b$ of black vertices must be maximum, and $\left\lfloor\frac{n}{2}\right\rfloor-\left\lceil\frac{n}{2}\right\rceil=1$, that only arises when the length $n$ of the cycle is odd.
\end{proof}

\begin{claim}\label{techclaim:easyred}
  Any \textsf{Path}, \textsf{Branching}, \textsf{Loop} with minimum potential have type black or white, and have contact vertices all black, or all white. 
\end{claim}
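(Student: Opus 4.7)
The plan is a direct case analysis for each reduction type, using the potential formula $\Phi_I(\overline{R}) = 5|\overline{R}| - 4|I\cap ground(\overline{R})| + loan_I(\overline{R}) - debt_I(\overline{R})$ together with the definitions of loan (number of contact edges with white contact endpoint) and debt (sum over white ground-vertices of three minus their current degree). For each type I will enumerate the admissible colorings of the ground vertices and contact vertices, compute $\Phi_I$, and verify that the colorings achieving the minimum value match precisely the black-type and white-type reductions with uniformly coloured contact.

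For a \textsf{Path}, the ground is $\{r,m\}$ with $m$ of degree two and a single contact vertex $c$. Since $r$ and $m$ are adjacent, at most one of them lies in $I$, and if $m \in I$ then $c \notin I$ is forced. A short table enumerating the five admissible colourings yields $\Phi_I \ge 0$, with equality iff $(r,m,c)$ is $(\text{black},\text{white},\text{black})$ or $(\text{white},\text{black},\text{white})$; these are exactly the black-type and white-type path reductions with uniformly coloured contact. The \textsf{Branching} case is analogous: the ground is $\{r,m\}$ with $m$ of degree three and two contact vertices $c_1,c_2$. Enumerating the admissible colorings gives $\Phi_I \ge 1$, with equality iff either $r$ is black, $m$ is white and both $c_i$ are black, or $r$ is white, $m$ is black and both $c_i$ are white (the latter again forced by adjacency of $m$ with each $c_i$). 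This matches the claimed characterisation.

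The main work is the \textsf{Loop} case. Here the ground is a cycle of some length $n$ in which a distinguished vertex $w$ of degree three has one additional edge to the unique contact vertex $c$. Writing $b = |I \cap ground(\overline{R})|$ and enumerating the at most three admissible colour patterns for $(w,c)$ (they cannot both lie in $I$), a direct count gives $\Phi_I(\overline{R})$ as an explicit expression of the form $5\lfloor n/2 \rfloor - 4b + L - D$, where $L\in\{0,1\}$ records whether $c$ is white and $D$ is $n-b$ or $n-b-1$ according to whether $w$ is black or white. The expression is strictly decreasing in $b$, so minimisation forces $b = \lfloor n/2\rfloor$, and plugging this back shows that the minimum value is $0$, attained only when $n$ is odd and either (A) $w \in I$ with $c \notin I$, or (B) $w \notin I$ with $c \in I$. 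In case (A) the two cycle-neighbours of $w$ (which are the ``root vertices'' of the loop) are forced to be white, so the reduction is white-type with an all-white contact. In case (B), $I \setminus \{w\}$ is a maximum independent set on the path of even length $n-1$ obtained by deleting $w$ from the cycle; the key observation is that every maximum independent set of a path of even length contains at least one of its two endpoints, so at least one cycle-neighbour of $w$ belongs to $I$, which makes the reduction black-type with an all-black contact. The main obstacle is precisely this path-endpoint argument for loops of arbitrary odd length; once it is in hand, the rest of the proof is routine enumeration.
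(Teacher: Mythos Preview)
Your proof is correct. For \textsf{Path} and \textsf{Branching} you do exactly what the paper does, only more explicitly: enumerate the colourings and read off that the minimum is attained precisely at the black-type and white-type configurations with uniformly coloured contact.

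For \textsf{Loop} your direct computation works, but it differs from the paper's argument. The paper observes that a \textsf{Loop} is obtained from a \textsf{Cycle} by attaching one extra edge at $w$; this extra edge either adds one to the loan (if $c$ is white) or removes one from the debt (if $w$ is white), so the potential of the \textsf{Loop} exceeds that of the underlying \textsf{Cycle} by $1$ or $2$. Since the \textsf{Cycle} minimum is $-1$ and is characterised by Claim~\ref{claim:base-case} (odd length, $I$ maximum), the \textsf{Loop} minimum is $0$, attained only when the underlying cycle is already at its minimum and the edge contributes exactly $+1$; this immediately forces $(w,c)$ to be (black, white) or (white, black), and the black/white type follows. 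Your route recomputes everything from scratch and then needs the path-endpoint lemma (``every maximum independent set of a path with an even number of vertices contains an endpoint'') to pin down case~(B). That lemma is correct, and your use of it is sound, but the paper's reduction to \textsf{Cycle} sidesteps it entirely and is shorter. Both approaches establish the same conclusion, including the matching of type with contact colour.
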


\begin{proof}%%short proof:
For (basic) path and branching reductions, worst case potential (Figure \ref{fig:01pot} (a),(h) --- \resp (b),(g)) are white (\resp black) reductions, with white (\resp black) contact vertices. 

Next, as noticed before, the potential of a \textsf{Loop}, is correlated to the potential of the cycle reduction obtained by removing its contact edge, because adding an edge to a ground of reduction either increases by one the loan or decrease by one the debt, so that the potential increases by one or two.%, which is a the ground of the loop reduction. 
In particular, when \textsf{Loop} has minimum  potential, the corresponding cycle has also minimum potential. By Claim \ref{claim:base-case}, we know that the independent set must be maximum, and its backbone must have odd-length, so that the loop reduction must have type black or white. Moreover when the potential is minimum, the contact can not be incident to two white vertices, otherwise making the contact vertex black would decrease the potential. Therefore, when $\Phi_I(\ol{R})=-1$, the contact vertex has the same type than the reduction.
\end{proof}

\begin{figure}
    \centering
    \includegraphics[width=15cm]{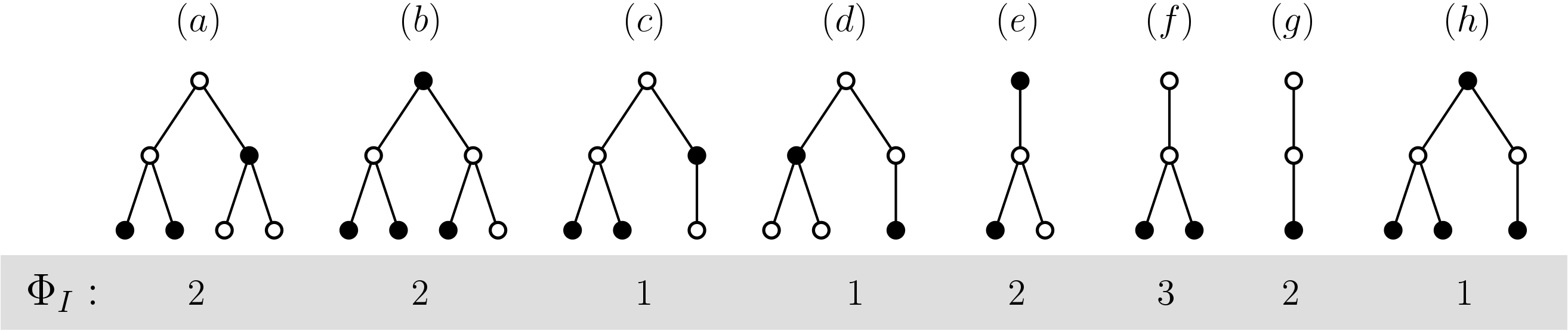}
    \caption{Some examples of good case potential value}
    \label{fig:potothers}
\end{figure}

\begin{claim}\label{techclaim:eb}
Let $\ol{R}$ be \textsf{Even-backbone}, for all independent set $I$ we have $\Phi_I(\ol{R}) \geq 0$, and moreover, 

\begin{enumerate}
    \item if $\Phi_I(\ol{R}) = 0$, then $\ol{R}$ is a white reduction %with alternating backbone), 
    with only white contact vertices.
    \item if the end-points of $\ol{R}$'s backbone are not adjacent, and if its contact vertices are not all white or all black, then $\Phi_I(\ol{R}) \geq 2$.
    %\item if two endpoints of backbone are different color and adjacent, then $\Phi(\ol{R}) \geq 1$.
\end{enumerate}
\end{claim}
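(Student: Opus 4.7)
The plan is to reuse Claim~\ref{claim:potentialextended} for the $\ge 0$ bound (Part~1) and to carry out a tight case analysis for when equality holds (Part~2) and for the stronger $\ge 2$ bound (Part~3). The adjacent end-points sub-case is immediate: as already observed in the proof of Claim~\ref{claim:potentialextended}, the ground of $\ol{R}$ is then a cycle of length $b+2$ plus two contact edges, each of which raises the cycle's potential by at least $+1$. Since the cycle potential is at least $-1$, this yields $\Phi_I(\ol{R}) \ge 1$, which simultaneously settles Part~1 and makes Parts~2 and~3 vacuous in this sub-case (Part~2's hypothesis fails and Part~3's hypothesis on non-adjacent end-points fails). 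I therefore focus on the non-adjacent end-points sub-case.

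Write the backbone as $w, v_1, \dots, v_b, w'$ with $b$ odd, and set $\beta_e = |I \cap \{w,w'\}|$, $\beta_m = |I \cap \{v_1,\dots,v_b\}|$, and let $L$ denote the number of loan edges among the four external contact edges (two at $w$, two at $w'$). My first step is to derive the closed-form
\[
\Phi_I(\ol{R}) \;=\; \frac{3b+5}{2} \;-\; 4\beta_e \;-\; 3\beta_m \;+\; L.
\]
This follows by summing the basic-level potentials over the $(b+1)/2$ basic reductions composing $\ol{R}$: each of the $(b-1)/2$ contact edges $v_{2i}-v_{2i+1}$ internal to the ground contributes $+1$ loan to its earlier basic reduction and $-1$ debt to its later one, so they cancel pairwise; the only surviving terms are the four external contact edges (contributing $L$) and the fictitious debt $\Delta - d_G(v_i) = 1$ carried by each white middle vertex $v_i$ (contributing $-(b-\beta_m)$). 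No contribution leaks in from prior extended reductions, because every middle vertex has both its neighbors inside the backbone and each end-point still has full degree three at the moment $\ol{R}$ starts.

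With the formula in hand, I would run a case analysis on $\beta_e \in \{0,1,2\}$, using the independence of $I$ to bound $\beta_m$ and $L$. When $\beta_e = 2$, both end-points in $I$ force $v_1, v_b$ and all four contact vertices to be white, so $L = 4$ and $\beta_m \le (b-1)/2$; substitution gives $\Phi_I(\ol{R}) \ge 0$ with equality iff $\beta_m = (b-1)/2$, which is exactly the Pattern~1 alternating coloring, making both candidate roots $v_1, v_b$ white and hence $\ol{R}$ a white type reduction with only white contacts. When $\beta_e = 1$, WLOG $w \in I$ forces $v_1$ and both contacts of $w$ to be white, so $\beta_m \le (b-1)/2$ and $L \ge 2$, giving $\Phi_I(\ol{R}) \ge L \ge 2$. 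When $\beta_e = 0$, $\beta_m \le (b+1)/2$ yields $\Phi_I(\ol{R}) \ge 1 + L$, and the minimum value $\Phi_I(\ol{R}) = 1$ is achieved only under the Pattern~2 alternating coloring with $L = 0$ (a black type reduction with all black contacts). Combining, the value $0$ is attained only in the first case (establishing Part~2), and any mixed-contact configuration rules out both $L = 4$ (needed by the first case) and $L = 0$ (needed by the $\Phi = 1$ case), forcing $\Phi_I(\ol{R}) \ge 2$ (establishing Part~3).

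The main obstacle is establishing the closed-form expression cleanly: one must verify that the within-ground loan/debt pairs cancel perfectly and that no external debt leaks in from earlier extended reductions, so that $\Phi_I(\ol{R})$ depends only on the triple $(\beta_e, \beta_m, L)$. Once this structural formula is in place, the case analysis reduces to elementary arithmetic together with the independence constraints on $I$.
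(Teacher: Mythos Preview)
Your proof is correct and takes a genuinely different route from the paper. The paper decomposes $\ol{R}=(R_1,\dots,R_t)$ into its basic reductions and traces through them: for item~1 it notes that $\Phi_I(\ol R)=0$ forces every $\Phi_I(R_i)$ to be minimal and then reads off from Figures~\ref{fig:negativepot}(d) and~\ref{fig:01pot}(a),(h) that the only compatible pattern is alternating with white roots and white contacts; for item~2 it case-splits on whether $R_1$ or $R_t$ has mixed contact colors and, if not, locates a single intermediate \textsf{Path} $R_j$ with two white ground vertices whose potential alone is at least $2$. Your argument replaces this figure-based trace by the closed form $\Phi_I(\ol R)=\tfrac{3b+5}{2}-4\beta_e-3\beta_m+L$ followed by a three-way split on $\beta_e$, which is cleaner, avoids the picture references, and as a bonus also classifies the $\Phi_I=1$ configurations. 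Two minor expository remarks: in your formula derivation the symbol $d_G(v_i)$ should mean the degree in the \emph{current} graph (where the backbone structure is recognized), not in the original input; and the last sentence for item~2 is cleaner if you argue directly that mixed contacts force $\beta_e\le 1$ (since $\beta_e=2$ makes all four contact vertices white), whence $\beta_e=1$ gives $\Phi_I\ge 2$ outright while $\beta_e=0$ with $L\ge 1$ gives $\Phi_I\ge 1+L\ge 2$.
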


\begin{proof}%% short proof:
\begin{enumerate}
    \item Suppose that the potential of an \textsf{Even-backbone} $\ol{R}=(R_1,\dots,R_t)$ is minimum : $\Phi_I(\ol{R})=0$. We know that its end-points are not adjacent, otherwise it has potential at least one, and the potential of all basic reductions $R_i$ must be minimum. When $\ol{R}$ has size one, the worst case arises only when $I$ is such as in Figure \ref{fig:01pot} (e), that is a particular case of white reduction with white contact vertices. For greater sizes, this implies that the first basic reduction $R_1$ is like Figure \ref{fig:negativepot} (d). Then $R_2$ must be a \textsf{Path} with minimum potential and a white root, \ie Figure \ref{fig:01pot} (a), so on and so long for all path reductions. The final branching reduction $R_t$ has minimum potential and a white root, as in Figure \ref{fig:01pot} (h). Finally we proved that this backbone is alternating and the root is white, so that $\ol{R}$ is a white reduction. Moreover, all its end-points must be white.
    \item In the following we will say that a reduction is \emph{mixed} is it has two different type contact vertices. 
    When $\ol{R}$ has size one and is mixed, we can easily check by hand, that its potential is at least two (Figure \ref{fig:potothers} (a), (b)). Consider now an mixed even-backbone  $\ol{R}=(R_1,\dots,R_t)$, and without loss of generality\footnote{recall that we are free to choose the root of \textsf{Even-backbone}}, assume that the last branching reduction $R_t$ has at least one black contact vertex.

    First, if $R_1$ or $R_t$ are mixed, then their potential is respectively at least 
    %It turns out that the potential of a mixed basic odd-backbone and a mixed branching reduction is at least 
    $1$ and $2$ (see Figure \ref{fig:potothers} (c),(d) and (e)) so that $\Phi_I(\ol{R})\ge 2$. 
    
    Otherwise, assume that $R_1$ and $R_t$ have only respectively white and black contact vertices. In particular the root $r$ of the first \textsf{Path} $R_2$ is white. If the root $r'$ of the last branching reduction is white then\footnote{we assume here that at least one contact vertex is black}, its potential is at least $3$ (Figure \ref{fig:potothers} (f)), so that $\Phi_I(\ol{R})\ge 2$. Then, if $r'$ is black, since the distance between $r$ and $r'$ is even, we must found a \textsf{Path} $R_i$ with both root and middle white vertices. Such a reduction has potential at least $2$ (Figure \ref{fig:potothers} (g)) so that $\Phi_I(\ol{R})\ge 2$.
    
    % without loss of generality\footnote{recall that we are free to choose the root an \textsf{Even-backbone}} that the last basic branching reduction $R_t$ has at least one black contact vertex. If its root is white then, $\Phi_I(R_t)\ge 3$ (\mathieu{fig}), so that $\Phi_I(\ol{R})\ge 2$. In this case, the last \textsf{Path} $R_{t-1}$ has a black contact vertex. If it has white root, then $\Phi_I(R_{t-1})\ge 2$ (\mathieu{fig}) so that again $\Phi_I(\ol{R})\ge 2$. By an easy induction we can prove that either $\Phi_I(\ol{R})\ge 2$ or the first reduction $R_1$, Figure \ref{fig:basicreductions} (e) has a black vertex $c$. Under this assumption, we can check $\Phi_I(R_1)\ge 0$ (\mathieu{fig}), so that $\Phi_I(\ol{R})\ge 1$. Moreover, when this potential is exactly one, all its contact vertices must be black. Therefore, when $R_1$ or $R_t$ have mixed type contact vertices, then their respective potential is at least $1$ and $2$, so that $\Phi_I()$
\end{enumerate}
\end{proof}

\begin{claim}\label{techclaim:od}
Let $\ol{R}$ be odd-backbone, for all independent set $I$ we have $\Phi_I(\ol{R}) \geq -1$, and moreover, 

\begin{enumerate}
    \item if $\Phi_I(\ol{R}) = -1$, then it has an alternating backbone, a white root and only white contact vertices.
    \item if $\Phi_I(\ol{R}) = 0$, then it has an alternating backbone, a black root and only black contact vertices.
    %\item if two endpoints of backbone are different color and adjacent, then $\Phi(\ol{R}) \geq 1$.
\end{enumerate}
\end{claim}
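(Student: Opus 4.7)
The plan is to exploit the canonical decomposition $\ol{R}=(R_1,\ldots,R_s)$ in which $R_1$ is the basic reduction of type 2.e from Figure \ref{fig:basicreductions} and each $R_i$ with $i\ge 2$ is a basic path reduction (1.b). By additivity of the potential across the sequence, $\Phi_I(\ol{R})=\Phi_I(R_1)+\sum_{i\ge 2}\Phi_I(R_i)$. From Claim \ref{claim:potentialextended} (whose worst cases are displayed in Figures \ref{fig:negativepot}(d) and \ref{fig:01pot}(a),(b)), we have $\Phi_I(R_1)\ge -1$ and $\Phi_I(R_i)\ge 0$ for $i\ge 2$, which already yields $\Phi_I(\ol{R})\ge -1$, proving the main inequality.

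For part (1), equality $\Phi_I(\ol{R})=-1$ forces $\Phi_I(R_1)=-1$ and $\Phi_I(R_i)=0$ for all $i\ge 2$. First I would do a careful case analysis of the colorings of the six vertices incident to $R_1$: the root $v_1$, the two middle vertices $w$ (the degree-three backbone endpoint) and $v_2$ (internal backbone vertex), and the three contact vertices $x,y$ (non-backbone neighbors of $w$) and $v_3$ (next backbone vertex). Substituting each admissible coloring into the formula $5-4|I\cap ground(R_1)|+loan_I(R_1)-debt_I(R_1)$ and using the independent-set constraint to collapse most branches, one finds that $\Phi_I(R_1)=-1$ only when $v_1$ is white and $w,v_2$ are both black, which then forces $x,y$ (neighbors of the black $w$) and $v_3$ (neighbor of the black $v_2$) all to be white. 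Next, a one-off analysis of the basic path reduction shows that $\Phi_I(R_i)=0$ admits exactly two rigid color flavors on its root-middle-contact triple: \emph{white-black-white} (the contact being forced white by the black middle) and \emph{black-white-black}. Because $R_2$'s root is the already white $v_3$, only the first flavor applies, forcing $v_4$ black and $v_5$ white; inducting this along the chain $R_3,\ldots,R_s$ gives $v_i$ white exactly when $i$ is odd, so the backbone is alternating, and the final contact $w'$ is forced white. Since the contacts of $\ol{R}$ are precisely $\{x,y,w'\}$, all white, and the root $v_1$ is white, this is the statement of part (1).

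For part (2), equality $\Phi_I(\ol{R})=0$ a priori admits two scenarios: either (A) $\Phi_I(R_1)=0$ and every path reduction contributes $0$, or (B) $\Phi_I(R_1)=-1$ with exactly one path reduction contributing $+1$ and the rest $0$. Scenario (B) I would rule out using the analysis of (1): once $\Phi_I(R_1)=-1$, the contact $v_3$ is white, and then the only way to keep $\Phi_I(R_i)\ge 0$ along the chain is the white-black-white flavor, which contributes exactly $0$ at every step, so no $+1$ is ever available. For (A), the same enumeration on $R_1$ shows that $\Phi_I(R_1)=0$ occurs only when $v_1$ is black, $w,v_2$ are white, and all three contacts $x,y,v_3$ are black (the minimum on this branch being attained precisely when all contacts are black). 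Then $v_3$ being black forces the \emph{black-white-black} flavor on $R_2$, and by the same inductive propagation one gets $v_i$ black exactly when $i$ is odd (alternating backbone) and $w'$ black. Combining, the backbone is alternating, the root $v_1$ is black, and all contacts are black, which is the statement of part (2).

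The hard part will be the casework for $R_1$ (the 2.e reduction), where one must juggle six vertices, the independent-set constraint, and the three separate contributions $5-4|I\cap ground(R_1)|$, $loan_I(R_1)$, and $debt_I(R_1)$ (the latter picking up $+1$ for each white internal middle of degree two). Once the minimum and next-to-minimum configurations for $R_1$ are pinned down and the induced color of the first contact $v_3$ is read off, the propagation along the chain is essentially forced by the rigidity of the two $\Phi_I(R_i)=0$ flavors of the path reduction, and the alternation plus the color of the root and all contacts come out automatically.
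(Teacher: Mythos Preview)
Your approach is essentially the same as the paper's: decompose $\ol{R}=(R_1,\ldots,R_s)$ with $R_1$ of type 2.e and the rest path reductions, use additivity of the potential, pin down the unique colorings of $R_1$ that give $\Phi_I(R_1)=-1$ and $\Phi_I(R_1)=0$, and then propagate colors down the chain using the rigidity of the $\Phi_I=0$ path-reduction patterns. The paper does exactly this, referring to Figures \ref{fig:negativepot}(d), \ref{fig:potothers}(h), and \ref{fig:01pot}(a),(b) for the explicit basic-reduction configurations.

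There is one imprecision in your exclusion of scenario (B). You write that once $v_3$ is white, ``the only way to keep $\Phi_I(R_i)\ge 0$ along the chain is the white-black-white flavor.'' That is false as stated: a white-rooted path reduction with \emph{white} middle also has $\Phi_I\ge 0$ (in fact $\ge 2$, cf.\ Figure \ref{fig:potothers}(g)). The correct argument, which the paper gives, is sharper: by the hypothesis of scenario (B), every $R_i$ with $i<j$ has $\Phi_I(R_i)=0$, and with a white root \emph{that} forces white-black-white; hence $R_j$ has a white root, and a white-rooted path reduction has $\Phi_I\in\{0\}\cup[2,\infty)$, never $1$. Your conclusion is right, but the reasoning as written skips the case split on the middle vertex of $R_j$ and so does not actually exclude a $+1$ contribution.
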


\begin{proof}
These assumptions can be easily checked by hand for odd-backbone reduction $\ol{R}$ of size one, see Figure \ref{fig:negativepot} (d) and Figure \ref{fig:potothers} (h). Suppose now that $\ol{R}=(R_1,\dots,R_t)$ has size at least two, where $R_1$ is the basic odd-backbone reduction, and $R_i$ are path reductions, for $i\ge 2$.
\begin{enumerate}
    \item If $\Phi_I(\ol{R})=-1$, then necessarily, $\Phi_I(R_1)=-1$ and for all \textsf{Path}, $\Phi_I(R_i)=0$. The first reduction has only white contact vertices (Figure \ref{fig:negativepot} (d)), and then $R_2$ has a white root, so it must be as in Figure \ref{fig:01pot} (a), in particular it must has a white contact vertex, so that $R_3$ has a white root, and so on and so forth. This implies that the corresponding backbone is alternating, the root vertex and all contact vertices are white.
    \item If $\Phi_I(\ol{R})=0$, then either \textbf{case 1 :} all basic reductions have potential zero, or \textbf{case 2 :} $\Phi_I(R_1)=-1$ and there is one \textsf{Path} $R_j$ with potential one. 
    \begin{description}
    \item[case 1 :] $R_1$ must be like in Figure \ref{fig:potothers} (h), and using exactly the same argumentation than in the previous paragraph, we show that all following path reductions are like in Figure \ref{fig:01pot} (b), so that $\ol{R}$'s backbone is alternating, $\ol{R}$ has black root and contact vertices.
    \item[case 2 :] we show that this case is impossible. Indeed, as before all path reductions $R_i$, with $i<j$ have a white contact vertex (Figure \ref{fig:01pot} (a)), so that $R_j$ has a white root. Since it has potential one, its middle vertex must be white (otherwise its potential is zero, see Figure \ref{fig:01pot} (a)), and in this case its potential is at least two (Figure \ref{fig:potothers} (g)).
    \end{description}
\end{enumerate}
\end{proof}

\begin{claim}\label{claim:larger-larger}
  Let $G$ be a connected graph and $I$ an independent set n $G$. Let $\calE=(\overline{R}_1,\overline{R}_2,\dots,\overline{R}_\kappa)$ be an execution of greedy on $G$, and let this execution produce a solution of size $s_{r}$. Let $\calE'=(\overline{R}'_1,\overline{R}'_2, \dots,\overline{R}'_{\kappa'})$ correspond to another execution of greedy on $G$ producing a solution of size $s_{r'}$. Then if $s_r > s_{r'}$ then $\Phi_I(\calE) > \Phi_I(\calE') $, and otherwise, if $s_r < s_{r'}$ then $\Phi_I(\calE') > \Phi_I(\calE)$.
\end{claim}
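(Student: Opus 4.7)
The plan is to exploit the fact that once $G$ and $I$ are fixed, the exact potential $\Phi_{G,I}(\calE)$ depends only on the size of the independent set produced by the execution, not on the particular sequence of reductions used to produce it. The claim about $\Phi_I$ will then follow because $\Phi_I$ and $\Phi_{G,I}$ differ by a quantity that depends only on $G$ and $I$.

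First, I would invoke Proposition \ref{prop:sumpotential} in its natural extension to sequences of extended reductions. This extension holds by the same counting argument as in the basic case: the grounds of the extended reductions in $\calE$ partition $V(G)$ (so the $\sigma\cdot|I\cap ground(\overline{R}_i)|$ terms sum to $\sigma|I|$), and each loan edge created during the execution is eventually consumed as a debt edge by the later reduction whose ground contains its white endpoint (so the total loan equals the total debt across the whole execution). With $\gamma=5$, $\sigma=4$, and writing $s_r=\sum_i |\overline{R}_i|$ for the size of the independent set produced, I obtain
$$
\Phi_{G,I}(\calE)=5\,s_r-4|I|,\qquad \Phi_{G,I}(\calE')=5\,s_{r'}-4|I|,
$$
so that $\Phi_{G,I}(\calE)-\Phi_{G,I}(\calE')=5(s_r-s_{r'})$.

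Next, I would apply Claim \ref{claim:comparepot}, which gives $\Phi_I(\calE)=\Phi_{G,I}(\calE)-\sum_{v\notin I}(\Delta-d_G(v))$, and analogously for $\calE'$. Since the correction term depends only on $G$ and $I$ and is therefore the \emph{same} for both executions, it cancels upon subtraction, yielding
$$
\Phi_I(\calE)-\Phi_I(\calE')=\Phi_{G,I}(\calE)-\Phi_{G,I}(\calE')=5(s_r-s_{r'}).
$$
Both desired implications, $s_r>s_{r'}\Rightarrow \Phi_I(\calE)>\Phi_I(\calE')$ and $s_r<s_{r'}\Rightarrow \Phi_I(\calE')>\Phi_I(\calE)$, are then immediate.

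There is essentially no real obstacle: the only point that deserves a careful sentence is the extension of Proposition \ref{prop:sumpotential} to extended reductions, and this is straightforward since an extended reduction is by definition a concatenation of basic reductions whose contact edges behave with respect to loans and debts in exactly the same bookkeeping manner as individual basic reductions. Once that is observed, the whole statement reduces to the identity $\Phi_I(\calE)-\Phi_I(\calE')=5(s_r-s_{r'})$, and the monotonicity of the map $s\mapsto 5s$ finishes the proof.
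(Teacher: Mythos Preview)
Your proof is correct and follows essentially the same idea as the paper's: both reduce to the identity $\Phi_I(\calE)-\Phi_I(\calE')=5(s_r-s_{r'})$. The paper's own argument is terser, simply asserting $\Phi_I(\calE)=5s_r-4|I\cap V(G)|$ directly; your version is actually a bit more careful, since strictly speaking that formula holds for $\Phi_{G,I}$ (Proposition~\ref{prop:sumpotential}), and you correctly invoke Claim~\ref{claim:comparepot} to see that the discrepancy between $\Phi_I$ and $\Phi_{G,I}$ is a fixed quantity depending only on $G$ and $I$, hence cancels in the difference.
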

\begin{proof}
  We know that $\Phi_I(\calE) = 5 s_r - 4 | I \cap V(G) | $, and $\Phi_I(\calE') = 5 s_{r'} - 4 | I \cap V(G) |$. Therefore, $\Phi_I(\calE) < \Phi_I(\calE')$ if and only if $s_r < s_{r'}$. 
\end{proof}

\section{Minimum vertex cover in subcubic graphs}

In this section we will apply our greedy algorithm \Greedystar to obtain a fast $6/5$-approximation algorithm for the minimum vertex cover (MVC) on subcubic graphs. 

\begin{algorithm}[H]
\SetAlgoVlined
\setcounter{AlgoLine}{0}
% \AlgoDisplayBlockMarkers\SetAlgoBlockMarkers{begin}{end}%
% \SetAlgoNoEnd
% \SetKwData{Left}{left}\SetKwData{This}{this}\SetKwData{Up}{up}
% \SetKwFunction{Union}{Union}\SetKwFunction{FindCompress}{FindCompress}
% \SetKwInOut{Input}{input}\SetKwInOut{Output}{output}

 \KwIn{A subcubic graph $G$.}
 \KwOut{A vertex cover $C$ of $G$.}% and a set $S_{add}$ of additional disks }
 %a real number $maxi\ge (1-\vep) \opt$}
 
  Run algorithm \Greedystar on $G$, and let $J$ be the output independent set.
  
  Let $C \leftarrow V(G) \setminus J$.
  
  \Return C.
  \caption{ Complementary greedy algorithm for MVC.}
 \label{algo:coml-greedy}
\end{algorithm}

We will first show that a direct application of \Greedystar leads to a $5/4$-approximate vertex cover. We will prove this by an interesting ``dual'' interpretation of our ``primal'' potential used in the analysis of \Greedystar.

\begin{theo}\label{theo:vertex-cover-1.25}
The complementary greedy algorithm of Algorithm \ref{algo:coml-greedy} achieves a $5/4$-approximation ratio for the minimum vertex cover problem on subcubic graphs, and has running time $O(n^2)$, where $n$ is the number of vertices in the graph.
\end{theo}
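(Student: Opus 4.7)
The plan proceeds in three parts. First, I verify correctness and running time. Since \Greedystar produces a maximal independent set $J$, the set $C = V(G) \setminus J$ is a vertex cover: every edge of $G$ must have at least one endpoint outside $J$. The running time $O(n^2)$ is inherited directly from \Greedystar.

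For the approximation ratio, let $I^*$ be a maximum independent set and $C^* = V(G) \setminus I^*$ the corresponding minimum vertex cover. Since $|C| = n - |J|$ and $|C^*| = n - |I^*|$, the bound $|C| \leq \tfrac{5}{4}|C^*|$ is equivalent to
\[
4|J| + n \;\geq\; 5|I^*|.
\]
The plan is to introduce a \emph{dual} potential, paralleling the primal potential of Section~\ref{section:potential} but with the roles of the coefficients $\gamma,\sigma$ swapped and the total $\sum g(\overline{R}) = n$ absorbed into the accounting. For an extended reduction $\overline{R}$, I define
\[
\Psi_{G,I^*}(\overline{R}) := g(\overline{R}) + 4|\overline{R}| - 5\,|I^* \cap ground(\overline{R})| + loan_{I^*}(\overline{R}) - debt_{G,I^*}(\overline{R}),
\]
where $g(\overline{R}) = |ground(\overline{R})|$. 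Since the grounds of the reductions partition $V(G)$ and the loan/debt terms telescope exactly as in Proposition~\ref{prop:sumpotential}, summing over the execution $\calE$ produced by \Greedystar gives
\[
\sum_{\overline{R}\in\calE} \Psi_{G,I^*}(\overline{R}) \;=\; n + 4|J| - 5|I^*|,
\]
so it suffices to show that the total dual potential is non-negative.

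The key identity linking dual to primal is
\[
\Psi_{G,I^*}(\overline{R}) \;=\; \Phi_{G,I^*}(\overline{R}) + \bigl(|C^* \cap ground(\overline{R})| - |\overline{R}|\bigr),
\]
which shows that $\Psi$ inherits most of the case analysis of the primal $\Phi$ from Theorem~\ref{theo:ratio}. Several reductions whose primal potential can drop to $-1$ now have non-negative dual potential: isolated odd cycles $C_{2t+1}$ satisfy $\Psi = \Phi + (t+1) - t \geq 0$, and bad isolated edges are handled likewise. The only reductions whose $\Psi$ can still attain $-1$ are certain bad odd-backbones, exactly mirroring the primal. The main body of the proof, and the main obstacle, is therefore to re-run the inductive argument of Lemma~\ref{lemma:induction} with $\Psi$ in place of $\Phi$; this should go through because the supporting structural ingredients --- the \Greedystar priority order, the Odd-backbone and Even-backbone rules, the black/white type reductions of Definition~\ref{df:type}, Claim~\ref{claim:induction}, Lemma~\ref{lem:rulefor26}, and the characterization of potentially problematic graphs in Definition~\ref{df:problematic} --- depend only on the graph structure and the location of $I^*$, not on which of the two potentials is being evaluated. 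The delicate regime is when $|I^*| > n/2$ (so $|J|$ can exceed $|C^*|$), where the identity alone is insufficient and one really needs the full inductive control on $\Psi$; this is exactly where the \Greedystar rules earn their keep. Once $\sum_\calE \Psi_{G,I^*} \geq 0$ is established, the bound $4|J| + n \geq 5|I^*|$, equivalently $4|C| \leq 5|C^*|$, follows immediately.
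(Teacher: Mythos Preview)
Your route differs from the paper's and is considerably more laborious. The paper does \emph{not} re-run the induction of Lemma~\ref{lemma:induction}. Instead it introduces a vertex-cover potential for each \emph{basic} reduction $R$,
\[
\Psi_{G,C}(R) \;:=\; 4\bigl(|ground(R)|-1\bigr) - 5\,|C\cap ground(R)| - loan_C(R) + debt_{G,C}(R)
\]
(note the sign flip on the loan/debt terms), so that summing over the execution yields $\Psi_{G,C}(\calE) = 4|C| - 5|C^*|$. It then proves a one-line \emph{Duality Lemma}: for every basic reduction $R$ other than an isolated point,
\[
-\Phi_{G,I}(R) \;\ge\; \Psi_{G,C}(R),
\]
which reduces to the trivial observation $|C^*\cap ground(R)| \ge 1$ (the ground contains an edge and $I^*$ is independent). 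Summing and invoking $\Phi_{G,I}(\calE)\ge 0$ from Theorem~\ref{theo:ratio} gives $\Psi_{G,C}(\calE)\le 0$ directly --- no case analysis, no re-running of the inductive lemma. Your potential is in fact exactly $-\Psi_{G,C}$, and your identity $\Psi - \Phi = |C^*\cap ground| - |\overline{R}|$, specialized to basic reductions where $|\overline{R}|=1$, \emph{is} the paper's duality inequality; you stop one step short of exploiting it at the basic-reduction level and instead propose to redo all of Lemma~\ref{lemma:induction}.

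There is also a concrete error in your write-up. You assert that ``bad isolated edges are handled likewise,'' meaning they now have $\Psi \ge 0$; this is false. For a bad edge (one black vertex, one white) one has $|C^*\cap ground| = 1 = |\overline{R}|$, hence $\Psi = \Phi = -1$. Bad edges therefore remain in the $-1$ class alongside bad odd-backbones, contradicting your claim that ``the only reductions whose $\Psi$ can still attain $-1$ are certain bad odd-backbones.'' This does not doom the re-run-the-induction plan --- both cases can be tracked through the induction --- but it is an error, and it underscores that your route involves real work that the paper's duality-lemma shortcut bypasses entirely.
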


%Let us first recall the definition of the potential $\Phi_{G,I}(R)$ of a reduction $R$ in the analysis of \Greedystar. %In the definition of $\Phi_{G,I}(R)$, we consider the size of the solution, the independent set distribution in $R$, and \emph{loan} 

Let $G$ be a sub-cubic graph, $I$ a maximal independent set in $G$ and $C = V(G) \setminus I$ be a vertex cover in $G$.
  For the vertex 
cover problem, we will define a potential function $\Psi_C(R)$ of reduction $R$. This reduction is the same than in MIS but now it is executed as for the MVC problem.

For instance, when $R$ is basic reduction (b) in Figure \ref{fig:potothers} then its black root vertex is in $I$ in case of MIS. But in the case of MVC, the two white neighbors of this black root are in $C$. %, the vertex cover.
A vertex is white, if it is in $C$, and a vertex is black, if it is not in $C$. Then, we define the \emph{loan} edge $e$ of a reduction $R$ as a contact edge with a white contact vertex. We also define the \emph{debt} of a \emph{white} vertex in the ground of $R$ as the number of times this vertex was incident to a loan edge, let us call it $e'$, of a reduction that was previously executed. Such loan edge $e'$ is also called a \emph{debt} edge of reduction $R$.%

In analogy to MIS problem, we define the exact potential of a reduction $R$ for the MVC problem as:
\begin{equation*}
\Psi_{G,C}(R):= 4 \cdot (|ground(R)| -1) - 5\cdot|C \cap ground(R)|-
loan_{C}(R) + debt_{G,C}(R).
\end{equation*}
Similarly, the potential of reduction $R$ for MVC problem is defined as:
\begin{equation*}
    \Psi_{C}(R):=4\cdot (|ground(R)| -1) -5\cdot|C\cap ground(R)|-loan_{C}(R)+ debt_{C}(R),
\end{equation*}
and observe that $\Psi_{G,C}(R) \leq \Psi_{C}(R)$.
\ignore{
We present definition of \emph{loan} and \emph{debt} edges for minimum vertex cover version. The loan edge $e$ of a reduction $R$ is the edge such that $e \in E_\emph{contact}$ and $f(N(e) \cap V_R) = 0$; and a \emph{debt} edge of $R$ is the edge such that $e \in E_\emph{past}$ and $f(N_G(e) \setminus V_R) = 0$. The situation is same as in MIS problem that for sequence of reduction $S$ executed by complementary algorithm algorithm on graph $G$, we have: $\sum_{R \in S}\sum_{e \in E_\emph{loan}(R)} 1 = \sum_{R \in S}\sum_{e \in E_\emph{debt}(R)} 1$ or $\sum_{R \in S} E_\emph{loan}(R) = \sum_{R \in S} E_\emph{debt}(R)$.
Then, we define potential function $\Psi(R)$:
\begin{equation}
 \Psi(R) = 4|SOL(R)| - 5|OPT(R)| - |E_\emph{loan}(R)| + |E_\emph{debt}(R)| 
\end{equation}
}
\ignore{
\begin{align}
    \Psi(S) &= \sum_{R \in S} \Psi(R) = \sum_{R \in S} (4|SOL(R)| - 5|OPT(R)|) - |E_\emph{loan}(S)| + |E_\emph{debt}(S)|  \\
    &= \sum_{R \in S} (4|SOL(R)| - 5|OPT(R)|)
\end{align}
Therefore, if we can prove for any $S$ executed from a graph $G$ such that $\Phi(S) \leq 0$,
\begin{equation*}
    \sum_{R \in S} (4|SOL(R)| - 5|OPT(R)|) \leq 0
\end{equation*}
\begin{equation*}
      |SOL(S)| \leq \frac{5}{4} |OPT(S)|
\end{equation*}
This proves the $\frac{5}{4}$-approximation ratio.
}%
To finalise our proof, we prove the following ``duality'' lemma.
\begin{lem}[Duality lemma]\label{l:duality-lemma}
Given a sub-cubic graph $G$, an independent set $I$ in $G$, and $C=V(G)\setminus I$ a vertex cover,  for any basic reduction $R$ distinct than an isolated vertex reduction, we have $-\Phi_I(R) \geq \Psi_C(R)$ and $-\Phi_{G,I}(R) \geq \Psi_{G,C}(R)$.
\end{lem}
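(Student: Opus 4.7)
My plan is to reduce both inequalities to the purely combinatorial fact that $|C\cap ground(R)|\ge 1$ for every basic reduction other than the isolated vertex reduction. The first step is to reconcile the two coloring conventions: in the MIS setting a vertex is white when it is outside $I$, while in the MVC setting a vertex is white when it is in $C$. Because $C=V(G)\setminus I$, the two notions of ``white'' coincide. Since the complementary algorithm runs exactly the same execution of \Greedystar in both interpretations, this immediately gives the identities
\[
loan_I(R)=loan_C(R),\qquad debt_I(R)=debt_C(R),\qquad debt_{G,I}(R)=debt_{G,C}(R).
\]

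Next I would compute $-\Phi_I(R)-\Psi_C(R)$ directly, plugging in $\gamma=5$, $\sigma=4$ and using $|C\cap ground(R)|=|ground(R)|-|I\cap ground(R)|$. Thanks to the identifications above, the loan and debt terms appear in $-\Phi_I(R)$ and in $\Psi_C(R)$ with equal magnitude and opposite sign, so they cancel in the difference $-\Phi_I(R)-\Psi_C(R)$. After a short algebraic simplification I expect to obtain exactly
\[
-\Phi_I(R)-\Psi_C(R) \;=\; |C\cap ground(R)|-1,
\]
and the identical calculation, performed with $debt_G$ everywhere in place of $debt$, should yield
\[
-\Phi_{G,I}(R)-\Psi_{G,C}(R) \;=\; |C\cap ground(R)|-1.
\]

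Finally I need $|C\cap ground(R)|\ge 1$ for every basic reduction $R$ that is not the isolated vertex reduction. Such a reduction has root $r$ of degree at least one in the current graph, so there is at least one middle vertex $m\in ground(R)$ with $rm$ an edge of $G$. Since $I$ is an independent set, it cannot contain both endpoints of $rm$, so at least one of $r,m$ lies in $C$, giving $|C\cap ground(R)|\ge 1$. Combining with the two displayed identities yields both $-\Phi_I(R)\ge \Psi_C(R)$ and $-\Phi_{G,I}(R)\ge \Psi_{G,C}(R)$. The main obstacle is essentially bookkeeping: keeping the sign conventions of the primal and dual potentials straight and verifying that the loan/debt quantities really do agree on the nose once the color conventions are aligned; after that, the combinatorial content collapses to the trivial fact that an independent set misses at least one endpoint of every edge.
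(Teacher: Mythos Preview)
Your proof is correct and follows essentially the same route as the paper: identify $loan_I=loan_C$ and $debt_I=debt_C$ (since ``white'' means the same thing in both settings), cancel these terms in $-\Phi_I(R)-\Psi_C(R)$, and reduce everything to the inequality $|C\cap ground(R)|\ge 1$ (equivalently, the paper's $1+|I\cap ground(R)|\le |ground(R)|$). Your final step is in fact cleaner than the paper's: rather than checking the cases edge, triangle, even/odd-backbone separately, you observe uniformly that any non-isolated basic reduction has an edge $rm$ inside its ground, and an independent set misses at least one endpoint of that edge.
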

\begin{proof}
Recall the definition of the potentials:
$$
\Phi_{I}(R):=5-4\cdot|I\cap ground(R)|+
loan_{I}(R)-debt_I(R),
$$ and 
$$
\Psi_{C}(R):=4\cdot (|ground(R)| -1)-5\cdot|C\cap ground(R)|-loan_{C}(R)+debt_{C}(R).
$$ For any reduction $R$, it is easy to check the following statement by the definition of loan and debt edges: $\emph{loan}_I(R) = \emph{loan}_C(R)$ and $\emph{debt}_I(R) = \emph{debt}_C(R)$. Then, we have: $$- \emph{loan}_C(R) + \emph{debt}_C(R) =  - ( \emph{loan}_I(R) - \emph{debt}_I(R)),
$$ 
Therefore, if we can show for any reduction $R$ we have: 
\begin{equation*}
    5 - 4 \cdot |I\cap ground(R)| \leq  -(4 \cdot (|ground(R)-1) - 5\cdot | C \cap ground(R)|),
\end{equation*}
then the lemma is proved. Observe that we have the following: $I\cap ground(R) = ground(R) \setminus (C\cap ground(R))$. By this identity, the last inequality is equivalent to:
\begin{equation}\label{formula-daul}
    1 + | I \cap ground(R)| \leq |ground(R)|.
\end{equation}
Observe that $|ground(R)| \leq 3$ and $| I \cap ground(R)| \leq 2$ for any reduction $R$. Now we have the following cases and in each case we see that inequality (\ref{formula-daul}) holds:
\begin{enumerate}
     \item For a single edge reduction, $| ground(R)| =2$, and $1+ |  I \cap ground(R)| \leq 2$.
    \item For a triangle reduction, $| ground(R)|= 3$, and $1+ |  I \cap ground(R)| \leq 2$.
    \item For an even- or odd-backbone reduction, $| ground(R)| =3$, and $1+ |  I \cap ground(R)| \leq 3$.%
\end{enumerate}
Notice that the argument for the inequality $-\Phi_{G,I}(R) \geq \Psi_{G,C}(R)$ among exact potentials is basically the same. 
\end{proof}
To complete the proof, we need to show that the isolated vertex reduction does not affect our argument. 
\begin{claim}
  For an isolated vertex reduction $R$, $\Psi_C(R) \leq 0$.
\end{claim}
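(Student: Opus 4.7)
The plan is a straightforward case analysis on the color of the single vertex in the ground of $R$. First, I would observe that an isolated vertex reduction satisfies $|ground(R)|=1$, so the leading term $4(|ground(R)|-1)$ vanishes. Moreover, since the root $v$ of $R$ has no neighbors in the current graph, $R$ has no contact edges at all, hence $loan_C(R)=0$. This reduces the claim to showing $-5\,|C\cap ground(R)|+debt_C(R)\le 0$.

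Next I would split on whether the isolated vertex $v$ is black (i.e.\ $v\in I$) or white (i.e.\ $v\in C$). If $v$ is black, then $|C\cap ground(R)|=0$, and because the definition of $debt_C$ sums only over white vertices of the ground, we also get $debt_C(R)=0$, yielding $\Psi_C(R)=0$. If instead $v$ is white, then $|C\cap ground(R)|=1$, while the debt contribution of $v$ is bounded by $\Delta-d_{G_{i-1}}(v)=3-0=3$ since $v$ is isolated in the current graph and $\Delta=3$; this gives $\Psi_C(R)\le -5+3=-2$.

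In both sub-cases $\Psi_C(R)\le 0$, which is the desired inequality. There is essentially no obstacle: the claim is the base case deliberately excluded from the Duality Lemma, and every term of $\Psi_C(R)$ is pinned down immediately by the fact that the ground is a single vertex of degree zero in $G_{i-1}$, together with the definitions of $loan_C$ and $debt_C$. The only thing worth being careful about is that $debt_C$ is taken with respect to the maximum-debt bound $\Delta-d_{G_{i-1}}(v)$ rather than the exact debt, so the worst case (white $v$) still comfortably gives a strictly negative potential.
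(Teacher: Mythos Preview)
Your proposal is correct and follows essentially the same approach as the paper: a two-case split on whether the isolated vertex is black or white, yielding $\Psi_C(R)=0$ in the black case and $\Psi_C(R)\le -2$ in the white case. Your write-up simply unpacks the computation that the paper leaves implicit.
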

\begin{proof}
If $R = \{v\}$ is white, then $\Psi_C(R) \leq -2$. And if $v$ is black, then $\Psi_C(R) = 0$.
\end{proof}
\ignore{
  For branching reductions. There are two classification. For those good reduction $R$ in sense of independent set (abbr. good in $I$):
  $\Psi(R) = -2 - e_3 + e_1$, and $e_1 \leq 1$, thus, $\Psi(R) \leq -1$. For those bad reduction $R$ in $I$:
  $\Psi(R) = 3 - e_3 + e_1$, and since it is bad in $I$, then $e_3 \geq 3$, and $e_1 \leq 1$, therefore, only bad-$(2,5)$ reduction $R$ has positive $1$ value for $\Psi(R)$. Notes that $\Phi(R) = -1$, thus, it satisfies.
  
  For cycle reductions, it is easy to show that $\Psi(R) \leq 0$.
  
  For single edge reduction, only isolated single edge reduction $R$ such that $\Psi(R) =1$. Observe that $\Phi(R) = -1$, thus, it satisfies.
  }

%\piotr{New proof:}

\begin{proof} (Theorem \ref{theo:vertex-cover-1.25}) We have shown in the proof of Theorem \ref{theo:ratio} that if $\calE$ is any execution of \Greedystar on a subcubic graph $G$, then $\Phi_{G,I}(\calE) \geq 0$. Thus if we consider the execution $\calE$ as an execution of the complementary greedy algorithm of Algorithm \ref{algo:coml-greedy} then by the Duality Lemma \ref{l:duality-lemma}, we obtain that $\Psi_{G,C}(\calE) \leq 0$ which proves the theorem.
\end{proof}

\begin{comment}
%% OLD proof:
\begin{proof} (of Theorem \ref{theo:vertex-cover-1.25}) For any reduction $R$, if $\Psi(R) > 0$, then $\Phi(R) < 0$ and $- \Phi_I(R) \geq \Psi_C(R)$. And for every reduction such that $\Phi(R) <0$, if such reduction is executed by complementary greedy algorithm, we have proved there exist unique savings pay for it, and thus $\Phi(S) \geq 0$ in MIS version. And because the dual lemma, the reductions which have these savings in MIS version also have enough savings for it in MVC version. Therefore, $\Psi(S) \leq 0$. This proves Theorem \ref{theo:vertex-cover-1.25}.
\end{proof}
\end{comment}

We will employ in our algorithm a preprocessing technique due to Hochbaum \cite{HOCHBAUM1983243}, which is based on the following theorem.

\begin{theo}[Nemhauser-Trotter \cite{Nemhauser1975}]\label{theorem-N-T}
 For any graph $G = (V,E)$, there is a way to compute the partition $\{V_1,V_2,V_3\}$ of $V$ with time complexity of the bipartite matching problem, such that:
 \begin{enumerate}
     \item There is a maximum independent set $I$ in $G$ containing all of the nodes of $V_1$ but none of $V_2$, \ie $I \cap V_1 = V_1$ and $I \cap V_2 = \emptyset$.
     \item There is no edge between $V_1$ and $V_3$, \ie $N(V_1)\subseteq V_2$.
     \item $\alpha(G(V_3)) \leq \frac{1}{2}|V_3|$
 \end{enumerate}
 Note that if graph $G$ is subcubic, then the running time of computing such partition is $O(n^\frac{3}{2})$.
\end{theo}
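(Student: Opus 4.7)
The plan is to derive the partition from a half-integral optimal solution to the LP relaxation of minimum vertex cover, and then to compute this solution on subcubic graphs via bipartite matching.

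First I would consider the vertex cover LP
\[
\min\Bigl\{\sum_{v\in V} x_v : x_u+x_v\ge 1 \text{ for every } uv\in E,\ 0\le x_v\le 1\Bigr\}.
\]
The central step is to exhibit an optimal solution $x^\ast$ with $x^\ast_v\in\{0,\tfrac12,1\}$ for every $v$. I would prove this half-integrality by the classical uncrossing argument: for any optimal $x$, set $y_v:=\max(0,2x_v-1)$ and $z_v:=\min(1,2x_v)$. Both $y$ and $z$ are LP-feasible, $\tfrac12(y+z)=x$, and $|y|+|z|=2|x|$, so by optimality $|y|=|z|=|x|$. Taking $y$ strictly decreases the number of fractional coordinates in $(\tfrac12,1)$ while taking $z$ strictly decreases those in $(0,\tfrac12)$, so iterating produces a half-integral optimum. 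Then I define
\[
V_1:=\{v:x^\ast_v=0\},\qquad V_2:=\{v:x^\ast_v=1\},\qquad V_3:=\{v:x^\ast_v=\tfrac12\}.
\]

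Next I would verify the three properties. Property (2) is immediate: if $u\in V_1$ and $uv\in E$ then $x^\ast_u+x^\ast_v=x^\ast_v\ge 1$ forces $v\in V_2$. For property (3), suppose for contradiction that $S\subseteq V_3$ is independent with $|S|>|V_3|/2$. Replacing $x^\ast$ on $V_3$ by $0$ on $S$ and $1$ on $V_3\setminus S$ preserves feasibility (any edge inside $V_3$ has at most one endpoint in $S$, and the constraints across $V_2$ and $V_1$ are untouched thanks to property (2)) while strictly lowering the objective, contradicting optimality. For property (1), given any maximum independent set $I$ of $G$, I would form $I^\star := V_1\cup(I\cap V_3)$. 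Property (2), combined with the fact that $V_1$ itself is independent (since $x^\ast_u+x^\ast_v\ge 1$ forbids an edge inside $\{x^\ast=0\}$), shows $I^\star$ is independent. To see $|I^\star|\ge|I|$, plug the indicator of $V\setminus I$ into the LP to obtain $|V_2|+|V_3|/2 = |x^\ast|\le n-|I|$, i.e.\ $|I|\le|V_1|+|V_3|/2$; an exchange/König-type argument applied to the bipartite subgraph induced between $V_1\setminus I$ and $I\cap V_2$ then upgrades this bound to $|I\cap V_1|+|I\cap V_2|\le|V_1|$, which is exactly the desired $|I^\star|\ge|I|$.

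The hard part will be computing $x^\ast$ in matching time. For this I would pass to the bipartite double cover $H$ of $G$: duplicate every vertex $v\in V$ into $v'$ and $v''$, and for each edge $uv\in E$ put edges $u'v''$ and $u''v'$ in $H$. By König's theorem a minimum vertex cover $C_H$ of $H$ can be read off from a maximum matching; then $x^\ast_v:=\tfrac12\bigl|\{v',v''\}\cap C_H\bigr|$ is a half-integral optimum of the original LP, because any fractional LP solution of $G$ lifts to a fractional cover of $H$ of the same value and vice versa, while $H$ being bipartite makes its LP integral. For subcubic $G$ the graph $H$ has $2n$ vertices, $O(n)$ edges, and maximum degree three, so Hopcroft--Karp computes a maximum matching in $O(n^{3/2})$ time, producing the required partition within the stated bound.
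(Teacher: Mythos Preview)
The paper does not prove this theorem; it is quoted from Nemhauser and Trotter \cite{Nemhauser1975} (via Hochbaum) and used as a black box, so there is no in-paper proof to compare against. Your sketch is largely the standard LP/half-integrality route, but there is one genuine error.

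Your ``uncrossing'' step for half-integrality is wrong: the vector $y$ with $y_v=\max(0,2x_v-1)$ is \emph{not} LP-feasible in general. For an edge $uv$ with $x_u=x_v=0.6$ (which satisfies $x_u+x_v\ge 1$) you get $y_u=y_v=0.2$ and $y_u+y_v=0.4<1$. The claimed iteration also fails to terminate (e.g.\ $0.7\mapsto 0.4\mapsto 0.8\mapsto\cdots$). The correct local argument is the $\varepsilon$-shift: with $V^+=\{v:\tfrac12<x_v<1\}$ and $V^-=\{v:0<x_v<\tfrac12\}$, move every coordinate in $V^+$ down and every coordinate in $V^-$ up by a small $\varepsilon$, and symmetrically for the other direction; both perturbations are feasible (no edge can have both endpoints in $V^-$), their average is $x$, and optimality forces $|V^+|=|V^-|$, after which one can push to half-integers.

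That said, your proof is easily salvaged without this step: the bipartite double cover construction you give at the end already produces a half-integral optimum directly (K\H{o}nig on $H$ gives an integral minimum vertex cover $C_H$, and $x^\ast_v=\tfrac12|\{v',v''\}\cap C_H|$ is a half-integral LP optimum for $G$). If you move that argument to the front and define $V_1,V_2,V_3$ from it, your verifications of properties (2) and (3) go through verbatim, your exchange argument for (1) --- showing $|I\cap V_2|\le |V_1\setminus I|$ by perturbing $x^\ast$ down on $I\cap V_2$ and up on $V_1\setminus I$ --- is correct, and the $O(n^{3/2})$ bound for subcubic graphs via Hopcroft--Karp on $H$ is fine.
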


\begin{theo}\label{theorem-cg-6/5}
  The complementary greedy algorithm of Algorithm \ref{algo:coml-greedy} combined with the Nembauser-Trotter reduction achieves a $\frac{6}{5}$-approximation ratio with running time $O(n^2)$ for the minimum vertex cover problem on subcubic graphs.
\end{theo}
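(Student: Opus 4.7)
The plan is to use Hochbaum's classical trick combining the Nemhauser-Trotter preprocessing with our $5/4$-approximation for MIS. Concretely, the algorithm first computes in time $O(n^{3/2})$ the partition $\{V_1,V_2,V_3\}$ guaranteed by Theorem \ref{theorem-N-T}, then runs Algorithm \ref{algo:coml-greedy} (equivalently, \Greedystar) on the induced subgraph $G[V_3]$ to obtain an independent set $J\subseteq V_3$, and finally outputs $C:=V_2\cup(V_3\setminus J)$ as the vertex cover. Since $G$ is subcubic, so is $G[V_3]$, and by Theorem \ref{theo:ratio} the greedy phase runs in $O(n^2)$, which dominates the total running time.

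For correctness of the ratio, the first step is to identify the optimal cover in terms of the partition. By property 1 of Theorem \ref{theorem-N-T}, there exists a maximum independent set of $G$ of the form $V_1 \cup I^\ast$, where $I^\ast$ is a maximum independent set of $G[V_3]$; hence
\[
\tau(G) \;=\; |V|-\alpha(G) \;=\; |V_2|+\bigl(|V_3|-\alpha(G[V_3])\bigr) \;=\; |V_2|+\tau(G[V_3]).
\]
Property 2 ensures that $C$ is indeed a vertex cover of $G$: every edge with an endpoint in $V_1$ has its other endpoint in $V_2\subseteq C$, edges inside $V_2$ are covered by $V_2$, and edges inside $V_3$ are covered by $V_3\setminus J$ because $J$ is an independent set of $G[V_3]$.

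The key computation is to bound $|C|$ against $(6/5)\tau(G)$. Let $\alpha:=\alpha(G[V_3])$ and $n_3:=|V_3|$. By Theorem \ref{theo:ratio} applied to $G[V_3]$, we have $|J|\geq \tfrac{4}{5}\alpha$, and by property 3 of Theorem \ref{theorem-N-T} we have $\alpha\leq n_3/2$. Therefore
\[
|C|-\tfrac{6}{5}\tau(G) \;=\; |V_2|+n_3-|J| \;-\; \tfrac{6}{5}\bigl(|V_2|+n_3-\alpha\bigr)
\;\leq\; -\tfrac{1}{5}|V_2| -\tfrac{1}{5}n_3 + \tfrac{2}{5}\alpha \;\leq\; -\tfrac{1}{5}|V_2| \;\leq\; 0,
\]
where the penultimate inequality uses $\alpha\leq n_3/2$. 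This gives the desired $6/5$-approximation guarantee.

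There is no real obstacle here; the proof is a short calculation. The only subtlety worth spelling out is \emph{why} the trick yields $6/5$ rather than the naive $5/4$: the Nemhauser-Trotter reduction produces a residual graph $G[V_3]$ in which every maximum independent set occupies at most half of the vertices, so a $5/4$-approximation to MIS on $G[V_3]$ translates into a $(2-4/5)=6/5$-approximation for MVC on $G[V_3]$, and the part $V_2$ contributes exactly to both the algorithm's cover and $\tau(G)$, so it can only improve the ratio. The running time is $O(n^{3/2})+O(n^2)=O(n^2)$, completing the theorem.
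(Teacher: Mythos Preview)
Your proof is correct and follows essentially the same approach as the paper: apply the Nemhauser--Trotter preprocessing, run \Greedystar on $G[V_3]$, and combine the bounds $|J|\ge\tfrac{4}{5}\alpha(G[V_3])$ and $\alpha(G[V_3])\le |V_3|/2$ to obtain the $6/5$ ratio. The only cosmetic difference is that you compute $|C|-\tfrac{6}{5}\tau(G)$ directly, whereas the paper first bounds $|C_3|\le \tfrac{6}{5}(|V_3|-\alpha(G[V_3]))$ and then adds $|V_2|$; the arithmetic is the same.
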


\begin{proof}
We apply the Nemhauser-Trotter reduction from Theorem \ref{theorem-N-T} to $G =(V,E)$, and $V$ is partitioned into $V_1,V_2,V_3$. Then we run the $\frac{5}{4}$-approximation greedy algorithm \Greedystar on $G_3 = G(V_3)$, and we choose the complement of the independent set $J$ output by the algorithm. Then let $C_3 = V_3 \setminus J$ denote the resulting vertex cover in $G_3$. Let also $C^\ast$ denote a minimum vertex cover of $G$, and $I^\ast$ be a maximum independent set in $G$. 

Observe that for (any) graph $G_3$, we have that $V(G_3) \setminus I^\ast$ is the complement of a maximum independent set on $G_3$, so its size is equal to the size of the minimum size of a vertex cover in $G_3$. Analogously, let $I_3^\ast$ be a maximum independent set in $G_3$ and $C_3^\ast = V_3 \setminus I_3^\ast$ be a minimum vertex cover in $G_3$.

Thus, we have: $|C^\ast| = |V_2| + |C_3^\ast| = |V_2| + (|V_3| - \alpha(G_3))$ by Theorem \ref{theorem-N-T}.
The vertex cover of $G_3$ computed by the algorithm is $C_3 = V_3 \setminus J$. 
By Theorem \ref{theo:ratio}, we have: $|J| \geq \frac{4}{5} \alpha(G_3)$. By Theorem \ref{theo:ratio}, $\alpha(G_3) \leq \frac{|V_3|}{2}$, thus,  $0 \leq \frac{|V_3|}{5} - \frac{2}{5}\alpha(G_3)$, and
\begin{equation*}
    |C_3| \leq |V_3| - \frac{5}{4} \alpha(G_3) \leq |V_3| - \frac{4}{5} \alpha(G_3) + \frac{|V_3|}{5} - \frac{2 \alpha(G_3)}{5} = \frac{6}{5} \cdot (|V_3| - \alpha(G_3)).
\end{equation*}
Our algorithm outputs $C = V_2 \cup C_3$ as the vertex cover in $G$, and we have that
\begin{equation*}
    |C| = |V_2| + |C_3| \leq \frac{6}{5}|V_2| + \frac{6}{5} |C_3^\ast| \leq \frac{6}{5} |C^\ast|,
\end{equation*}
and this concludes the proof of the approximation guarantee. The running time bound follows from Theorems \ref{theo:ratio} and \ref{theorem-N-T}.
\end{proof}

\section{Lower bounds}\label{s:lower_bounds}

Given a graph $G=(V,E)$, we say that $I$ is a \emph{greedy set} of $G$, if $I$ is an independent set, and its elements can be ordered, $I=\{v_1,\dots,v_k\}$ in such a way that, for all $1\le i\le k$, the vertex $v_i$ has minimum degree in the subgraph $G_i$, where $G_i:= G[V\setminus N_G[\{v_1,\dots, v_{i-1}\}]]$. The size a maximum greedy set in $G$ is denoted by $\alpha^+(G)$.

\subsection{Ultimate lower bounds for high degree graphs.}

\begin{theo}\label{theo:deltaapx}
The ratio of any Greedy like algorithm in graphs with degree at most $\Delta$ is at least $\dfrac{\Delta+1}{3}-\calO(1/\Delta)$.
\end{theo}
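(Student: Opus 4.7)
The plan is to construct, for each sufficiently large $\Delta$, a family of graphs $\{G_k\}_{k\ge 1}$ with maximum degree $\Delta$ such that
\[
\alpha(G_k)\;=\;k(\Delta+1)+O(k)\qquad\text{and}\qquad \alpha^+(G_k)\;\le\; 3k+O(k/\Delta).
\]
Since any greedy algorithm (with any advice) outputs an independent set of size at most $\alpha^+(G_k)$, the ratio between $\alpha$ and the greedy output is at least
\[
\frac{\alpha(G_k)}{\alpha^+(G_k)}\;\ge\;\frac{\Delta+1}{3}-O(1/\Delta),
\]
as claimed. The construction generalizes the subcubic lower-bound graph of Halld\'orsson--Radhakrishnan~\cite{Halldorsson1997}; their block will be scaled up so that each ``killer'' vertex destroys $(\Delta+1)/3$ optimum vertices instead of just one.

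The central object is a gadget $H$ on $\Theta(\Delta)$ vertices with three designated \emph{killer} vertices $k_1,k_2,k_3$ and an \emph{optimum set} $A=A_1\sqcup A_2\sqcup A_3$ with $|A_i|=(\Delta+1)/3$, where $k_i$ is adjacent to every vertex of $A_i$. To make $H$ ``well-covered from below'' by the three killers, the degree of each $k_i$ is tuned to be exactly $(\Delta+1)/3$, while each $a\in A$ is given additional neighbors among vertices of other gadgets (or dedicated padding vertices) so that $d_G(a)\ge (\Delta+1)/3+1$. The graph $G_k$ is then $k$ copies of $H$ together with the inter-gadget edges needed to realize these degrees. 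The first step is to check that for suitable choices of auxiliary edges the gadget is $\Delta$-regular (or close enough), and that $\bigcup_i A$ across all gadgets forms an independent set of size $k(\Delta+1)-O(k)$, so $\alpha(G_k)\ge k(\Delta+1)-O(k)$.

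The second step is to bound $\alpha^+(G_k)$. I would prove by induction on $k$ that any greedy execution picks at most $3+O(1/\Delta)$ vertices per gadget. In the base step, the global minimum-degree vertex lies in some killer set $\{k_1,k_2,k_3\}$ (these are the strict minimum at the start); picking $k_i$ removes $A_i$ and by the degree-tuning the new minimum-degree vertex is again a killer, either in the same gadget (destroying another $A_j$) or in a neighboring one. Iterating, greedy picks exactly the three killers in each gadget it visits. The inductive step reduces to a smaller instance on $k-1$ gadgets plus a controlled amount of ``residue'' from removed auxiliary edges; the $O(k/\Delta)$ slack is precisely what is needed to absorb boundary effects (e.g.\ padding vertices whose degree accidentally drops below $(\Delta+1)/3$ after many removals).

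The main obstacle is making the argument robust against adversarial advice: we must exclude the possibility that greedy, by picking some $a\in A_i$ first, triggers a cascade that yields substantially more than $3$ picks per gadget. This requires the auxiliary edges incident to $A$-vertices to be arranged so that (i) no $A$-vertex ever becomes the strict minimum-degree vertex before the three killers of its gadget are processed, and (ii) even if greedy is allowed to tie-break in favor of an $A$-vertex once its degree equals $(\Delta+1)/3$, the resulting execution still picks at most $3+O(1/\Delta)$ vertices per gadget. I expect to enforce (i) by having each $A$-vertex share its ``external'' neighbors with $A$-vertices of distinct gadgets so that dropping below threshold requires many simultaneous removals, and to handle (ii) by a direct case analysis on the local structure around the broken killer. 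The book-keeping for the $O(1/\Delta)$ error is routine once the gadget is in place.
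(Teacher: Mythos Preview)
Your proposal is a plan rather than a proof, and the decisive step is the one you explicitly leave open. You correctly identify ``the main obstacle''—ensuring that no $A$-vertex ever becomes a minimum-degree vertex before the three killers of its gadget are consumed—but then only write that you ``expect to enforce (i)'' by some unspecified arrangement of external edges and to ``handle (ii) by a direct case analysis.'' Neither is carried out, and this is exactly the hard part. Since each $a\in A_i$ is adjacent only to $k_i$ inside its gadget, it needs roughly $(\Delta+1)/3$ \emph{external} neighbors; once greedy has processed some gadgets, those external neighbors disappear and the degrees of the remaining $A$-vertices drop. Controlling this cascade so that the strict minimum stays on the killers throughout—against adversarial advice—is the entire construction, and you have not shown how to do it. There is also an arithmetic slip: with $\alpha^+(G_k)\le 3k+O(k/\Delta)$ the ratio you obtain is
\[
\frac{k(\Delta+1)}{3k+ck/\Delta}=\frac{\Delta+1}{3}\cdot\frac{1}{1+c/(3\Delta)}=\frac{\Delta+1}{3}-\frac{c}{9}+O(1/\Delta),
\]
i.e.\ $(\Delta+1)/3-\Theta(1)$, not $(\Delta+1)/3-O(1/\Delta)$. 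To reach the stated bound the additive slack in $\alpha^+$ must be $O(1)$ independent of $k$, and then $k$ must be taken large in $\Delta$.

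For comparison, the paper sidesteps both issues by using a \emph{chain} rather than gadgets. For $\Delta=3\ell-1$ it takes the alternating chain $K_\ell - \overline{K_\ell} - K_\ell - \overline{K_\ell} - \cdots - K_\ell$ with complete bipartite joins between consecutive blocks. The endpoint clique has all vertices of degree $2\ell-1$, strictly smaller than the $\overline{K_\ell}$-vertices' degree $2\ell$, so greedy has no tie to break: it must pick from the end clique, which deletes that clique and the adjacent $\overline{K_\ell}$, and the next clique becomes the new endpoint. Thus any greedy picks exactly one vertex per clique while the optimum is all $\overline{K_\ell}$-vertices, giving ratio $\ell-O(\ell^2/n)=(\Delta+1)/3-O(1/\Delta)$ once $n=\Omega(\Delta^3)$. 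The robustness against advice is automatic because the minimum-degree vertex is (up to irrelevant symmetry within a clique) unique at every step; the other residue classes of $\Delta$ modulo $3$ use similar chains with added matchings to fine-tune degrees. If you want to rescue your gadget approach, you would need to replicate this ``unique strict minimum at every step'' property, which your current sketch does not provide.
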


This result is an extension of Theorem 6 in \cite{Halldorsson1997}. In this result Halld{\'{o}}rsson and Radhakrishnan present examples where the ratio between the worst execution of the basic Greedy and the optimal independent set is $\dfrac{\Delta+2}{3}-\calO(\Delta^2/n)$. However, on these examples there exists several vertices with minimum degree and picking the right minimum degree vertex could lead greedy to an optimal solution. Equivalently, it means that there exists graphs of bounded-degree where the \emph{minimum} greedy set is small compare to the maximum independent set. We prove that we can the observation to the \emph{maximum} greedy set while keeping roughly the same ratio. Our extension of these examples consists in increasing the degree by one of some vertices of this graphs so that any greedy set has the same size and the corresponding ratio is $\dfrac{\Delta+1}{3}-\calO(1/\Delta)$. 

\begin{figure}
    \centering
    \includegraphics[width=0.7\textwidth]{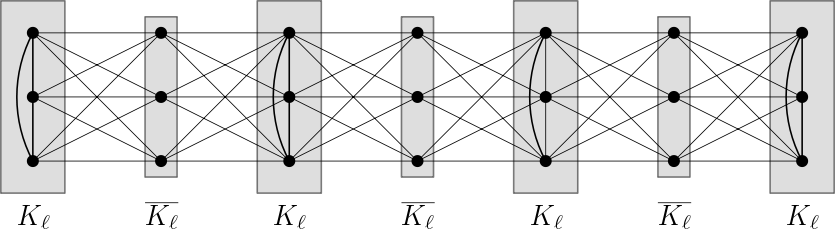}
    \caption{An example when $\ell =3$. $K_\ell$ and $\overline{K_\ell}$ respectively denotes a clique and an independent set of size $\ell$.}
    \label{fig:lowerboundanygreedy}
\end{figure}
\begin{proof}
We show this construction for the case $\Delta \equiv 2 \pmod 3$. See Figure \ref{fig:lowerboundanygreedy}. Let $\ell$ be the integer such that $3\ell-1=\Delta$. The graph consists in a chain of subgraphs, alternating a clique on $\ell$ vertices and an independent set of size $\ell$. Each subgraph is completely connected with the  adjacent subgraphs in the chain. This structure ends with a complete graph on $\ell$.   vertices. %Each clique is fully connected to the following independent set while each independent set miss only one perfect matching to the following independent set. At this stage, the construction is the same than in \cite{Hall1997}.
The degree of the vertices in the extreme clique is $2\ell-1$, while the degree of vertices of other cliques and independent sets are respectively $\Delta=3\ell-1$ and $2\ell$. Any greedy algorithm pick one vertex in each clique while the optimal solution is the union of all vertices in the independent sets. If $n$ denotes the number of vertices in the graph, the ratio between the size of the optimal solution and the size of the solution returned is 
$$
\dfrac{(n-\ell)/2}{(n-\ell)/2\ell +1}=\ell - \dfrac{\ell}{(n-\ell)/2\ell+1}= \ell - \calO(\ell^2/n)= \dfrac{\Delta+1}{3}-\calO(\Delta^2/n)
$$
In particular for any instance where $n = \Omega(\Delta^3)$, we get the expected result.

%For cases $\Delta \equiv 1 \text { or } 0 \pmod 3$, we need a more complicated chain  \dots
\begin{figure}
    \centering
    \includegraphics[width=\textwidth]{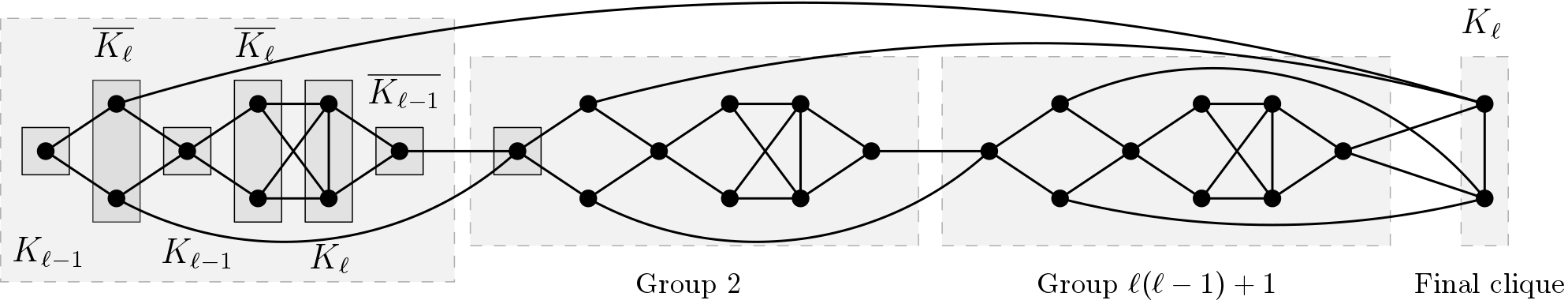}
    \caption{The construction when $\Delta=3\ell-2$. }
    \label{fig:delta1}
\end{figure}

For the case $\Delta \equiv 1 \pmod 3 $, we need a more complicated graph that can be describe as a chain of groups of six subgraphs. Consider the integer $\ell$ such that $3\ell -2 = \Delta$. Each group is formed by a chain of subgraphs of size $\ell$ or $\ell-1$ that are alternatively a clique and an independent set. The total graph consists in a chain of $\ell(\ell-1)+1$ such groups where the last independent set and the first clique of the next group are completely connected. 
Then, this chain ends with a clique of size $\ell$ totally connected with the last independent set of the last group. Additionally, add a matching of size $\ell-1$ between the first independent set of each group to the first clique of the next group. Since these independent sets have size $\ell$, there is one unmatched vertex per each such independent set. 
Finally add an edge from each of these vertices to the final clique. It is not difficult to see that this can be done so that all vertices of the final clique have degree $3\ell-2=\Delta$. See Figure \ref{fig:delta1}. We can see that on this graph, the maximum degree is $\Delta=3\ell-2$, the vertices of the first clique of the first group have degree $2\ell-2$, while all independent set vertices have degree $2\ell -1$. It is not difficult to check that any greedy algorithm will pick one vertex in each clique for a total of $3(\ell(\ell-1)+1)+1$ vertices while the maximum independent set consists of the union of all independent sets from each group. This number is $(3\ell-1)(\ell(\ell-1)+1)$. The corresponding ratio is 
$$
\dfrac{3\ell-1}{3}-\dfrac{3\ell-1}{9(\ell(\ell-1)+1/3)}=\dfrac{\Delta+1}{3}-\calO(1/\Delta)
$$

The case $\Delta \equiv 0 \pmod 3$ is treated similarly than the previous one, using instead the following group
$$
K_{\ell-1} - \overline{K_{\ell+1}} - K_{\ell} - \overline{K_{\ell}} - K_{\ell} - \overline{K_{\ell}} 
$$
and where matchings are between the first independent set and the first clique of the following group and between the last independent set and the last clique of the next group. Details of the construction and calculation are left to the curious reader.
\end{proof}

\subsection{Cubic planar graphs}

%\mathieu{Intro to negative results}

% Given a graph $G=(V,E)$, we say that $I$ is a \emph{greedy set} of $G$, if $I$ is an independent set, and its elements can be ordered, $I=\{v_1,\dots,v_k\}$ in such a way that, for all $1\le i\le k$, the vertex $v_i$ has minimum degree in the subgraph $G_i$, where $G_i:= G[V\setminus N_G[\{v_1,\dots, v_{i-1}\}]]$. The size a maximum greedy set in $G$ is denoted by $\alpha^+(G)$.

The maximisation problem \pr{MaxGreedy} consists in finding a maximum size greedy set in a given graph $G$. This problem was shown to be NP-hard \cite{BODLAENDER1997101}. We first prove that this problem remains NP-hard in the very restricted class of planar cubic graphs.

%  \mathieu{I will check this section later}
% A cubic graph is a graph where any vertex has degree exactly three. In this section we prove that finding a maximum greedy set is NP-complete even on the very restricted class of cubic planar graphs. We know that \pr{MIS} is NP-hard in this class \cite{garey}, and even APX-hard for cubic graphs \cite{Ali00}. 

% An easy way to prove that \pr{MaxGreedy} is hard would be to prove that $\alpha(G)=\alpha^+(G)$ when $G$ is planar and cubic. But Halldorson \emph{et al.} \cite{Hall95} proved that it is not always the case (see Figure \ref{contrex1}). One reason for that difference is that the only vertex of degree three picked by \algo{Greedy} is the first one. After that, there always exists a vertex with degree one or two which is a $2$-neighbor of an other vertex previously picked. Then sometimes a neighbor of a removed vertex is forced to be picked even if it is not supposed to belong to the maximum solution.

%To reach our goal, we will design a reduction from \pr{MIS} in planar cubic graphs and for that, we first introduce some gadget on the edges which will help to avoid these forced moves.%To avoid these 'forced' moves, we transform a planar cubic graph $G$ into an other planar cubic graph $G'$ by 'blowing' each of its vertex in a slightly complex gadget. An optimam greedy sequence  

\begin{theo}\label{hardcubic}
\pr{MaxGreedy} is NP-complete for planar cubic graphs.
\end{theo}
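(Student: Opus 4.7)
Membership in NP is immediate: a polynomial certificate is an ordering $(v_1,\dots,v_k)$ of a candidate greedy set, and validity is checked by simulating the greedy process, verifying at each step that $v_i$ has minimum degree in $G_i := G[V\setminus N_G[\{v_1,\dots,v_{i-1}\}]]$.

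For NP-hardness, the plan is to reduce from planar $3$-SAT, which is well known to be NP-hard (Lichtenstein). Given a formula $\varphi$ with a planar embedding of its variable–clause incidence graph, I would build a planar cubic graph $G_\varphi$ by local gadget replacement. For each variable $x$ I place a \emph{variable gadget}, a small planar cubic subgraph with two ``boundary'' regions chosen so that any maximum greedy set restricted to the gadget falls into one of exactly two extremal modes: a ``true'' mode that leaves a distinguished vertex of the $x$-boundary available to neighbouring gadgets, and a ``false'' mode that leaves the analogous vertex of the $\bar x$-boundary available. For each clause $c$ I place a planar cubic \emph{clause gadget} attached to the three literal-boundaries, designed so that its local greedy contribution is maximum iff at least one of the three literal-boundaries has been activated (\ie the corresponding variable was set to satisfy $c$). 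Wire gadgets, built as cubic planar chains of small even cycles, propagate the chosen literal along the planar embedding while contributing a fixed number of vertices to the greedy set. Summing the local maxima across all gadgets gives a threshold $M$, and the core claim is that $\alpha^+(G_\varphi) = M$ iff $\varphi$ is satisfiable, and $\alpha^+(G_\varphi) < M$ otherwise.

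The main obstacle is robustness to the initial freedom of greedy: since $G_\varphi$ is cubic, every vertex has the same starting degree and the first reduction can be performed at any vertex, so the gadgets must force coherence after any initial pick. I expect to handle this by giving each variable gadget an involutive symmetry that exchanges its two modes, ensuring that any initial vertex inside the gadget can be completed in either mode with the same local greedy count. The most delicate piece will be the clause gadget: designing a small planar cubic subgraph whose greedy optimum is reduced by exactly one when all three literal-inputs are ``false'', and restored as soon as a single literal is ``true'', requires a careful case analysis over how greedy interleaves with the wire-side reductions. A natural device is to hide a short odd cycle inside the gadget whose parity can only be ``broken'' once at least one literal pendant has been removed by a prior greedy pick; verifying that no alternative ordering of the greedy choices inside a clause gadget can compensate for the missing break is where I expect the proof to do real work.

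Planarity and $3$-regularity are preserved throughout, since each gadget is constructed to be planar and cubic on its internal vertices, and each gadget plugs into the planar incidence graph of $\varphi$ at its natural position with prescribed boundary degrees. This yields a polynomial-time many-one reduction from planar $3$-SAT to \pr{MaxGreedy} on planar cubic graphs, completing the proof.
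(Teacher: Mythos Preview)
Your proposal is a plan, not a proof. You name the source problem (planar $3$-SAT), the gadget types (variable, clause, wire), and the intended threshold equation, but you do not construct a single gadget, and you explicitly flag the two places where the argument has to do real work --- robustness to an arbitrary first pick in a cubic graph, and the clause gadget whose greedy optimum drops by exactly one when all three literals are false --- as things you ``expect'' to handle. For \pr{MaxGreedy} these are not routine: because every vertex of a cubic graph has minimum degree at the start, any vertex can be the first pick, so your gadgets must be analysed under \emph{every} possible entry point and every interleaving of greedy choices across gadget boundaries. Without concrete gadgets and the accompanying case analysis, there is no proof here.

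The paper takes a much shorter route that sidesteps exactly this difficulty. It reduces from \pr{MIS} on planar cubic graphs (already NP-hard) by replacing every edge $uv$ of $G$ with a fixed planar cubic gadget $\mathcal{H}_{uv}$. Two simple counting claims then do all the work: (i) for any independent set $S'$ in $G'$, each gadget contributes at most $9$ vertices, and only $8$ if both original endpoints $u,v$ are in $S'$, which gives $\alpha(G')\le\alpha(G)+9m$; (ii) starting from a maximum independent set $S$ of $G$, one exhibits an explicit greedy execution on $G'$ that picks $S$ together with $9$ vertices per gadget. Hence $\alpha^+(G')=\alpha(G')=\alpha(G)+9m$, and the NP-hardness of computing $\alpha^+$ follows directly from that of computing $\alpha$. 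The key idea you are missing is that one can design the edge gadget so that the \emph{maximum greedy set coincides with the maximum independent set}; this collapses the hard-to-control ``greedy can start anywhere'' issue into a single upper bound on $\alpha(G')$, which is a purely combinatorial statement about independent sets and needs no analysis of greedy orderings at all.
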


The proof is a reduction from \pr{MIS} in cubic planar graphs, which is NP-hard \cite{garey}.

\begin{proof}
 Let $G=(V,E)$ be a cubic planar graph with $m$ edges.  %with $V=\{v_1,\dots,v_n\}$ and 
 Let us construct a graph $G'$ by replacing each edge $uv\in E$ by the structure $\mathcal{H}_{uv}$ described in \hyperref[edge]{Figure \ref{edge}}. 
 %When we refer to this gadget, we will use the same notations. 
We call
$$
V':=V(G')=V\cup\bigcup_{e\in E}V(\mathcal{H}_e)
$$
and 
$$
E':=E(G')=\bigcup_{uv\in E}\left(E(\mathcal{H}_e)\cup\{au,gv\}\right)
$$
where $au$ and $gv$ correspond to the edges connecting $u$ and $v$ to the graph $\mathcal{H}_{uv}$.

\begin{figure}[h]%[12]{R}{6.8cm} 
\vspace{5pt}
 \centering
  \includegraphics[width=8.8cm]{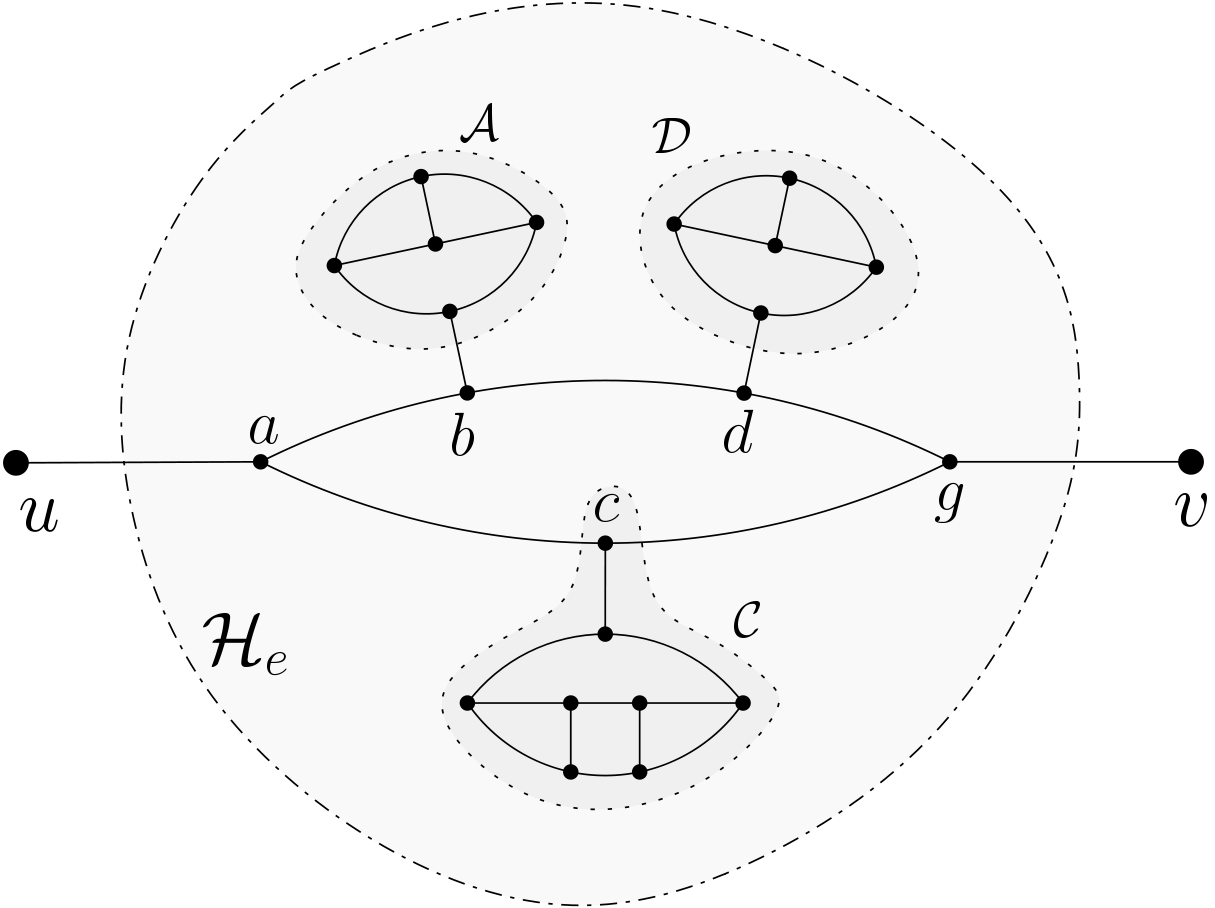}
  \caption{Each edge $e=uv$ is replaced by this gadget $\mathcal{H}_e$}\label{edge}
 \vspace{-0pt}
\end{figure}

$G'$ has order $|V|+22m$ and can be computed in polynomial time.
 
% It's easy to check that this graph is still cubic and planar. Moreover, we have
% $$
% \alpha^+(G')=\alpha(G')=\alpha(G)+9m
% $$

\begin{claim}
 Let $S'\subseteq V'$ be an independent set in $G'$ and $e=uv\in E$. Then, $|V(\mathcal{H}_e)\bigcap S'|\le 9$. Moreover, if both $u$ and $v$  belong to $S'$ then $|V(\mathcal{H}_e)\bigcap S'|\le 8$.
\end{claim}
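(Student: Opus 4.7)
My strategy is a straightforward partition argument tailored to the 22-vertex gadget $\mathcal{H}_e$. First I would try to exhibit a partition $V(\mathcal{H}_e)=P_1\sqcup P_2\sqcup\cdots\sqcup P_9$ of the 22 vertices into nine disjoint subsets, each of which induces a clique in $\mathcal{H}_e$ (most naturally a mix of triangles and single edges, since $22=4\cdot 3+5\cdot 2$ suggests four triangles and five edges, though the exact mix will depend on the gadget). Since any independent set contains at most one vertex from each clique, summing yields $|V(\mathcal{H}_e)\cap S'|\le 9$ immediately. The existence of such a partition has to be read off from the structure in Figure \ref{edge}: I would inspect each labelled triangle or tightly-connected subgroup of $\mathcal{H}_e$ and greedily peel them off until the remaining vertices form matched edges.

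For the strengthened bound when both $u,v\in S'$, observe that the only attachments between $\mathcal{H}_e$ and the rest of $G'$ are the edges $au$ and $gv$. Therefore $u,v\in S'$ forces $a,g\notin S'$. The plan is then to show that the partition can be chosen so that removing $a$ and $g$ from their respective parts drops the total capacity from $9$ to $8$. The most natural way for this to happen is for either $a$ or $g$ to lie in one of the ``edge'' parts $P_i=\{a,x\}$: if $a\notin S'$, then the other endpoint $x$ is forced to provide the at-most-one-vertex contribution of $P_i$; so the gain is not direct from this part alone. Instead I would look for a part $P_i$ (for instance a triangle) containing $a$ together with two vertices $x,y$ that are themselves joined by an edge in $\mathcal{H}_e$ to vertices of a \emph{different} part $P_j$ in such a way that an independent set of size $9$ is forced to use $a$. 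In symbolic form: I would argue that any maximum independent set of $\mathcal{H}_e$ of size exactly $9$ must contain $a$ or $g$, equivalently that $\alpha(\mathcal{H}_e\setminus\{a,g\})\le 8$. A short case analysis on which vertex of each clique part is chosen, propagating the forced exclusions across the edges of $\mathcal{H}_e$, should yield the contradiction.

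The main obstacle is entirely in identifying the right partition of the specific gadget and verifying the second, more delicate claim about the forced presence of $a$ or $g$ in any independent set of size $9$. This is a finite combinatorial check; the cleanest way to carry it out is probably to enumerate, for each clique part $P_i$, the possible choices of one representative vertex, and to use the edges of $\mathcal{H}_e$ that cross between parts to prune all configurations that avoid both $a$ and $g$. If the direct partition argument proves too rigid for the second inequality, a fallback is to compute $\alpha(\mathcal{H}_e\setminus\{a,g\})$ directly by a small case analysis on, say, whether the neighbour $b$ of $a$ is in $S'$ or not, and to propagate through the rest of the gadget.
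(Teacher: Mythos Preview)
Your overall strategy is sound, but it is more laborious than necessary, and the paper's proof shows why.

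For the first inequality, the paper does not partition into nine cliques. Instead it partitions $V(\mathcal{H}_e)$ into just four blocks --- three named sub-gadgets $\mathcal{A}$, $\mathcal{C}$, $\mathcal{D}$ and the four-vertex set $\{a,b,d,g\}$ --- and records the independence number of each block: $\alpha(\mathcal{A})\le 2$, $\alpha(\mathcal{D})\le 2$, $\alpha(\mathcal{C})\le 3$, $\alpha(\{a,b,d,g\})\le 2$. Summing gives $9$. Your nine-clique partition would also give the bound, but the coarser decomposition is the one that pays off in the second step.

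For the second inequality, the paper's argument is a one-liner precisely because $a$ and $g$ sit together in the block $\{a,b,d,g\}$: once $u,v\in S'$ force $a,g\notin S'$, what remains of that block is $\{b,d\}$, and since $b$ and $d$ are adjacent, the block's contribution drops from $2$ to $1$. Total: $2+2+3+1=8$. No case analysis, no enumeration of size-$9$ independent sets, no propagation through the gadget.

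Your clique-partition scheme makes this step artificially hard: as you yourself observe, deleting a single vertex from a triangle or an edge does not reduce that clique's capacity, so you are pushed towards the global argument that every maximum independent set must use $a$ or $g$. That argument can be completed, but the moral is that the partition should be chosen with the second inequality in mind. Put $a$ and $g$ into a single block whose independence number is $2$ and drops to $1$ when both are removed; the path-like block $\{a,b,d,g\}$ (with $bd$ an edge) does exactly this.
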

We can easily check that $|\mathcal{A}\cap S'| \le 2$, $|\mathcal{D}\cap S'| \le 2$, $|\mathcal{C}\cap S'| \le 3$ and $|\{a,b,d,g\}\cap S'| \le 2$. Thus, $|V(\mathcal{H}_e)\bigcap S'|\le 9$. Moreover, if both $u$ and $v$  belong to $S'$ then $\{a,g\}\cap S'=\emptyset$ and then $|\{a,b,d,g\}\cap S'| \le 1$ which gives $|V(\mathcal{H}_e)\bigcap S'|\le 8$.

\begin{claim}
$\alpha(G')\le \alpha(G)+9m$
\end{claim}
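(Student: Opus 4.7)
The plan is to exploit the decomposition of $V(G')$ into the original vertex set $V$ together with the $22$ vertices of each gadget $\mathcal{H}_e$, and then to use the two local bounds of the previous claim in a way that is sensitive to whether or not an original edge has both its endpoints picked. Take a maximum independent set $S' \subseteq V'$ in $G'$, and set $S := S' \cap V$. Since the gadgets are vertex-disjoint and share only the endpoint vertices with $V$, we have
$$
|S'| = |S| + \sum_{e \in E} |V(\mathcal{H}_e) \cap S'|.
$$

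The next step is to split the edges of $G$ according to whether $S$ contains both endpoints. Let $E_1 := \{uv \in E : u,v \in S\}$ and $E_0 := E \setminus E_1$, so that $|E_0|+|E_1|=m$. For each $e \in E_0$ the previous claim gives $|V(\mathcal{H}_e) \cap S'| \le 9$, and for each $e \in E_1$ it gives $|V(\mathcal{H}_e) \cap S'| \le 8$. Substituting,
$$
|S'| \;\le\; |S| + 9|E_0| + 8|E_1| \;=\; |S| + 9m - |E_1|.
$$

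Now I would convert $S$ into an independent set of $G$ whose size loses at most $|E_1|$ vertices. The key observation is that the edges of the subgraph $G[S]$ are exactly $E_1$, so picking one endpoint from each edge in $E_1$ yields a vertex cover of $G[S]$ of size at most $|E_1|$. Removing this cover from $S$ produces a set $T \subseteq S$ that is independent in $G$ and satisfies $|T| \ge |S| - |E_1|$. Consequently $\alpha(G) \ge |S| - |E_1|$.

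Combining the two inequalities gives $|S'| \le \alpha(G) + 9m$, and since $S'$ was chosen to be a maximum independent set in $G'$ we conclude $\alpha(G') \le \alpha(G) + 9m$, as desired. I do not anticipate a real obstacle here: the whole proof rests on the previous claim, and the only non-bookkeeping step is the elementary fact that a graph on $k$ vertices with $\ell$ edges contains an independent set of size at least $k - \ell$, applied to $G[S]$.
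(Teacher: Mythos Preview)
Your proof is correct and follows essentially the same argument as the paper: your set $E_1$ is the paper's set $F$ of edges with both endpoints in $S'\cap V$, and both proofs remove one endpoint per such edge to pass from $S$ to an independent set of $G$, then combine with the gadget bounds $8$ and $9$ from the previous claim.
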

Let $S'\subseteq V'$ be an independent set in $G'$. Denote by  $F$ the set of edges of $G$ which have both end nodes in $S'$. We have 
$$
|S'\cap V|-|F|\le \alpha(G)
$$
Indeed, $(S'\cap V)\setminus\{x_e, e\in F\}$ is an independent set in $G$ where $x_e$ is one of the two vertices incident to an edge $e\in F$. Then,  
$$|S'\cap V|-|F|\le|(S'\cap V)\setminus\{x_e, e\in F\}|\le \alpha(G)$$

% We can easily prove by induction on the size of $S'\cap V$ that $|S'\cap V|-|F|\le \alpha(G)$. Then,
It follows that
\begin{align*}
 |S'| &= |S'\cap V|+\sum_{e\in E} |\mathcal{H}_e\cap S'| \\
    &\le (\alpha(G)+|F|)+(\sum_{e\in F} |\mathcal{H}_e\cap S'| + \sum_{e\notin F} |\mathcal{H}_e\cap S'| )\\
    &\le (\alpha(G)+|F|)+(\sum_{e\in F} 8 + \sum_{e\notin F} 9 )\\
    &= (\alpha(G)+|F|)+(\sum_{e\in E} 9 - \sum_{e\in F}1 )\\
    &=\alpha(G)+9m
\end{align*}
Since this inequality is true for any independent set $S'$, we have $\alpha(G')\le\alpha(G)+9m$.

\begin{claim}
 There exists a greedy set $S'$ in $G'$ of size $\alpha(G)+9m$.
\end{claim}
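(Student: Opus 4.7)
The plan is to construct $S'$ explicitly and then exhibit a legal greedy ordering that realises it.

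First I would fix a maximum independent set $S \subseteq V$ of $G$, so $|S| = \alpha(G)$. For each edge $e = uv \in E$, I would select inside $\mathcal{H}_e$ a set $X_e$ of exactly nine vertices: $2$ from $\mathcal{A}$, $2$ from $\mathcal{D}$, $3$ from $\mathcal{C}$, and $2$ from $\{a,b,d,g\}$. In the last group I would include $a$ whenever $u \notin S$ (so that $u$ is dominated by $S'$ through $\mathcal{H}_e$) and exclude $a$ whenever $u \in S$ (so that the edge $au$ does not violate independence); symmetrically for $v$ and $g$. Since $S$ is independent in $G$, no edge has both endpoints in $S$, so the two constraints are compatible and a size-two selection is always possible; moreover $X_e$ can be chosen so that $X_e$ (together with whichever of $u,v$ lies in $S$) is a maximal independent set of $\mathcal{H}_e$ extended with the two external edges. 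Setting $S' := S \cup \bigcup_{e \in E} X_e$ then gives a maximal independent set of $G'$ of size exactly $\alpha(G) + 9m$, and any greedy execution picking these vertices terminates precisely at $S'$.

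Next I would construct an ordering of $S'$ consistent with the greedy rule. Since $G'$ is cubic all initial degrees are three, so the first pick may be any vertex. I would process the gadgets in an arbitrary order $e_1,\dots,e_m$, picking inside each $\mathcal{H}_{e_j}$ the nine vertices of $X_{e_j}$ using a fixed internal ordering dictated by the structure of $\mathcal{H}_{e_j}$: the first pick of $X_{e_j}$ is made while the minimum degree in the current graph $G_i$ is still three, and each subsequent pick in $X_{e_j}$ is a vertex whose degree has just dropped to the new minimum thanks to removals inside the same gadget. Actions performed in one gadget only affect degrees inside that gadget and at its two external vertices $u,v$, so the internal verification may be carried out gadget-by-gadget; after all $9m$ gadget vertices have been selected, every $w \in S$ has had all three of its $G'$-neighbours deleted (one $a$ or $g$ per incident gadget), hence $w$ is isolated, and I would append the vertices of $S$ last, each at minimum degree zero.

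The hard part will be the internal greedy check inside a single gadget, namely producing the nine-vertex picking order referred to above. I would dispose of this by a finite case analysis on Figure \ref{edge}: for each of the few combinations of $a,g$ that may lie in $X_e$, fix once and for all a linear ordering of the nine vertices of $X_e$ and tabulate the degrees of the still-alive vertices of $\mathcal{H}_e$ after each pick to confirm that the next pick attains the current global minimum. Because all $m$ gadgets are isomorphic and the external degrees at $u,v$ do not rise during the internal phase, these finitely many tables settle the verification uniformly. Combining this construction with the upper bound proved in the previous claim yields $\alpha^+(G') = \alpha(G) + 9m$, which provides the reduction from \pr{MIS} on cubic planar graphs (known to be NP-hard) and establishes the theorem.
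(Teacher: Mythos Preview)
Your construction of the set $S'$ is fine, but the proposed ordering does not respect the minimum-degree rule, and this is not a detail that the promised gadget-by-gadget case analysis can absorb.

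Take any gadget $\mathcal H_{uv}$ with $u\in S$. You correctly exclude $a$ from $X_{uv}$, but some vertex of $X_{uv}$ must be adjacent to $a$ (otherwise $a$ would survive and $S'$ would fail to be maximal). The moment that neighbour is picked, $a$ is deleted and the edge $au$ disappears, so the degree of $u$ drops from $3$ to $2$. Hence as soon as you finish the \emph{first} gadget, the residual graph already contains a vertex of degree at most $2$ (namely whichever endpoint of $e_1$ lies in $S$), while every vertex of the next untouched gadget $\mathcal H_{e_2}$ still has degree $3$. Your assertion that ``the first pick of $X_{e_j}$ is made while the minimum degree in the current graph is still three'' therefore fails for every $j\ge 2$, and the greedy rule forbids you to open a fresh gadget before attending to that low-degree vertex of $S$. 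Processing the gadgets in ``arbitrary order'' with $S$ appended at the end is thus not a legal greedy execution.

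The paper avoids exactly this obstruction by interleaving rather than separating the two phases: whenever a vertex $u\in S$ currently attains minimum degree it is picked \emph{immediately}, and together with it the vertices $b$ and $g$ of each gadget incident to $u$ are added; when no vertex of $S$ has minimum degree, some $a$-vertex does, and then $a$ together with $d$ or $g$ (depending on whether the far endpoint lies in $S$) is added. Only after this interleaved sweep has settled the backbone $\{a,b,d,g\}$ of every gadget are the leftover components $\mathcal A,\mathcal C,\mathcal D$ cleaned up by an ordinary greedy run. The point is that each time a vertex of $S$ is picked, the new low-degree vertices it creates lie inside its own incident gadgets, so the next minimum-degree vertex is again one the algorithm is prepared to handle. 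Your scheme can be repaired along these lines, but that is essentially the paper's argument, not the gadget-first ordering you proposed.
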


Let $S$ be a maximum independent set in $G$. Construct the set $S'$ as follows
\begin{itemize}
 \item While there exists some unpicked nodes in $G'$ do
    \begin{enumerate}
    \item If there exists an unpicked vertex $u\in S$ with minimum degree, add $u$ to $S'$ and nodes $b$ and $g$ in all adjacent gadgets $\mathcal{H}_{uv}$ (see \hyperref[edge]{Figure \ref{edge}})
    \item Otherwise, there exists a vertex of type $a$ with minimum degree in some gadget $\mathcal{H}_{uv}$. 
    \begin{itemize}
      \item If $v\in S$, add $a$ and $d$ to $S'$ 
      \item If $v\notin S$, add $a$ and $g$ to $S'$
    \end{itemize}
    \end{enumerate}
  \item Run \algo{Greedy} on the remaining connected components $\mathcal{A},\mathcal{D}$ and $\mathcal{C}$ of graphs $\mathcal{H}_e$ which have not been picked yet.
\end{itemize}
At the end, we have $S'\cap V=S$ and $|S'\cap V(\mathcal{H}_e)|=9$ for all $e$ in $E$. Then the greedy set $S'$ has the desired size.

Therefore, $\alpha^+(G')=\alpha(G')=\alpha(G)+9m$ and then for any integer $k$ we have
$$
\alpha^+(G')\ge k+9m \text{\hspace{5pt} if and only if \hspace{5pt}} \alpha(G) \ge k.
$$%Since $G'$ has size $|V|+22m$ and can be computed in polynomial time, the hardness of Maximum Independent Set Problem in cubic planar graph%
\end{proof}

%\piotr{It seems to me that there is a chance that this proof can give us an APX-hardness result as well. That is because both $\alpha(G)$ and $m$ are linearly related to $n$ on cubic graphs. But, on the other hand, this should not be possible because we know that there is a PTAS for MIS on planar graphs(?)}
%\nan{I think it should be possible, because PTAS algorithm is not greedy algorithm. What this proof show is that for any input cubic planar graph $G$, we can construct a new graph $G'$ which satisfies that if we can find the maximum greedy solution of $G'$, then we can use this information to find maximum independent set for $G$. Thus, it is possible, we do the same thing, but not find the maximum greedy solution, but approximate one, also enough to get maximum independent set or approximate for $G$.}

\subsection{Hardness of approximation}

For any class of graphs, one can find worst case examples for any greedy algorithm. In the following, we will call such examples \emph{hard graphs}. For such graphs, there is an unique greedy set, meaning that at any stage of the algorithm the choice of the minimum degree vertex is unique, and the ratio of its size to that of the maximum independent set is minimized. For a given class of graphs, we will call this ratio an \emph{ultimate lower bound}, and it shows the limitation of our initial greedy rule. However, since our original motivation was to design additional advices for \Greedy, in order to measure the difficulty of designing such advices, we need to compare ourselves to the maximum greedy set in the given graph. And for these examples, since the choice is unique, \Greedy is optimal when we compare to the maximum size of a greedy set.
In what follows, we show that \pr{MaxGreedy} is hard to approximate in different classes of graphs, within to an inaproximability factor that matches the ultimate lower bound. % Notice that when these ultimate lower bounds match with the approximation ratio of \Greedy (or one advised greedy algorithm), the lower bound on the approximability of \pr{MaxGreedy} is necessarily at most 
Notice that since the size of a maximum greedy set is upper bounded by the size of a maximum independent set, a lower bound on the approximability of \pr{MaxGreedy} is necessarily smaller than the best approximation ratio achieved by one particular advised greedy algorithm. This suggests that our inapproximability results are (almost) tight. 

A way of proving these inapproximability lower bounds is the following. Given a class of graphs, we first find a hard graph $B$, where the greedy set is unique. In particular, there exists exactly one minimum degree vertex $r$ in $B$, that we call the \emph{root} of the hard graph. We ask additionally these hard graphs to have the following property. If the root is removed from $B$, then there exists an advised greedy algorithm, that proceeds in polynomial time --- for our case, this is simply \Greedy --- that outputs a maximum independent set: $\Greedy(B\setminus r)=\alpha(B)$. 

Now, we add an \emph{anchor graph} $H$ connected to $B$ by the root vertex $r$. Intuitively, the sub-graph encodes an instance of an NP-hard problem, in such way that there exists a greedy set that does not contain $r$ if and only if the instance is positive. %In particular, when the instance is positive, then there  

When the size of $H$ is arbitrarily small compared to $\alpha(B)$, then the gap introduced will be arbitrarily close to the approximation ratio of \Greedy in $B$, \ie the ultimate lower bound for the class of graphs considered.

Precisely, let $\varphi$ be a formula of \pr{SAT}, with $n$ variables and $m$ clauses. Without loss of generality, we can assume that :
\begin{enumerate}[(1)]
    \item Each clause contains two or three literals.
    \item Each positive (\resp negative) literal appears in exactly two (\resp one) clauses.
\end{enumerate}

\begin{figure}
    \centering
    \includegraphics[width=14cm]{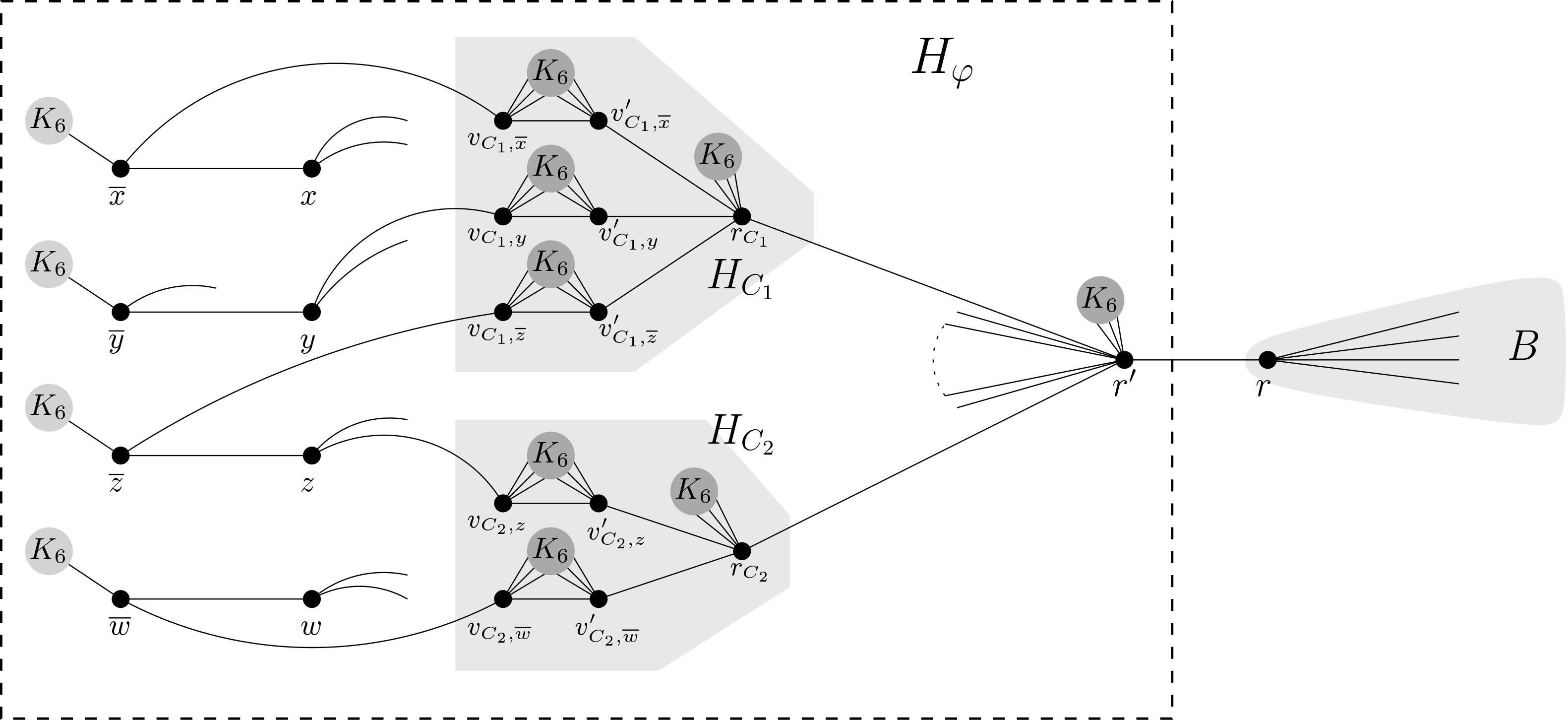}
    \caption{A example of the construction of $H_\varphi$, where $\varphi$ contains clauses $C_1=\ol{x}y\ol{z}$ and $C_2=z\ol{w}$. Cliques $K_6$ ensure that vertices of this graph are executed by \Greedy in the right order.}
    \label{fig:Hphi}
\end{figure}

Indeed, given a \pr{3-SAT} formula, if a variable $y$ occurs $k$ times, then we replace each occurrence by new variables $y_1,\dots, y_k$, and add the new clauses $y_1 \Rightarrow y_2, \dots, y_k \Rightarrow y_1$. Now, (1) is satisfied and each variable appears exactly three times. Finally, if a variable does not satisfy (2), simply exchange its positive and negative literal. This new formula has polynomial size in $n,m$ and is satisfiable if and only if the original formula is satisfiable.

Now we build the anchor graph $H_\varphi$ as follows. First create a vertex $r'$. Create two adjacent \emph{literal vertices} $x$ and $\ol{x}$, for each variable $x$ and create a \emph{gadget} $H_C$ (see Figure \ref{fig:Hphi}) for each clause $C$. In particular, this gadget has one vertex $v_{C,\ell}$ for each literal $\ell$ in $C$ and vertex $r_C$. Connect $v_{C,\ell}$ to the corresponding literal vertex $\ell$ and each $r_C$ to $r$.  Finally, we increase the degree of negative literal vertices by one, so that each literal vertex has now exactly degree three. 
Let $G$ be the graph obtained by connecting $H_\phi$ and $B$ with their respective vertices $r'$ and $r$. See Figure \ref{fig:Hphi}.

Any execution of \Greedy in $G$, resulting in a greedy set $S$ at the end, consists of three phases. 
During phase 1, the minimum degree vertices are the literal vertices, with degree three. \Greedy has to decide either to pick $x$ or $\ol{x}$, for each literal $x$. This is exactly choosing a valuation $\nu : V(\varphi)\rightarrow \{false,true\}^n$ for each variable, such that $x$ is in $S$ if and only if $\nu(x)=true$. 
During the second phase, 
%the minimum degree vertices have degree four and are the vertices in the gadget. If a clause $C$ is satisy
all remaining vertices in $H$ are removed, such that at the end, $r'$ is in $S$ if and only if $\varphi$ is true for $\nu$. To see this, notice that $v_{C,\ell}$ or $v'_{C,\ell}$ have minimum degree four, and the clause $C$ is not satisfied by $\nu$ if and only if for all literals $\ell$ in $C$, the corresponding vertices $v_{C,\ell}$ are still present in the graph. This implies picking all these vertices, and later the vertex $r_C$ in $S$, so that $r'\notin S$.
Finally, the last phase consists in executing the greedy algorithm in $B$: if the formula is not satisfied then, the root $r$ is picked and the number of vertices picked during this phase is $\Greedy(B)$, and otherwise $\alpha(B)$ by the assumption.

%Suppose that the size of $B$ is chosen such 

To summarize, we built a graph $G$ such that
\begin{itemize}
    \item If $\varphi$ is satisfiable then, there exists a greedy set of size at least $\alpha(B)$.
    \item otherwise, if $\varphi$ is not satisfiable then, any greedy set has size at most $\vert V(H_\varphi)\vert + \Greedy(B)$,
\end{itemize}
where $\vert V(H_\varphi)\vert=\Theta(m+n)$. 
Then, when the hard graph has arbitrarily large size, one can introduce a gap arbitrarily close to the ratio $\alpha(B)/\Greedy(B)$. 
With hard graphs in Figure \ref{fig:hardgeneral} (general graphs), Figure \ref{fig:hardbipartite} (bipartite graphs) and the hard graphs of bounded degree presented in Theorem \ref{theo:deltaapx}, we obtain the following results.

\begin{figure}
 \begin{minipage}[b]{.36\linewidth}
 \centering
  \includegraphics[width=7cm]{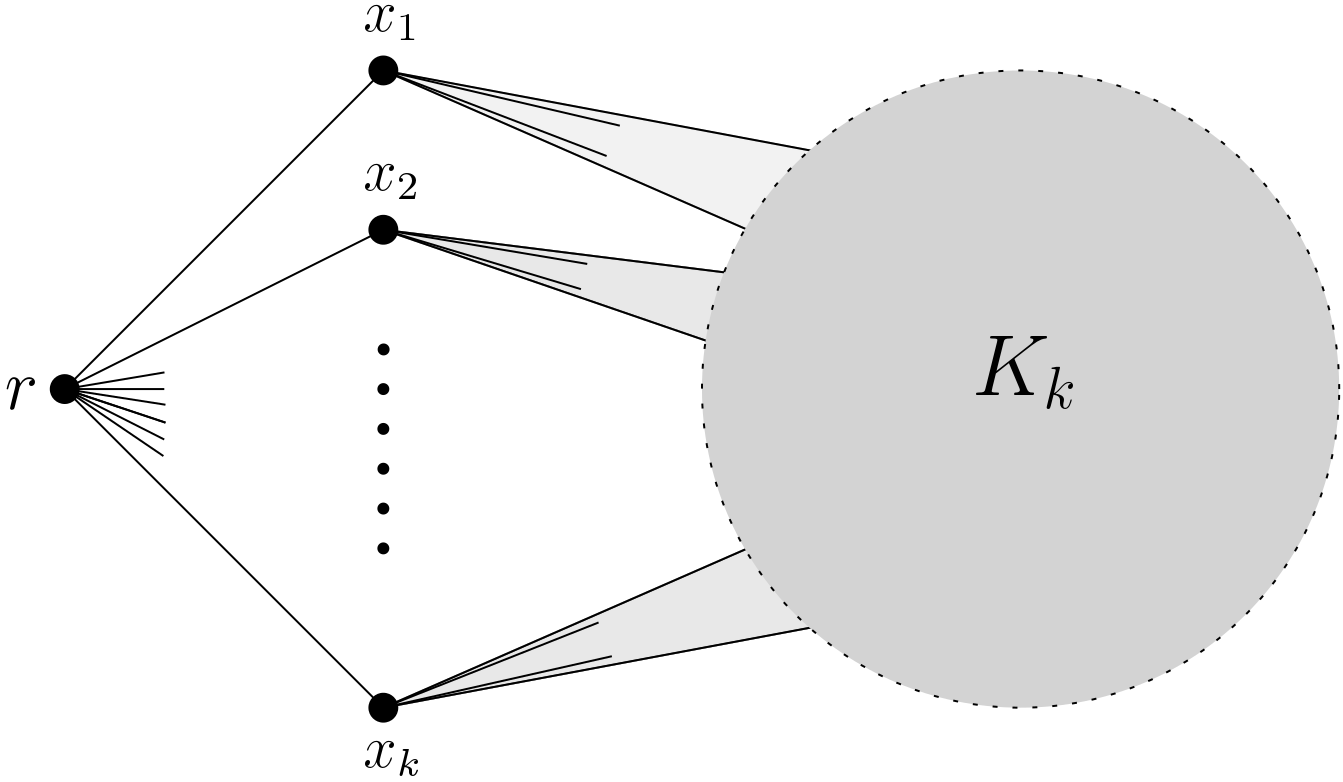}%,width=\linewidth}
  \caption{hard general graph. \Greedy picks $r$ and one vertex from the clique, while the maximum independent is $\{x_1,\dots,x_k\}$. The ratio is $\frac{k}{2}= \frac{n-1}{4}$ \label{fig:hardgeneral}}
 \end{minipage} \hfill
 \begin{minipage}[b]{.54\linewidth}
  \centering
  \includegraphics[width=9cm]{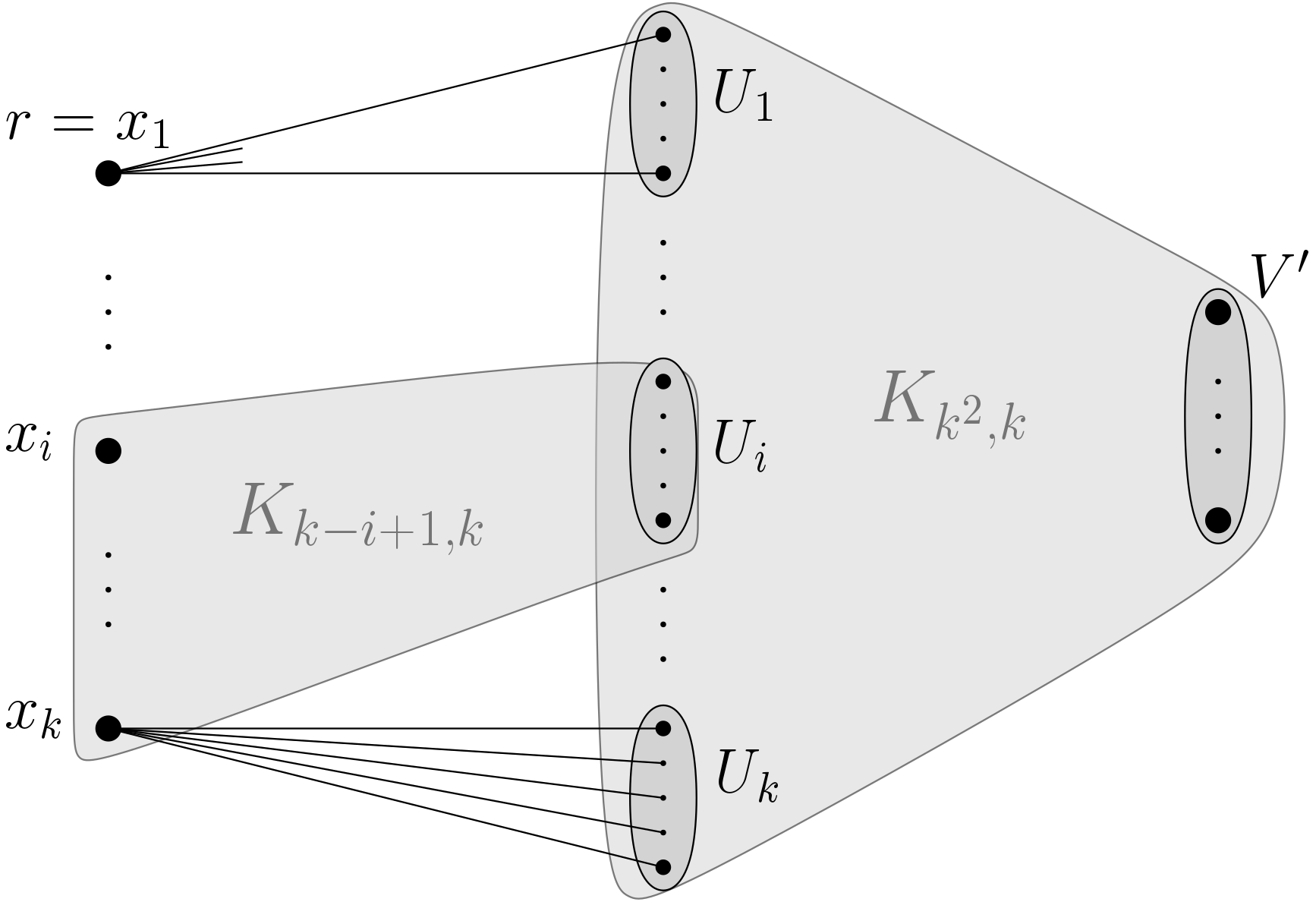}%,width=\linewidth}
  \caption{hard bipartite graph. $|U_1|=\dots=|U_n|=|V'|=k$. Any vertex is $U_i$ is adjacent to $V'\cup \{x_i,\dots,x_n\}$. 
\Greedy picks $V'\cup\{x_1,\dots,x_n\}$ while the maximum independent set is $\bigcup_i U_i$. The gap is $k/2=\Theta(\sqrt{n})$.
  \label{fig:hardbipartite}}
 \end{minipage}
\end{figure}

\begin{theo}
\begin{itemize}
\item For general graphs, \pr{MaxGreedy} is hard to approximate within a factor of $n^{1-\vep}$, for any constant $\vep > 0$, assuming $P\neq NP$.
\item For general graphs, \pr{MaxGreedy} is hard to approximate within a factor of $n/\log n$, assuming the exponential time hypothesis. 
\item For graphs with maximum degree $\Delta\ge 7$, \pr{MaxGreedy} is hard to approximate within a factor of $(\Delta+1)/3 - \calO(1/\Delta)- \calO(1/n)$, assuming $P\neq NP$.
\item For bipartite graphs, \pr{MaxGreedy} is hard to approximate within a factor of $n^{1/2-\vep}$, for any constant $\vep > 0$, assuming $P\neq NP$.
\end{itemize}
\end{theo}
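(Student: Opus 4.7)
The plan is to apply the generic reduction described in the paragraphs preceding the statement. Given a 3-SAT instance $\varphi$ in the normal form stipulated above, of total size $N=n+m$, I build the anchor graph $H_\varphi$ and attach it through its root to a hard graph $B$ selected from the target class. The two-sided guarantee recalled just before the theorem shows that any greedy set of the resulting graph $G$ has size at least $\alpha(B)$ if $\varphi$ is satisfiable and at most $|V(H_\varphi)|+\Greedy(B)=\Theta(N)+\Greedy(B)$ otherwise. Since $|V(H_\varphi)|=\Theta(N)$, it suffices to scale $|V(B)|$ so that the ratio $\alpha(B)/\Greedy(B)$, read as a function of $|V(G)|$, matches the inapproximability factor in each bullet.

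For item 1 I take $B$ to be the pendant-plus-clique graph of Figure \ref{fig:hardgeneral}, for which $\alpha(B)=k=\Theta(|V(B)|)$ while $\Greedy(B)=2$. Choosing $|V(B)|=N^{c}$ for a constant $c\geq 1/\vep+1$ gives $|V(G)|=\Theta(N^{c})$; the satisfiable side has size at least $\Theta(N^{c})$ and the unsatisfiable side has size $\Theta(N)$, producing a gap of $\Theta(|V(G)|^{(c-1)/c})\geq |V(G)|^{1-\vep}$. The ETH bullet follows from the same reduction by exploiting that ETH together with sparsification rules out $2^{o(N)}$-time 3-SAT algorithms even for linear-size formulas: with $|V(B)|$ a carefully chosen polynomial in $N$, a polynomial-time $|V(G)|/\log|V(G)|$-approximation would decide 3-SAT in time $N^{\calO(1)}$ on linear-size inputs, contradicting ETH.

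For item 3 I take $B$ to be the bounded-degree chain of Theorem \ref{theo:deltaapx}, which uses $\Theta(\Delta^{3})$ vertices and realizes $\alpha(B)/\Greedy(B)=(\Delta+1)/3-\calO(1/\Delta)$. The anchor $H_\varphi$ must also be $\Delta$-bounded: the literal vertices have degree three and the clause/root vertices have degree at most six, while each vertex of the $K_6$ cliques enforcing the greedy order has degree five internally plus at most two external attachments, for total degree at most seven, which fits the hypothesis $\Delta\geq 7$. Scaling $|V(B)|=\Theta(\Delta^{3}\cdot N)$, the additive $|V(H_\varphi)|=\Theta(N)$ term translates into the $-\calO(1/n)$ correction in the stated bound.

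For item 4 I take $B$ to be the bipartite graph of Figure \ref{fig:hardbipartite}, for which $\alpha(B)/\Greedy(B)=k/2=\Theta(\sqrt{|V(B)|})$; the sizing argument of item 1 then produces a gap of $|V(G)|^{1/2-\vep}$. The main obstacle here is that the anchor $H_\varphi$ of Figure \ref{fig:Hphi} is not bipartite: it contains the edges $x\ol{x}$ between the two literal copies and the $K_6$ subgraphs used to force greedy's execution order. My plan is to design a bipartite replacement by substituting every $K_6$ by a suitably padded bipartite biclique with pendant attachments that still force greedy to process the inner vertices after the literal and clause vertices but before the root, and by replacing every edge $x\ol{x}$ by an even cycle sharing common pendants so that greedy still has to commit to one of $\{x,\ol{x}\}$ at its first step. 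Verifying that this modified anchor still faithfully encodes satisfiability of $\varphi$, namely that $\varphi$ is satisfiable if and only if a greedy set avoiding the root exists, is the bulk of the remaining argument; the final gap computation is then identical to items 1 and 2.
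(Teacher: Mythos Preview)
Your overall plan matches the paper's, and the first bullet is handled correctly. However, three of the four bullets have genuine gaps.

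\textbf{ETH bullet.} A polynomial-size hard graph $B$ cannot yield an $n/\log n$ gap. If $|V(B)|=N^{c}$, then $|V(G)|=\Theta(N^{c})$, the YES size is $\Theta(N^{c})$, the NO size is $\Theta(N)$, and the gap is only $\Theta(N^{c-1})=\Theta(|V(G)|^{1-1/c})$, which is dominated by $|V(G)|/\log|V(G)|$ for every fixed $c$. The paper instead takes $|V(B)|=2^{\epsilon'|\varphi|}$ for a small enough constant $\epsilon'>0$; then $|V(G)|=2^{\Theta(N)}$, the gap becomes $\Theta(|V(G)|/\log|V(G)|)$, and a polynomial-time (in $|V(G)|$) approximation within this factor would solve 3-SAT in time $2^{O(\epsilon' N)}$, contradicting ETH once $\epsilon'$ is below the ETH constant.

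\textbf{Bounded-degree bullet.} You overlooked that in $H_\varphi$ the vertex $r'$ is adjacent to every clause vertex $r_C$, so $\deg(r')\geq m$, which is unbounded. The paper fixes this by replacing the star from $r'$ to the $r_C$'s by a bounded-degree tree with a carefully chosen parity of distances (so that if some $r_C$ is picked, the alternation along the tree forces $r'\notin S$), padding the tree with small cliques so that its vertices still have degree at least $5$ and are processed after the literal layer. Without this modification, the construction does not live in the degree-$\Delta$ class at all.

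\textbf{Bipartite bullet.} Replacing the $K_6$'s by bicliques is exactly what the paper does (it uses $K_{6,6}$), but your handling of the edges $x\ol{x}$ via ``even cycles sharing common pendants'' is too vague and does not address the remaining source of odd cycles: edges from literal vertices to the clause gadgets can themselves create odd cycles whenever a clause mixes positive and negative literals. The paper sidesteps this by reducing from \textsc{Monotone} 3-SAT-4, where every clause is all-positive or all-negative; then literal vertices of one sign and the corresponding gadget vertices can be consistently two-coloured. Your proposed local surgery on $x\ol{x}$ does not eliminate those odd cycles.
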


Notice that $n$ refers here to the size of the graph considered. We give below additional details for each specific class of graphs.

\paragraph{General graphs.} Consider the hard graph $B$ given by Figure \ref{fig:hardgeneral}. For any $\vep >0$, choose its size $n$ such that $\vert \varphi \vert < n^\vep$. The size of $G$ is a polynomial in $n$. This implies that \pr{MaxGreedy} is hard to approximate to within $n^{1-\vep}$. Interestingly, Håstad proved that \pr{MIS} is also hard to approximate to within the same factor \cite{hastad1999}.%, \mathieu{say smthg interesting}

Further, making the \emph{exponential time} hypothesis  that there exists a constant $\epsilon>0$ such that $3$-SAT cannot be solved in time $2^{\epsilon n}$ implies that \pr{MaxGreedy} is hard to approximate to within a factor of $n/\log n$. To see this, one may use the same construction  with a bad graph of size $2^{\epsilon' |\varphi|}$ for a suitable value $\epsilon'>0$. In particular this suggests that a greedy algorithm will not do as well as Feige’s algorithm that achieves an $O(n/ (\log n)^3)$ approximation \cite{Feige04}.

\paragraph{Bounded degree graphs.} For graph with maximum degree $\Delta$, we use hard graphs described in the proof of Theorem \ref{theo:deltaapx}, see Figures \ref{fig:lowerboundanygreedy} and \ref{fig:delta1}. Recall that the structure of these graphs depends on the value $\Delta\mod 3$. As defined before, these hard graphs have several vertices of minimum degree, forming a clique $K$. Then, add a vertex $r''$, adjacent to all vertices in $K$ and a root vertex $r$ adjacent to $r''$. Now, these graphs have an unique root and the size of the greedy set and the maximum independent set only grew up by an additive constant factor. %We need to assume $\Delta\ge 7$ in order to guarantee that \Greedy will first pick vertices in $H_\varphi$ and not some minimum vertices in $B$. 
Since $\Delta\ge 7$, all vertices in $B$ have degree at least $5$ so that \Greedy will execute first all vertices in the anchor graph.%$H_\varphi$, before $B$.

All vertices in $G$ have degree at most $\Delta$, except the root $r'$ that has degree equal to a least the number of clauses. To overcome this issue, we can replace edges $(r_C,r)$ by a tree of bounded degree, rooted in $r'$, such that the distance from and $r_C$ to $r$ is odd, and such that vertices at odd distance from $r'$ have smaller degree than the ones at even distance. We can finally add several cliques to increase the degree of vertices of this tree so that, they all have degree at least five. If at some point a vertex $r_C$ is picked by \Greedy --- meaning that the formula is not satisfied --- then all vertices at odd distance from $r'$ on the path from $r_C$ to $r$ will be picked in $S$, so that $r'\notin S$.

\paragraph{Bipartite graphs.} 
Figure \ref{fig:hardbipartite} presents an example of an hard bipartite graph, with a ratio $\Theta(\sqrt{n})$. When the root $r$ is removed from $B$, then the vertices in $U_1$ have minimum degree $2k-1$ while all vertices $x_1$ have degree at least $2k$. Therefore, in this situation the greedy set output is the union of all $U_i$, that is a maximum independent set in $B$.

To make the anchor graph bipartite, we first change the structure of gadgets $K_6$ used to increases the degree of some vertices. For instance, a simple bipartite complete graph $K_{6,6}$ fulfils exactly the same function. Then, one might still find some odd-cycles left due to edges between literal vertices and gadgets. These can be avoided using a NP-hard version of the satisfaction problem called \pr{Monotone} 3-SAT-4, where each variable appears four times and additionally that containing three literals, each clause contains only positive, or only negative literals. See \cite{monotone} for more details.

Since an optimal independent set can be computed in polynomial time in a bipartite graph, \Greedy is not a good algorithm for this class. However, this negative result suggests that even knowing a maximum independent set may not be helpful in order to design good greedy advises.

\ignore{
\section{Linear time}
%\nan{the below claim is not formal yet}
\begin{claim}
    for any connect components $C$, if the first reduction which will be executed is odd-backbone, then the potential value of larger size of greedy solution among two sequence of reductions correspond to two choice is at least $0$.   
\end{claim}

%\piotr{Let me try to formulate this claim more precisely:}
%\nan{I think the above claim if in the final is ''Then if $s_r \geq s_{r'}$ then $\Phi(\calE) \geq \phi(\calE')$, and otherwise, if $s_r < s_{r'}$ then $\Phi(\calE') \geq \Phi(\calE)$'' is true for any reduction $\overline{R_1}$ and for any $\Delta$}.

%\piotr{But, then if the first reduction is not (2,5), how do we define the two executions sequences $\calE$ and $\calE'$ ?}

%\nan{answered}
}

\ignore{

\begin{lem}
For any cycle $\{(q_1,q_2),\cdots,(q_k,q_1)\}$ in extended graph, $L =\{Q,E\}$, if there exists $e$, whose label is $1$, and for all other $e' \in E$, whose label is $0$, and all $q \in V$ are of same type, i.e. $\forall q \in OPT$, or $\forall q \not\in OPT$, then there exists a cycle-reduction.
\end{lem}
\begin{proof}

 Let $e$ be the edge in extended graph whose label is $0$, assume $q$ is arbitrary node of edge $e$. When Greedy executes, the vertex $v$ in original graph which adjacent to $v'$ which is mapping vertex of $q$ is taken, and $v'$ is removed and correspondingly $q$ is removed. Because label $0$ edge has odd number of edges, and after the execution of $v$, two edges are removed, thus, the number of remaining edge in original graph of $e$ is still odd. Therefore, the execution of sequent single edge reductions will finally remove another node of $e$, and the removal of this node will cause that for another edge $e'$, one of its node will be removed, this process will repeatedly continue until it remove the one of node from label $1$ edge.

Because this process will continue in two direction, finally, for cycle $C$, both nodes of label $1$ edge will be removed, then the cycle will removed entirely. This prove that there exists a cycle-reduction.
\end{proof}

\nan{below is the argument for O(n), should make it be cohesive to current terminology}

The different here is only remove even-backbone reduction from white/black reductions, instead of put cycle reduction in it. Noting loop reduction is one special case of cycle reduction.

}

\ignore{
\begin{figure}
    \centering
    \includegraphics[width=0.8\textwidth]{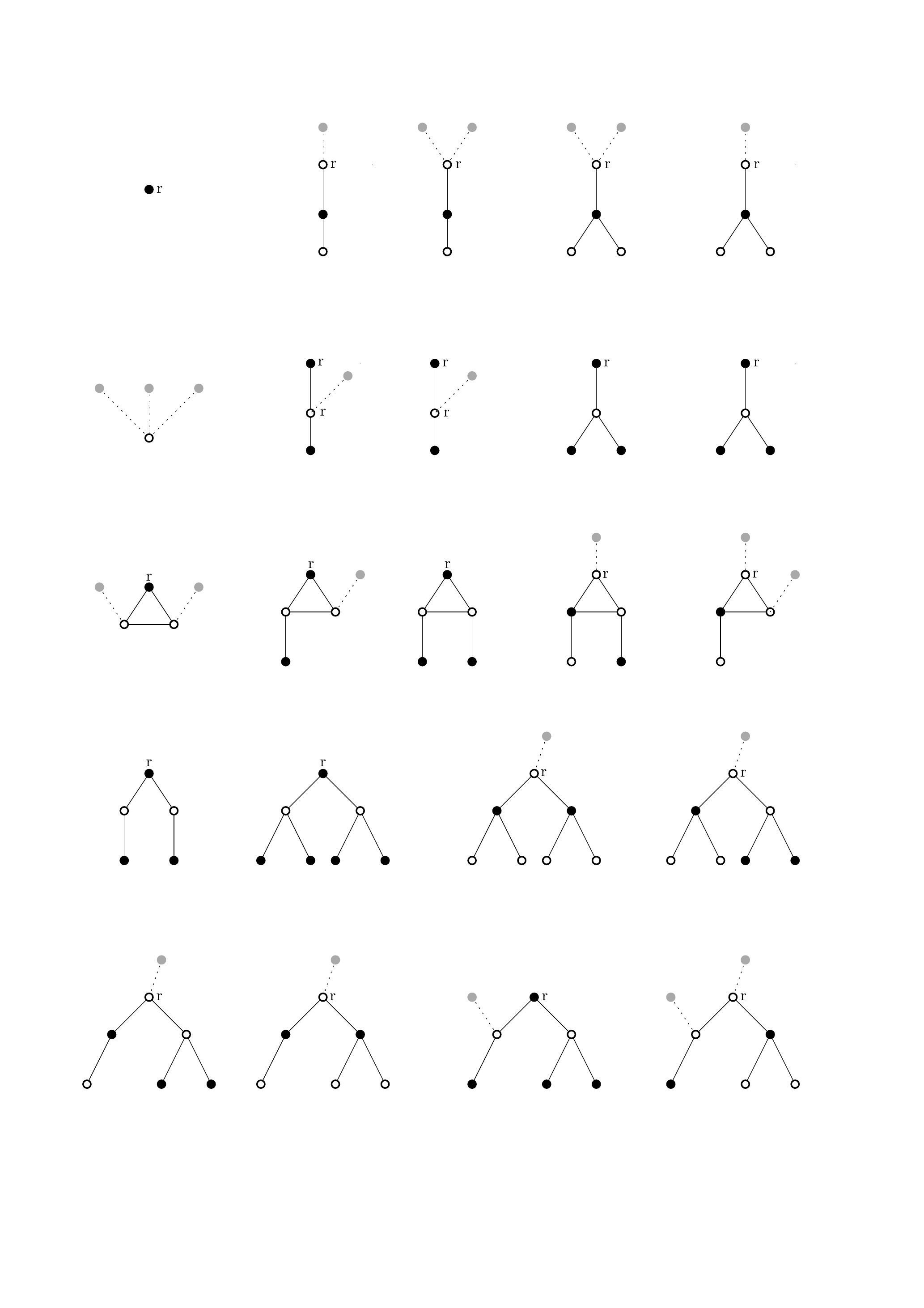}
    \caption{Basic Reductions}
    \label{fig:basic-reductions}
\end{figure}

\begin{figure}
    \centering
    \includegraphics[width=0.5\textwidth]{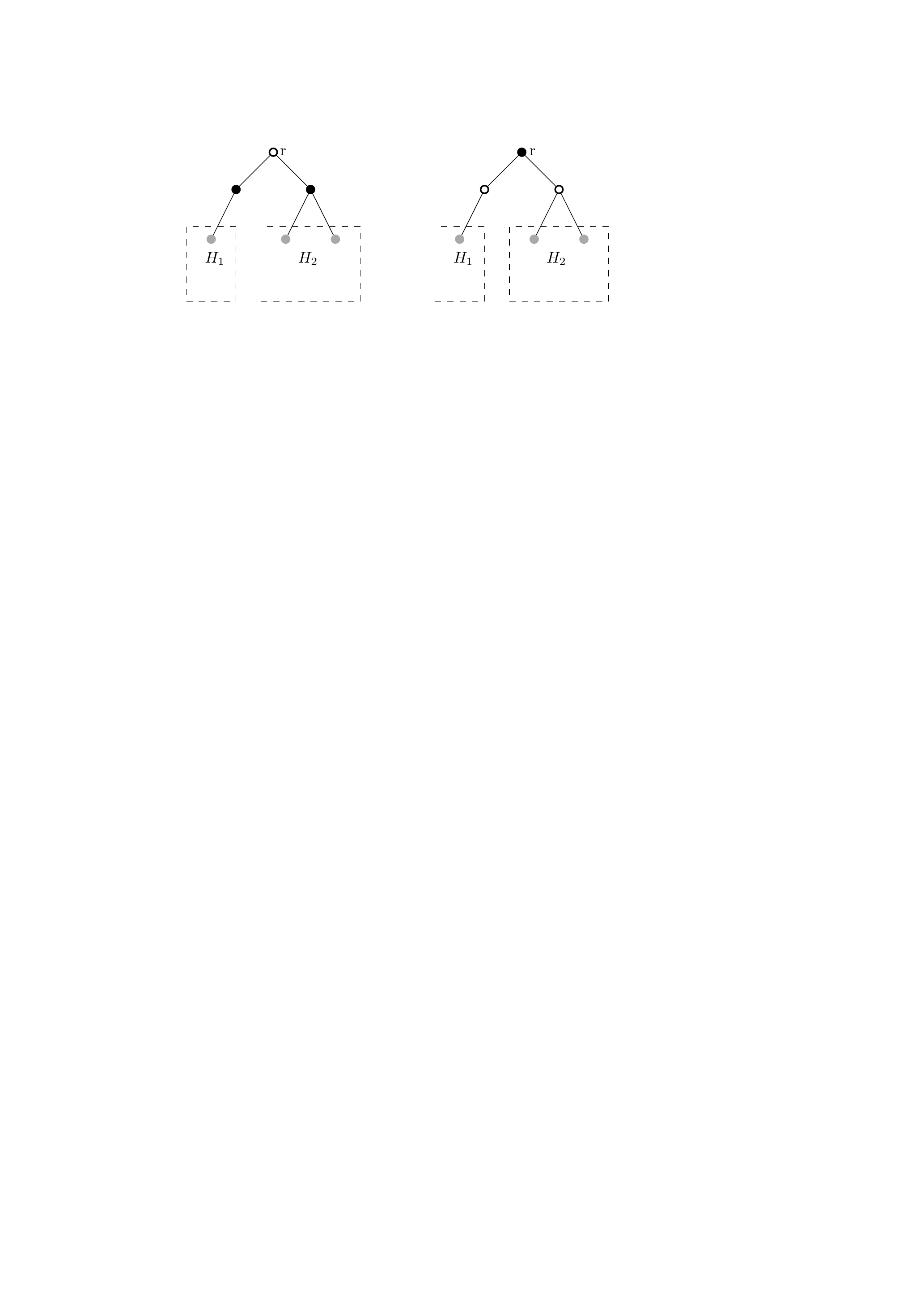}
    \caption{some Reductions}
    \label{fig:some-reductions}
\end{figure}

\begin{figure}
    \centering
    \includegraphics[width=0.4\textwidth]{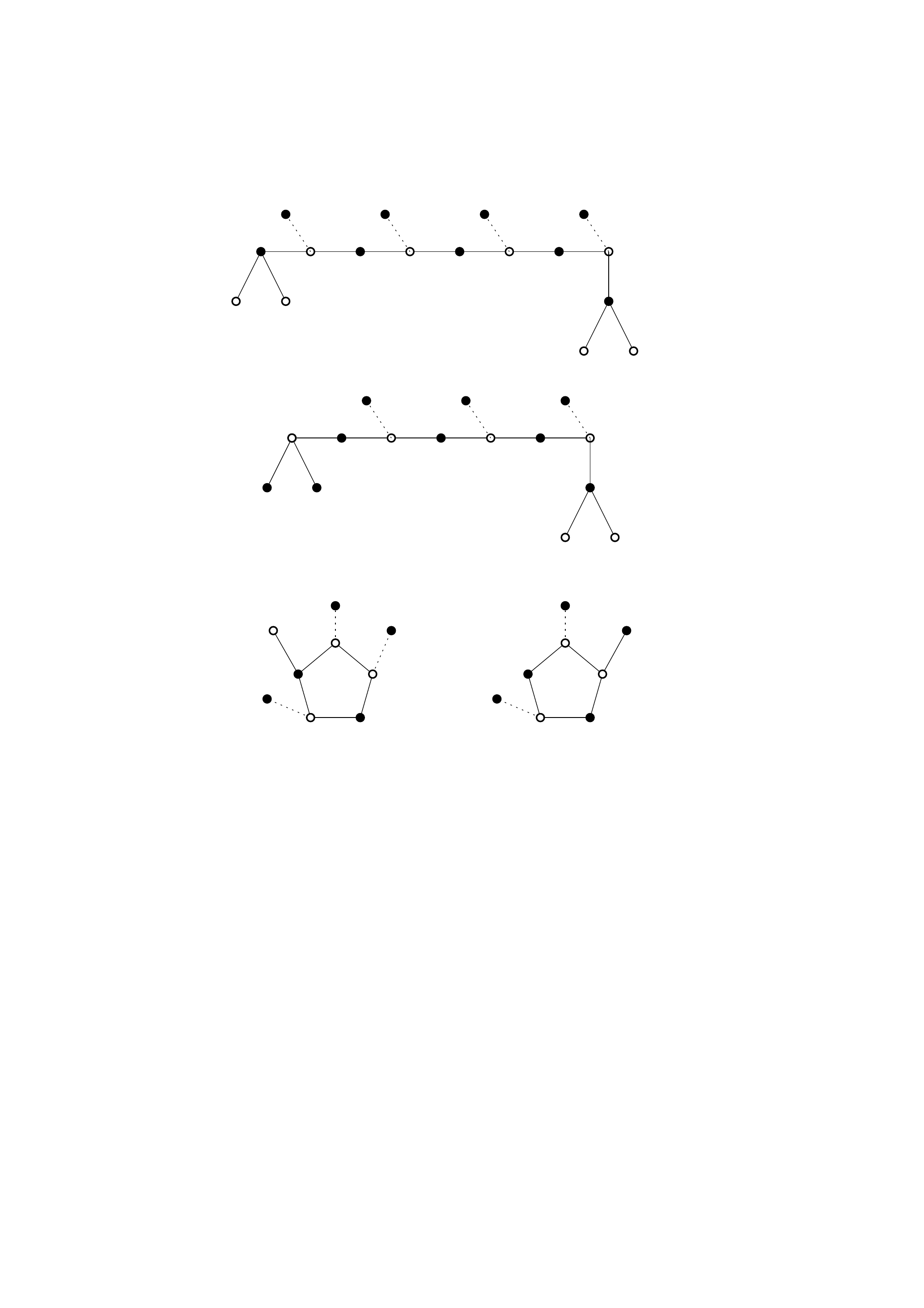}
    \caption{some Reductions}
    \label{fig:some-reductions}
\end{figure}

}

\section{Conclusions, future steps and open problems}

Our main technical contribution is a non-local payment scheme together with an inductive argument that can be embedded with greedy-style algorithms for MIS on bounded degree graphs. These techniques imply best possible approximation guarantees of greedy on subcubic graphs. We have also shown versatility of these techniques by proving (via simple proofs) that they imply close to best possible greedy guarantees on graphs with maximum degree $\Delta$ for any $\Delta$. Furthermore, they also imply improved fast approximation algorithms for the minimum vertex cover problem on bounded degree graphs. We have complemented these results by hardness results, showing that it is hard to compute good advice for the greedy MIS algorithms.

Our techniques have a potential to give further insights into the design of fast non-greedy algorithms that go past the greedy ``barrier'' in terms of approximation factors. Namely, a non-greedy algorithm can in certain situations choose degree-$3$ black vertex adjacent to the top white degree-$2$ vertex in the $5/4$-lower bound instances $H_i$ in Fig.~\ref{fig:lowerbound} with $H_0 = H_0'$. Observe that after such a non-greedy reduction, the algorithm can then follow greedy rules to finally compute an optimal solution on this instance. We have found a way of implementing such ``super-advice'' in $O(n^2)$ time. Moreover, our potential function and the inductive argument can be adapted to analyze the approximation guarantee of this method. This breaks through the $5/4$ lower bound of greedy and could even lead to approximations close to $6/5$. Recall, that $6/5$ is essentially the best currently known polynomial time approximation for subcubic MIS, achieved by local search algorithms which however have exorbitant running times. ``Super-advice'' could also be used to design non-greedy algorithms that go beyond the proved lower bound of $(\Delta + 1)/3$ for greedy on $\Delta$-bounded degree graphs.

Moreover, our techniques have a potential for further generalizations and applications. Our potential function and the inductive argument are quite general and they could be applied to other related problems on bounded degree graphs. Such general problem should have the following features: given a graph, the optimal solution should be ubiquitously ``distributed'' over the input graph, and therefore also a feasible solution should be computable sequentially/locally by ``choosing parts of the graph'', debt and loan should be definable on such a problem as problem specific, depending on the problem's constraints. Possible candidates are, for instance, the set packing and set covering problems with sets of bounded size and bounded element occurrences.

Finally, we also mention some more specific directions for further study. Can we obtain a greedy advise to design a $(\Delta+1)/3$-approximation for any value of $\Delta$? We think that it should be possible with our techniques, by a careful and refined analysis. For instance, we can already easily prove an $9/5$-approximation when $\Delta = 4$, and a $13/6$-approximation when $\Delta = 5$. This already improves on the previous ratios that follow from the known bound of $(\Delta+2)/3$ on any greedy: $2$ for $\Delta = 4$, and $7/3$ for $\Delta = 5$.

\begin{comment}
\section{Conclusions and open problems}

We have presented a complete theory for the design of advice for greedy algorithms for the MIS problem on bounded degree graphs and for the analysis of their approximation guarantees. We apply this theory to degree-$\Delta$ graphs and to graphs with small constant degrees. In each case we obtain proofs of the best known approximation guarantees for those greedy algorithms or improve significantly their ratios. For instance we present an ultimate $5/4$-approximation greedy algorithms for subcubic graphs. We complement these upper bounds results by a comprehensive study of lower bounds or computational hardness of the problem of designing good advice for greedy. 

Many possible directions remain for further study. Can we obtain a greedy advise to design a $(\Delta+1)/3$-approximation for any value of $\Delta$? We think that it should be possible with our techniques, by a careful and refined analysis. For instance, we can already easily prove an $9/5$-approximation when $\Delta = 4$, and a $13/6$-approximation when $\Delta = 5$. This already improves on the previous ratios that follow from the known bound of $(\Delta+2)/3$ on any greedy: $2$ for $\Delta = 4$, and $7/3$ for $\Delta = 5$.
\end{comment}

\begin{comment}
{\bf Envisioned future applications:} Outline how our techniques could give even better approximations on bounded degree graphs (maybe mention extended potential, super-advice, etc.) \\
\end{comment}

\bigskip

\noindent
{\bf Acknowledgment.} We would like to thank Magn{\'{u}}s Halld{\'o}rsson for his help with verifying our example and for his explanations to the results about greedy algorithms.

\bibliographystyle{siam}
%\bibliography{plain}
\bibliography{main}

\newpage

%%%%%%%%%%%%%%%%%%%%%%%%%%%%%%%%%%%%%%%%%%%%%%%%%%%%%%%%%%%%%%%%%%%%%
%%%%%%%%%%%%%%%%%%%%%%%%%%%%%%%%%%%%%%%%%%%%%%%%%%%%%%%%%%%%%%%%%%%%%

\end{document}